\newtheorem{question}[theorem]{Question}
\newcommand{\defparproblem}[4]{
	\vspace{3mm}
	\noindent\fbox{
		\begin{minipage}{0.96\linewidth}
			\begin{tabular*}{\linewidth}{@{\extracolsep{\fill}}lr} \textsc{#1} & {\bf{Parameter:}} #3 \\ \end{tabular*}
			{\bf{Input:}} #2 \\
			{\bf{Question:}} #4
		\end{minipage}
	}
	\vspace{2mm}
}
\newcommand{\br}[1]{\left(#1\right)}
\newcommand{\Oh}{\mathcal{O}}
\newcommand{\poly}{\ensuremath{\text{poly}}}
\newcommand{\sub}{\subseteq}
\newcommand{\sm}{\setminus}
\newcommand{\gr}{\mathrm{Graver}}
\newcommand{\conp}{{NP} $\not\sub$ {coNP/poly}\xspace}
\newcommand{\conpunless}{{NP} $\sub$ {coNP/poly}\xspace}
\newcommand{\zz}{\ensuremath{\mathbb{Z}}}
\newcommand{\nn}{\ensuremath{\mathbb{N}}}
\newcommand{\fcal}{\ensuremath{\mathcal{F}}}
\newcommand{\sslog}{{\sc Subset Sum[$\log t$]}\xspace}
\newcommand{\ilp}{{\sc 0-1 ILP[$m$]}\xspace}
\newcommand{\monoilp}{{\sc Monotone 0-1 ILP[$m$]}\xspace}
\newcommand{\zeroilp}{{\sc 0-Sum 0-1 ILP[$m$]}\xspace}
\newcommand{\vass}{{\sc 0-1 Counter Machine[$\ell$]}\xspace}
\newcommand{\andsat}{{\sc AND-3SAT[$k$]}\xspace}
\newcommand{\bin}{\ensuremath{\mathsf{bin}}}
\newcommand{\intro}{\ensuremath{\mathsf{introduce}}}
\newcommand{\forget}{\ensuremath{\mathsf{forget}}}
\newcommand{\edge}{\ensuremath{\mathsf{edge}}}
\title{Does Subset Sum Admit Short Proofs?}
\author{Michał Włodarczyk}{University of Warsaw}{michal.wloda@gmail.com}{https://orcid.org/0000-0003-0968-8414}{Supported by Polish National Science Centre SONATA-19 grant number 2023/51/D/ST6/00155.}
\authorrunning{M. Włodarczyk} 
\keywords{subset sum, nondeterminism, fixed-parameter tractability} 
\begin{document}

\maketitle

\begin{abstract}
    We investigate the question whether \textsc{Subset Sum} can be solved by a polynomial-time algorithm with access to a certificate of length $\mathrm{poly}(k)$ where $k$ is the maximal number of bits in an input number. 
    In other words, can it be solved using only few nondeterministic bits?

    
    This question has motivated us to initiate a systematic study of certification complexity of parameterized problems.  
    Apart from \textsc{Subset Sum}, we examine problems related to integer linear programming, scheduling, and group theory.
    We reveal an equivalence class of problems sharing the same hardness with respect to having a polynomial certificate.
    These include {\sc Subset Sum} and {\sc Boolean Linear Programming} parameterized by the number of constraints.
    Secondly, we present new techniques for establishing lower bounds in this regime.
    In particular, we show that {\sc Subset Sum} in permutation groups is at least as hard for nondeterministic computation as {\sc 3Coloring} in bounded-pathwidth graphs. 
\end{abstract}

\section{Introduction}

Nondeterminism constitutes a powerful lens for studying complexity theory.
The most prominent instantiation of this concept is the class NP 
capturing all problems with solutions checkable in polynomial time.
Another well-known example is the class NL of problems that can be solved nondeterministically in logarithmic space~\cite{arora2009computational}.
But the usefulness of nondeterminism is not limited to merely filtering candidates for deterministic classes.
A~question studied in proof complexity theory is how much nondeterminism is needed to solve certain problems or, equivalently, how long proofs have to be to prove certain theorems~\cite{cook2010logical}.
Depending on the considered logic, these theorems may correspond to instances of problems complete for NP~\cite{KoblerM00}, coNP~\cite{Cook_Reckhow_1979}, or W[SAT]~\cite{DantchevMS11}.
The central goal of proof complexity is to
establish lower bounds for increasingly powerful proof systems in the hope of building up techniques to prove, e.g., NP $\ne$ coNP. 
What is more, there are connections between nondeterministic running time lower bounds and fine-grained complexity~\cite{CarmosinoGIMPS16}.
In the context of online algorithms, nondeterminism is used to measure how much knowledge of future requests is needed to achieve a certain performance level~\cite{bockenhauer2017online, boyar2017online, dobrev2009measuring}.

Bounded nondeterminism plays an important role in organizing parameterized complexity theory.
The first class studied in this context was W[P] comprising parameterized problems solvable in FPT time, i.e., $f(k)\cdot\poly(n)$, when given access to $f(k)\cdot\log(n)$ nondeterministic bits~\cite{cygan2015parameterized}.
In the last decade, classes defined by nondeterministic computation in limited space have attracted significant attention~\cite{AllenderCLPT14, ElberfeldST15, PilipczukWrochna18} with a recent burst of activity around the class XNLP~\cite{BodlaenderGJJL22, BodlaenderGNS21, BodlaenderMOPL23} of problems solvable in nondeterministic time $f(k)\cdot\poly(n)$ and space $f(k)\cdot\log(n)$.
While the study of W[P] and XNLP concerns problems considered very hard from the perspective of FPT algorithms, a question that  has eluded a systematic examination so far 
is {\bf how much nondeterminism is necessary to solve FPT problems in polynomial time.} 
A related question has been asked about the amount of nondeterminism  needed to solve $d$-CNF-SAT in sub-exponential time~\cite{dantsin2011}.

To concretize our question,
we say that a parameterized problem $P$ admits a {\em polynomial certificate} if an instance $(I,k)$ can be solved in polynomial time when given access to $\poly(k)$ nondeterministic bits\footnote{It is more accurate to say that a ``certificate'' refers to a particular instance while a problem can admit a ``certification''. We have decided however to choose a shorter and more established term. We also speak of ``certificates'' instead of ``witnesses'' because ``witness'' sometimes refers to a concrete representation of a solution for problems in NP, see e.g.,~\cite{DruckerNS16}.}.
For example, every problem in NP admits a polynomial certificate under parameterization by the input length.
This definition captures, e.g., FPT problems solvable via branching as a certificate can provide a roadmap for the correct branching choices.
Furthermore, every parameterized problem that is in NP and admits a polynomial kernelization has a polynomial certificate given by the NP certificate for the compressed instance.
The containment in NP plays a subtle role here: Wahlstr{\"{o}}m~\cite{Wahlstrom13Abusing} noted that a~polynomial compression for the {\sc $K$-Cycle} problem is likely to require a target language from outside NP exactly because {\sc $K$-Cycle} does not seem to admit a polynomial certificate.

In this article we aim to
organize the folklore knowledge about polynomial certification into a systematic study, provide new connections, techniques, and motivations,
and lay the foundations for a hardness framework.

\subparagraph{When is certification easy or hard?}
The existence of a certificate of size $p(k) = \poly(k)$ entails an FPT algorithm with running time $2^{p(k)}\poly(n)$,  
by enumerating all possible certificates. We should thus restrict ourselves only to problems solvable within such running time.
On the other hand, when such an algorithm is available then one can solve the problem in polynomial time whenever $p(k) \le \log n$.
Therefore, it suffices to handle the instances with $\log n < p(k)$. 
Consequently, 
for such problems it is equivalent to ask for a certificate of size $\poly(k+\log n)$ as this can be bounded polynomially in $k$ via the mentioned trade-off (see \Cref{lem:prelim:log}).
This observation yields polynomial certificates for problems parameterized by the solution size, such as {\sc Multicut}~\cite{MarxR14} or {\sc Planarization}~\cite{JansenLS14}, which do not fall into the previously discussed categories.

What are the problems solvable in time $2^{k^{\Oh(1)}}n^{\Oh(1)}$ yet unlikely to admit a polynomial certificate?
The {\sc Bandwidth} problem has been conjectured 
not be belong to W[P] because one can merge multiple instances into one, without increasing the parameter, in such a way that the large instance is solvable if and only if all the smaller ones are.
It is conceivable that a certificate for the large instance should require at least one bit for each of the smaller instances, hence it cannot be short~\cite{FellowsR20}.
The same argument applies to every parameterized problem that admits an AND-composition, a construction employed to rule out polynomial kernelization~\cite{Dell16, fomin2019kernelization}, which effectively encodes a conjunction of multiple 3SAT instances as a single instance of the problem.
Such problems include those parameterized by graph width measures like treewidth or pathwidth, and it is hard to imagine polynomial certificates for them.
However, kernelization hardness can be also established using an OR-composition, which does not stand at odds with polynomial certification. 

The close connection between AND-composition and polynomial certificates has been observed
by Drucker, Nederlof, and Santhanam~\cite{DruckerNS16}
who focused on parameterized search problems solvable by
{\em One-sided Probabilistic Polynomial} (OPP) algorithms (cf. \cite{paturi2010}).
They asked which problems admit an OPP algorithm that finds a solution with probability $2^{-\poly(k)}$.
This may seem much more powerful than using $\poly(k)$ nondeterministic bits but the success probability can be replaced by $\Omega(1)$ when given a single access to an oracle solving $k$-variable Circuit-SAT~\cite[Lemma 3.5]{DruckerNS16}. 
A~former result of Drucker~\cite{Drucker13nondet} implies that an OPP algorithm with success probability $2^{-\poly(k)}$ (also called a {\em polynomial Levin witness compression}) for a search problem admitting a so-called constructive AND-composition would imply \conpunless~\cite[Theorem 3.2]{DruckerNS16}.




\subparagraph{Constructive vs. non-constructive proofs.}

The restriction to search problems is crucial in the work~\cite{DruckerNS16} because 
the aforementioned hardness result does not apply to algorithms that may
recognize yes-instances without constructing a solution explicitly but by proving its existence in a non-constructive fashion.
As noted by Drucker~\cite[\S 1.3]{Drucker13nondet-long}, his negative results do not allow to rule out this kind of algorithms.

In general, search problems may be significantly harder than their decision counterparts.
For example, there are classes of search problems for which the solution is always guaranteed to exist (e.g., by the pigeonhole principle, in the case of class PPP~\cite{aaronson2005complexity}) but the existence of a polynomial algorithm computing some solution is considered unlikely.
For a less obvious example, consider finding a
non-trivial divisor 
of a given integer $n$.
A polynomial (in $\log n$) algorithm finding a solution could be used to construct the factorization of $n$, resolving a major open problem.
But the existence of a solution is equivalent to $n$ being composite and this can be verified in polynomial time by the AKS primality test~\cite{agrawal2004primes}.

As yet another example, consider the problem of finding a knotless embedding of a graph, i.e., an embedding in $\mathbb{R}^3$ in which every cycle forms a trivial knot in a topological sense.
The class $\mathcal{G}$ of graphs admitting such an embedding is closed under taking minors so Robertson and Seymour's Theorem ensures that $\mathcal{G}$ is characterized by a finite set of forbidden minors~\cite{robertson2004graph}, leading to a polynomial algorithm for recognizing graphs from $\mathcal{G}$.
Observe that excluding all the forbidden minors yields a non-constructive proof that a knotless embedding exists.
On the other hand, the existence of a polynomial algorithm constructing such an embedding remains open~\cite{lazarus2017knots}.

To address this discrepancy, we propose the following conjecture which asserts that not only {\em finding} assignments to many instances of 3SAT requires many bits of advice but even {\em certifying} that such assignments exist should require many bits of advice.
We define the parameterized problem \andsat where an instance consists of a sequence of $n$ many 3SAT formulas on $k$ variables each, and an instance belongs to the language if all these formulas are satisfiable. We treat $k$ as a parameter.

\begin{conjecture}\label{conj:certificate}
     \andsat does not admit a polynomial certificate unless \conpunless.
\end{conjecture}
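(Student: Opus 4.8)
The plan is to establish the contrapositive: a polynomial certificate for \andsat implies \conpunless.

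\textbf{Step 1: reformulate the hypothesis.} Take $n$ \emph{arbitrary} \textsc{3SAT} formulas $\psi_1,\dots,\psi_n$ on $k$ variables each (after adding dummy variables we may assume a common $k$). This tuple is literally an instance of \andsat, so the hypothesis yields a polynomial-time verifier $V$ and a certificate $c$ with $|c|\le\poly(k)$ such that all $\psi_i$ are satisfiable if and only if $V(\psi_1,\dots,\psi_n,c)$ accepts for some such $c$. In other words, \textsc{3SAT} admits what one might call a \emph{non-constructive AND-distillation with short nondeterministic advice}: a transcript $c$, polynomial in $k$ alone and hence short relative to the total instance size $\approx n\cdot\poly(k)$, certifies simultaneous satisfiability of the whole batch of $n$ formulas. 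This is precisely the decision analogue of the ``polynomial Levin witness compression'' studied for search problems in~\cite{DruckerNS16}.

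\textbf{Step 2: derive \conpunless.} I would then follow the complementary-witness route of Fortnow--Santhanam and Drucker, aiming to put $\overline{\textsc{3SAT}}$ into $\mathrm{NP/poly}$; since $\overline{\textsc{3SAT}}$ is coNP-complete and $\mathrm{coNP}\sub\mathrm{NP/poly}$ is equivalent to \conpunless, this suffices. The non-uniform advice, indexed by the number $k$ of variables, would be a polynomially large family $\fcal$ of satisfiable formulas on $k$ variables, and the $\mathrm{NP/poly}$ machine deciding whether a given $\phi$ on $k$ variables is unsatisfiable would batch $\phi$ together with $n-1$ members of $\fcal$ and reason about the behavior of $V$ on that batch. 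Correctness of such a scheme rests on a counting/covering argument in the spirit of Drucker: the certificate ranges over only $2^{\poly(k)}$ values, yet no single value can ``cover'' the exponentially many positions at which an unsatisfiable $\phi$ might sit inside a long batch, so $V$'s acceptance behavior is constrained enough that a small hard-wired $\fcal$ separates satisfiable from unsatisfiable $\phi$. Making this rigorous amounts to adapting Drucker's incompressibility lemma for AND-compositions to this setting.

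\textbf{Step 3: the main obstacle.} The crux --- and the reason the statement is phrased as a conjecture rather than proved --- is the \emph{non-constructivity} of the certificate, exactly the gap that Drucker~\cite{Drucker13nondet} and~\cite{DruckerNS16} could not close. Drucker's incompressibility bound is information-theoretic and exploits that the witness is \emph{constructive}: an accepting transcript must encode satisfying assignments for all $n$ formulas, and one literally counts assignments against the $\poly(k)$ budget. A bare nondeterministic certificate for \andsat asserts only that all $\psi_i$ are satisfiable, and it might do so through a global, assignment-free argument --- an algebraic identity, a sumcheck-style transcript, a topological fixed-point witness --- so the entropy count has nothing to count. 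Closing the gap appears to require either (a) a genuinely new incompressibility argument that constrains the \emph{accepting set} of $V$ --- itself a co-nondeterministic object --- instead of an explicit solution map, or (b) showing that any polynomial certificate for \andsat can be made search-constructive without blowing up its size. Option~(b) looks hopeless in general: for problems such as {\sc $K$-Cycle} the short certificates one hopes for are inherently non-constructive (living in a target language outside NP, as Wahlstr{\"o}m observed), so any constructivization would have to exploit something special about \textsc{3SAT}. I therefore expect route~(a) to be the realistic but genuinely hard path.
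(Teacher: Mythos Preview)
The paper does not prove this statement at all: it is explicitly a \emph{conjecture}, and the authors say only that they ``believe that its proof in the current form is within the reach of the existing techniques employed in communication complexity and kernelization lower bounds.'' There is therefore nothing to compare your attempt against.

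Your write-up is not a proof either, and you are candid about this in Step~3. Steps~1 and~2 are a reasonable sketch of how a Fortnow--Santhanam/Drucker-style argument would have to proceed, and your diagnosis of the obstacle --- that Drucker's incompressibility technique counts bits of an explicit solution whereas a bare certificate for \andsat need not encode any assignment --- matches the paper's own discussion (``his negative results do not allow to rule out this kind of algorithms''). But Step~2 stops well short of an argument: you assert that ``$V$'s acceptance behavior is constrained enough that a small hard-wired $\fcal$ separates satisfiable from unsatisfiable $\phi$'' without giving any mechanism for this, and the covering intuition you invoke (few certificate values versus many positions for $\phi$) is precisely the counting that breaks down once the certificate is non-constructive. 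So the proposal identifies the gap correctly but does not close it; it is commentary on why the conjecture is open, not a proof.
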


Observe that \andsat is solvable in time $2^k\cdot\poly(k)\cdot n$, so the questions whether it admits a certificate of size $\poly(k)$, $\poly(k)\cdot\log n$, or $\poly(k + \log n)$ are equivalent.
We formulate \Cref{conj:certificate} as a conditional statement because we believe that its proof in the current form is within the reach of the existing techniques employed in communication complexity and kernelization lower bounds~\cite{Dell16, paturi2010}.
Then the known examples of AND-composition could be interpreted as reductions from \andsat that justify non-existence of polynomial certificates.


\subsection{The problems under consideration}
\subparagraph{Our focus: Subset Sum.}

In {\sc Subset Sum} we are given a sequence of $n$ integers (also called items), a~target integer $t$ (all numbers encoded in binary), and we ask whether there is a subsequence summing up to~$t$.
This is a fundamental NP-hard problem that can be solved in pseudo-polynomial time $\Oh(tn)$ by the classic algorithm by Bellman from the 50s~\cite{bellman1958dynamic}.
In 2017 the running time has been improved to $\tilde\Oh(t+n)$ by Bringmann~\cite{Bringmann17}.
{\sc Subset Sum} reveals miscellaneous facets in complexity theory: it has been studied from the perspective of exponential algorithms~\cite{NederlofLZ12, NederlofW21}, logarithmic space~\cite{JinVW21, Kane10}, 
approximation~\cite{BringmannN21, chen2024approximating, KellererMPS03, MuchaW019}, kernelization~\cite{DomLS14, harnik2010compressibility, JansenR021}, fine-grained complexity~\cite{AbboudBHS22, BringmannW21, PolakRW21, pisinger1999linear}, 
cryptographic systems~\cite{ImpagliazzoN96}, and average-case analysis~\cite{merkle1978hiding}.
Our motivating goal is the following question.

\begin{question}\label{conj:ss}
    Does {\sc Subset Sum} admit a polynomial certificate for parameter $k = \log t$?
\end{question}

From this point of view, the pseudo-polynomial time $\Oh(tn)$ can be interpreted as FPT running time $\Oh(2^kn)$.
It is also known that a~kernelization of size $\poly(k)$ is unlikely~\cite{DomLS14}.
Observe that we cannot hope for a certificate of size $o(\log t)$ because the algorithm  enumerating all possible certificates would solve {\sc Subset Sum} in time $2^{o(\log t)}n^{\Oh(1)} = t^{o(1)}n^{\Oh(1)}$ contradicting the known lower bound based on the Exponential Time Hypothesis (ETH)~\cite{AbboudBHS22}.


The parameterization by the number of relevant bits exhibits a behavior different from those mentioned so far, that is, width parameters and solution size, making it an uncharted territory for nondeterministic algorithms.
 The study of \sslog  was suggested by Drucker et al.~\cite{DruckerNS16}  in the context of polynomial witness compression.
These two directions are closely related yet ultimately
incomparable: the requirement to return a solution makes the task more challenging but the probabilistic guarantee is less restrictive than constructing a certificate.
However, establishing hardness in both paradigms boils down to finding
a reduction of a certain kind from \textsc{AND-3SAT[$k$]}.

The {\em density} of a {\sc Subset Sum} instance is defined as $n \,/\, \log(t)$ in the cryptographic context~\cite{AustrinKKN16, howgrave2010new, lagarias1985solving}.
As it~is straightforward to construct a certificate of size $n$, we are mostly interested in instances of high density, which also appear hard for exponential algorithms~\cite{AustrinKKN16}.
On the other hand, instances that are very dense enjoy a special structure that can be leveraged algorithmically~\cite{BringmannW21, GalilM91}.
The instances that seem the hardest in our regime are those in which $n$ is slightly superpolynomial in $\log t$. 
Apart from the obvious motivation to better understand the structure of {\sc Subset Sum}, we believe that the existence of short certificates for dense instances could be valuable for cryptography. 

There are several other studied variants of the problem. In {\sc Unbounded Subset Sum} the input is specified in the same way but one is allowed to use each number repeatedly.
Interestingly, this modification enables us to certify a solution with 
 $\Oh(\log^2 t)$ bits (\Cref{lem:remain:unbounded}).
Another variant is to replace the addition with some group operation.
In {\sc Group-$G$ Subset Sum} we are given a sequence of $n$ elements from $G$ and we ask whether one can pick a subsequence whose group product  equals the target element $t \in G$.
Note that we do not allow to change the order of elements when computing the product what makes a difference for non-commutative groups. 
This setting has been mostly studied for $G$ being the cyclic group $\zz_q$~\cite{AxiotisBBJNTW21, AxiotisBJTW19, CardinalI21, KoiliarisX19, Potepa21}.
To capture the hardness of an instance, we choose the parameter to be $\log |G|$ (or equivalent).
In particular, we will see that {\sc Group-$\zz_{q}$ Subset Sum[$\log q$]} is equivalent in our regime to \sslog.
We will also provide examples of groups for which certification is either easy or conditionally hard. 

\subparagraph{Integer Linear Programming.}
We shall consider systems of equations in the form $\{Ax=b \mid x \in \{0,1\}^n\}$ where $A \in \zz^{m \times n}$, $b \in \zz^m$, with the parameter being the number of constraints $m$.
This is a special case of \textsc{Integer Linear Programming} (ILP) over boolean domain, known in the literature as  {\em pseudo-boolean optimization}.
This case has been recognized as particularly interesting by the community working on practical ILP solvers because pseudo-boolean optimization can be treated with SAT solvers \cite{aloul2002generic, BussN21, een2006translating, JoshiMM15, SmirnovBJ21}.
It is also applicable in the fields of approximation algorithms \cite{shmoys} and election systems \cite{lin2012solving}.
Eisenbrand and Weismantel~\cite{EisenbrandW20} found an elegant application of Steinitz Lemma to this problem and gave  
an FPT algorithm
with running time $(||A||_\infty + m)^{\Oh(m^2)}\cdot n$ (cf. \cite{JansenR19}).

For simplicity we will consider variants with bounded $||A||_\infty$.
In the {\sc 0-1 ILP} problem we restrict ourselves to matrices $A \in \{-1,0,1\}^{m \times n}$ and in  {\sc Monotone 0-1 ILP} we consider $A \in \{0,1\}^{m \times n}$.
A~potential way to construct a short certificate would be to tighten the {\em proximity bounds} from~\cite{EisenbrandW20} for such matrices: find an extremal solution $x^*$ to the linear relaxation  $\{Ax=b \mid x \in [0,1]^n\}$ and hope that some integral solution $z$ lies nearby, i.e., $||z-x^*||_1 \le \poly(m)$.
This however would require tightening the bounds on the vector norms in the  {\em Graver basis} of the matrix $A$.
Unfortunately, there are known lower bounds making this approach hopeless~\cite{berndt2024new, berstein2009graver, kudo2013lower}.

It may be tempting to seek the source of hardness in the large values in the target vector~$b$.
We will see however that the problem is no easier when we assume $b=0$ and look for any non-zero solution. 
We refer to such problem as {\sc 0-Sum~0-1~ILP}.

A special case of {\sc Monotone 0-1 ILP} is given by a matrix $A \in \{0,1\}^{m \times n}$ with $n = {\binom{m}{d}}$ columns corresponding to all size-$d$ subsets of $\{1,\dots,m\}$.
Then $Ax = b$ has a boolean solution if and only if $b$ forms a degree sequence of some $d$-hypergraph, i.e., it is $d$-hypergraphic.
There is a classic criterion by Erd\H{o}s for a sequence to be graphic~\cite{erdos1960graphs} (i.e., 2-hypergraphic) but already for $d=3$ deciding if $b$ is $d$-hypergraphic becomes NP-hard~\cite{DezaLMO19}.
It is straightforward to certify a solution with $m^d$ bits, what places the problem in NP for each fixed $d$, but it is open whether it is in NP when $d$ is a part of the input (note that the matrix $A$ is implicit so the input size is $\Oh(md\log m)$).
This basically boils down to the same dilemma: {\bf can we certify the existence of a boolean ILP solution $x$  without listing $x$ in its entirety?}

There is yet another motivation to study the certification complexity of {\sc 0-1 ILP}.
If this problem admits a certificate of size $\poly(m)$ then any other problem that can be modeled by {\sc 0-1 ILP} with few constraints must admit a short certificate as well.
This may help classifying problems into these that can be solved efficiently with ILP solvers and those that cannot.

\subparagraph{Other problems parameterized by the number of relevant bits.}

A classic generalization of {\sc Subset Sum} is the {\sc Knapsack} problem where each item is described by a size $p_i$ and a weight~$w_i$; here we ask for a subset of items of total weight at least $w$ but total size not exceeding $t$. 
Following Drucker et al.~\cite{DruckerNS16} we parameterize it by the number of bits necessary to store items' sizes and weights, i.e., $\log(t + w)$.
The need to process weights makes {\sc Knapsack} harder than {\sc Subset Sum} from the perspective of fine-grained complexity~\cite{CyganMWW19} but they are essentially equivalent on the ground of exponential algorithms~\cite{NederlofLZ12}.
We will see that they are equivalent in our regime as well. 

We will also consider the following scheduling problem which is in turn a generalization  of {\sc Knapsack}.
In {\sc Scheduling Weighted Tardy Jobs}
we are given a set of $n$ jobs, where each
job $j \in [n]$ has a processing time $p_j \in\nn$, a weight $w_j \in \nn$, and a due date $d_j \in \nn$. 
We schedule the jobs on a single machine and we want to minimize the total weight of jobs completed after their due dates (those jobs are called {\em tardy}).
Equivalently, we try to maximize the total weight of jobs completed in time.

In the scheduling literature, {\sc Scheduling Weighted Tardy Jobs} is referred to as $1 || \sum w_jU_j$ using Graham’s notation.
The problem is solvable in pseudo-polynomial time  $\Oh(n\cdot d_{max})$ by the classic Lawler and Moore's algorithm~\cite{lawler1969functional}.
The interest in $1 || \sum w_jU_j$ has been revived due to the recent advances in fine-grained complexity
\cite{AbboudBHS20, BringmannFHSW22, fischer2024minimizing, Klein0R23}. 
Here, the parameter that captures the number of relevant bits is 
$\log (d_{max} + w_{max})$.

\subsection{Our contribution}

The standard {\em polynomial parameter transformation} (PPT) is a polynomial-time reduction between parameterized problems that maps an instance with parameter $k$ to one with parameter $k' = \poly(k)$.
We introduce the notion of a {\em nondeterministic polynomial parameter transformation} (NPPT) which extends PPT by allowing the reduction to guess ${\poly(k)}$ nondeterministic bits.
Such reductions preserve the existence of a polynomial certificate.
We write $P \le_{\textsc{nppt}} Q$ (resp. $P \le_{\textsc{ppt}} Q$) to indicate that $P$ admits a NPPT (resp. PPT) into $Q$.

We demonstrate how NPPT help us organize the theory of polynomial certification, similarly as  PPT come in useful for organizing the theory of Turing kernelization~\cite{HermelinKSWW15}. 
As our first result, we present an equivalence class of problems that share the same certification-hardness status as \sslog.
In other words, either all of them admit a polynomial certificate or none of them.
Despite apparent similarities between these problems, some of the reductions require a nontrivial use of nondeterminism.

\begin{theorem}\label{thm:main:equivalent}
    The following parameterized problems are equivalent with respect to NPPT:
    \begin{enumerate}
        \item \sslog
        \item {\sc Knapsack[$\log(t+w)$]}, {\sc Knapsack[$\log(p_{max}+w_{max})$]}
        \item \ilp, \monoilp, {\sc 0-Sum 0-1 ILP[$m$]}
        \item {\sc Group-$\zz_q$ Subset Sum[$\log q$]}
    \end{enumerate}
\end{theorem}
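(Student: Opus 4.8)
The plan is to establish a cycle of NPPT-reductions linking all the listed problems, using \sslog as the hub. Concretely, I would prove the chain
\[
\text{\sslog} \le_{\textsc{ppt}} \text{\monoilp} \le_{\textsc{ppt}} \text{\ilp} \le_{\textsc{nppt}} \text{\zeroilp} \le_{\textsc{nppt}} \text{\sslog},
\]
and separately close the loop for {\sc Knapsack} and {\sc Group-$\zz_q$ Subset Sum} by showing each is NPPT-equivalent to \sslog. Since $\le_{\textsc{nppt}}$ is transitive and contains $\le_{\textsc{ppt}}$, this suffices. The guiding principle is that all these problems are solvable in time $2^{k^{\Oh(1)}}n^{\Oh(1)}$, so by \Cref{lem:prelim:log} I am free to produce certificates of size $\poly(k+\log n)$, and likewise the reductions may blow up the instance size polynomially as long as the parameter stays polynomially bounded; I will use this slack freely (e.g., to pad numbers, split large coefficients into several small ones, or introduce auxiliary $0/1$ variables).

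\textbf{The easy directions (PPT).} \sslog is the special case of \monoilp with $m=1$, so that reduction is immediate up to noting the target $t$ has $k$ bits. For \monoilp $\le_{\textsc{ppt}}$ \ilp, a nonnegative $\{0,1\}$-matrix is trivially a $\{-1,0,1\}$-matrix, so this is an identity map. The reverse, encoding a $\{-1,0,1\}$-system as a monotone one, is the first place a little work is needed: I would replace each variable $x_i$ by $x_i$ and a complement $\bar x_i$, add the constraint $x_i+\bar x_i=1$ (which stays monotone and keeps $m$ polynomially bounded), and rewrite every $-1$ coefficient using $\bar x_i$ after moving the resulting constant to the right-hand side $b$ — standard but it does increase $m$ from $m$ to $m+n$, which is \emph{not} polynomial in $m$. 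So this naive route fails, and instead I expect the intended argument routes \ilp through \zeroilp using nondeterminism: guess the value $b$ achieved by a solution (its entries are bounded by $n$, so $\Oh(m\log n)=\poly(m+\log n)$ guessed bits), then it remains to test solvability of $Ax=b$; a further gadget forces a designated variable to $1$ to rule out the trivial zero solution and reduces to searching for any nonzero kernel vector. Conversely \zeroilp $\le_{\textsc{nppt}}$ \sslog by concatenating the $m$ rows into a single huge number with sufficiently large radix gaps so no carries interfere (the standard Subset-Sum-with-few-equations encoding), which multiplies $k$ by $m$ — again polynomial in the parameter $m$ but we then need a nonzero solution, handled by a target of the form $\sum$ (radix powers) forcing at least one coordinate active. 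The main subtlety throughout is that Subset Sum has only one equation, so packing $m$ constraints multiplies the bit-length by roughly $m$, which is fine for the parameter $\log t$ only because $m$ is itself the parameter on the ILP side — I would make this bookkeeping explicit.

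\textbf{Knapsack and Group-$\zz_q$.} {\sc Knapsack[$\log(t+w)$]} $\le_{\textsc{nppt}}$ \sslog: guess the exact total size $s\le t$ and total weight $\ge w$ realized by an optimal packing (again $\Oh(\log(t+w))$ bits), then reduce to two equality constraints (size $=s$, weight $=W$ for the guessed $W$), i.e., to \zeroilp-style $2$-constraint system, then into \sslog as above. The converse is trivial (set all weights to $1$, or to the sizes). The equivalence {\sc Knapsack[$\log(t+w)$]} $\equiv$ {\sc Knapsack[$\log(p_{max}+w_{max})$]} follows because $t\le \sum p_j\le n\cdot p_{max}$ and $w\le n\cdot w_{max}$, so the two parameters agree up to an additive $\Oh(\log n)$, absorbed by \Cref{lem:prelim:log}. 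For {\sc Group-$\zz_q$ Subset Sum[$\log q$]}: one direction is realizing an instance modulo $q$ from an integer Subset Sum instance after a guess of how many times the true sum "wraps around" (a number between $0$ and $n$, so $\Oh(\log n)$ guessed bits), turning $\zz_q$-arithmetic into an exact linear equation; the other direction embeds $\zz_q$ into the integers by choosing $q$ larger than the total, so the group sum equals the integer sum and no reduction occurs.

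\textbf{Expected main obstacle.} The genuinely delicate point is the back-and-forth between \ilp/\monoilp and \zeroilp — turning a $\{-1,0,1\}$ system with arbitrary right-hand side into a homogeneous nonzero-kernel question without letting the number of constraints grow super-polynomially in $m$. The complement-variable trick costs $\Theta(n)$ extra constraints, so I expect the proof to avoid it entirely, instead \emph{guessing} the attained right-hand side and using a single "activity-forcing" gadget constraint, so that only $\Oh(1)$ new rows and $\poly(m+\log n)$ guessed bits are needed. Getting that gadget right — in particular ensuring the forced-nonzero solution of the homogeneous system corresponds exactly to feasibility of the original inhomogeneous system, and that it survives the final encoding into one big Subset Sum number without spurious carries — is where I would spend most of the care.
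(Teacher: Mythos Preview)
Your overall architecture (a cycle of reductions with \sslog as the hub, handling {\sc Knapsack} and {\sc Group-$\zz_q$ Subset Sum} separately) matches the paper, and your treatment of {\sc Knapsack} and {\sc Group-$\zz_q$} is essentially what the paper does. However, there are two genuine gaps in the ILP portion.

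\textbf{First gap: the claim ``\sslog is the special case of \monoilp with $m=1$'' is false.} In \monoilp the matrix entries lie in $\{0,1\}$, so a single-row instance only asks how many of the $x_i$ equal $1$ --- it cannot encode arbitrary item sizes. The paper's reduction uses $m=k=\lceil\log t\rceil$ rows: column $i$ is the binary expansion $\bin(p_i)\in\{0,1\}^k$, and one \emph{nondeterministically guesses} the column-sum vector $b\in[0,n]^k$ (checking that $\sum_j b_j 2^{j-1}=t$). This costs $k\log n$ guessed bits and is an NPPT, not a PPT. Without this binary-expansion-plus-guess step your cycle never gets started.

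\textbf{Second gap: the ``activity-forcing gadget'' for the homogeneous system is exactly the hard part, and you have not supplied it.} You correctly flag that turning $Ax=b$ into ``find nonzero $y$ with $A'y=0$'' while keeping $m'=\poly(m)$ is the crux, but your sketch (``guess $b$, add a gadget forcing a designated variable to $1$'') does not work: $b$ is already given, and there is no linear homogeneous constraint over $\{0,1\}$ that forces a variable to $1$. The paper's solution is a nontrivial deterministic construction (so actually a PPT, not NPPT, and it goes from \monoilp rather than \ilp). One invokes a Graver-basis lower bound: for every $k$ there is a sequence $v_1,\dots,v_\ell\in\{-1,0,1\}^k$ with $\ell=\Theta(2^k)$ that sums to $0$ but no proper nonempty subsequence does. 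Taking $k=\Oh(\log n)$ gives $\ell\ge n$; one decomposes $b=\sum_{i\le\ell} b_i$ with $b_i\in\{-1,0,1\}^m$ and builds $A'=\begin{pmatrix}A & -b_1\ \cdots\ -b_\ell\\ 0 & v_1\ \cdots\ v_\ell\end{pmatrix}$. Any nonzero boolean kernel vector must, by the no-proper-subsequence property, pick \emph{all} of the last $\ell$ columns (or none --- ruled out because $A$ is monotone with a $1$ in every column), which forces the first block to satisfy $Ay_1=b$. This Graver-basis trick is the key technical ingredient your proposal is missing; the remaining directions \zeroilp $\le_{\textsc{nppt}}$ \ilp (guess one column index in the support) and \ilp $\le_{\textsc{nppt}}$ \monoilp (decompose $A=A^+-A^-$ and guess $A^+x,A^-x$) are then short.
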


Even though we are unable to resolve \Cref{conj:ss}, we believe that revealing such an equivalence class supports the claim that a polynomial certificate for \sslog is unlikely.
Otherwise, there must be some intriguing common property of all problems listed in \Cref{thm:main:equivalent} that has eluded researchers so far despite extensive studies in various regimes. 

Next, we present two negative results.
They constitute a proof of concept that 
\andsat can be used as a non-trivial source of hardness.
First, we adapt a reduction from~\cite{AbboudBHS20} to show that scheduling with weights and due dates is hard assuming \Cref{conj:certificate}.

\begin{restatable}[$\bigstar$]{theorem}{thmMainScheduling}
\label{thm:main:scheduling}
    \andsat $\le_{\textsc{ppt}}$ \mbox{{\sc Scheduling Weighted Tardy Jobs[$\log (d_{max} + w_{max})$]}}.
\end{restatable}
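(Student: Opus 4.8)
The plan has two steps: first turn the $n$ input formulas into $n$ small \textsc{Subset Sum} instances, and then build a single \textsc{Scheduling Weighted Tardy Jobs} instance whose tardiness budget can be met if and only if each of the $n$ \textsc{Subset Sum} instances is solvable. \emph{Step 1 (formulas to \textsc{Subset Sum}).} Apply the classical $3$SAT-to-\textsc{Subset Sum} reduction to each formula $\phi_\ell$ separately. Since a $3$CNF on $k$ variables has at most $\Oh(k^3)$ distinct clauses, we may assume each formula has $\poly(k)$ clauses, so $\phi_\ell$ yields an instance $(X_\ell,t_\ell)$ with $\poly(k)$ items, all written in $\poly(k)$ bits (use base ten with one digit per variable and one per clause), with $t_\ell$ of $\poly(k)$ bits, with every item strictly smaller than $t_\ell$, and with $\phi_\ell$ satisfiable iff $X_\ell$ has a subsequence summing to $t_\ell$. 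Degenerate instances ($t_\ell>\Sigma_\ell:=\sum X_\ell$, or a formula with no variables) are trivial and handled directly. Finally, if $k\le\log n$ we just solve \andsat by brute force in time $\Oh(2^k\poly(k)\,n)=n^{\Oh(1)}$ and output a fixed yes- or no-instance of the target problem; so from now on $\log n<k$.

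\emph{Step 2 (the scheduling gadget).} Given \textsc{Subset Sum} instances $(X_1,t_1),\dots,(X_n,t_n)$ with $X_\ell=\{a^\ell_1,\dots,a^\ell_{m_\ell}\}$ and $\Sigma_\ell=\sum_i a^\ell_i$, fix coefficients $c_\ell:=n-\ell+1$, so that $c_1>c_2>\dots>c_n\ge 1$. For every $\ell$ and $i$ create one job $J^\ell_i$ with processing time $p(J^\ell_i)=a^\ell_i$, weight $w(J^\ell_i)=c_\ell\cdot a^\ell_i$, and due date $d_\ell:=\sum_{\ell'\le\ell}t_{\ell'}$; set the tardiness budget to $W^{\star}:=\sum_\ell c_\ell(\Sigma_\ell-t_\ell)$. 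Every number here is at most $\poly(n)\cdot 2^{\poly(k)}$, hence $\log(d_{max}+w_{max})\le\Oh(\log n)+\poly(k)=\poly(k)$ by Step~1, the number of jobs is $n\cdot\poly(k)$, and the whole construction runs in polynomial time, so this is a valid {\sc ppt}.

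\emph{Step 3 (correctness).} Recall the exchange-argument fact that a set $E$ of jobs can be completed by their due dates iff, listing $E$ in earliest-due-date order, every prefix has total processing time at most the due date of its last job. Writing $e_\ell$ for the total processing time of the early jobs among $J^\ell_1,\dots,J^\ell_{m_\ell}$, this is exactly $S_\ell:=\sum_{\ell'\le\ell}(e_{\ell'}-t_{\ell'})\le 0$ for all $\ell$. The total weight of tardy jobs equals $\sum_\ell c_\ell(\Sigma_\ell-e_\ell)=W^{\star}-\sum_\ell c_\ell(e_\ell-t_\ell)$, so it is at most $W^{\star}$ iff $\sum_\ell c_\ell(e_\ell-t_\ell)\ge 0$. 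Abel summation gives $\sum_\ell c_\ell(e_\ell-t_\ell)=c_n S_n+\sum_{\ell=1}^{n-1}(c_\ell-c_{\ell+1})S_\ell$, a nonnegative combination of the nonpositive numbers $S_\ell$ (here $c_n>0$ and $c_\ell-c_{\ell+1}=1>0$). Hence this sum is $\le 0$, with equality iff $S_\ell=0$ for every $\ell$, i.e.\ iff $e_\ell=t_\ell$ for every $\ell$. So the scheduling instance is a yes-instance iff there is a feasible early set with $e_\ell=t_\ell$ for all $\ell$, which holds iff every $X_\ell$ has a subsequence summing to $t_\ell$ (the early jobs of block $\ell$ form such a subsequence; conversely those subsequences, taken early, are feasible and attain the budget with equality), which by Step~1 holds iff all $\phi_\ell$ are satisfiable.

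\emph{The main obstacle} is the isolation in Step~3: the due dates impose only the \emph{cumulative} bounds $S_\ell\le 0$, and the budget imposes only one \emph{global} inequality, so a priori nothing prevents a schedule that underfills one block and overfills a later one (which would correspond to witnessing a subsequence summing to $t_\ell\pm\Delta$ rather than to $t_\ell$). The strictly decreasing weights $c_\ell$ are exactly what rule this out: shifting $\Delta$ units of early processing time from block $\ell$ to a later block $\ell'$ raises the tardy weight by $(c_\ell-c_{\ell'})\Delta>0$, so no such trade stays within budget; this is the point where we adapt the reduction of~\cite{AbboudBHS20}. Everything else is routine bookkeeping: the size bounds on $d_{max}$, $w_{max}$, $W^{\star}$ and the job count, the fact that every item job can indeed be scheduled early because $p(J^\ell_i)=a^\ell_i<t_\ell\le d_\ell$, and the degenerate cases flagged in Step~1.
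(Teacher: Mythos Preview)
Your proof is correct and uses exactly the same construction as the paper: identical processing times $a^\ell_i$, weights $(n+1-\ell)\cdot a^\ell_i$, due dates $d_\ell=\sum_{\ell'\le\ell}t_{\ell'}$, and the same budget (the paper phrases it as a lower bound on the non-tardy weight, you as an upper bound $W^\star$ on the tardy weight, which is the same inequality). The only difference is in how the $(\Rightarrow)$ direction of correctness is argued. The paper reasons per time unit: it defines $a_i=n+1-j$ for $i\in(s_{j-1},s_j]$ and $b_i=w_u/p_u$ for the job $u$ processed at time $i$, shows $b_i\le a_i$ pointwise, and then equality of the sums forces $b_i=a_i$ everywhere. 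You instead work at the block level with the partial sums $S_\ell=\sum_{\ell'\le\ell}(e_{\ell'}-t_{\ell'})$ and use Abel summation to write $\sum_\ell c_\ell(e_\ell-t_\ell)$ as a nonnegative combination of the nonpositive $S_\ell$. These are the same inequality chain viewed from two sides (discrete integration by parts); your version is arguably cleaner because it works directly with the EDD feasibility characterization of the early set and avoids reasoning about which job occupies which individual time slot.
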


It is possible to formulate this result in terms of AND-composition but we chose not to work with this framework since it is tailored for refuting kernelization and relies on concepts that do not fit into our regime (e.g., polynomial relation~\cite{fomin2019kernelization}).

Our second hardness result involves {\sc Group-$S_k$ Subset Sum[$k$]}: a variant of {\sc Subset Sum} on permutation groups.
Such groups contain exponentially-large cyclic subgroups (see \Cref{lem:prelim:landau}) so this problem is at least as hard as {\sc Group-$\zz_q$ Subset Sum[$\log q$]} (which is equivalent to \sslog).
We reduce from {\sc 3Coloring} parameterized by pathwidth which is at least as hard as \andsat with respect to PPT.
Indeed, we can transform each 3SAT formula in the input (each of size $\Oh(k^3)$) into an instance of {\sc 3Coloring} via the standard NP-hardness proof, and take the disjoint union of such instances, which implies \andsat $\le_{\textsc{ppt}}$ {\sc 3Coloring[pw]}.
Notably, the reduction in the other direction is unlikely (see \Cref{lem:remain:3color-mk2}) so 
{\sc 3Coloring[pw]} is probably harder than \andsat.

\begin{restatable}{theorem}{thmMainPermutation}
\label{thm:main:permutation} 
    {\sc 3Coloring[pw]} $\le_{\textsc{nppt}}$ {\sc Group-$S_k$ Subset Sum[$k$]}.
\end{restatable}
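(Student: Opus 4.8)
The plan is to simulate, with a product of permutations, the textbook dynamic-programming algorithm for {\sc 3Coloring} along a path decomposition, storing its entire state inside a symmetric group of degree polynomial in the pathwidth. We may assume the pathwidth $p$ of the input $G$ satisfies $p\ge\log n$ -- otherwise {\sc 3Coloring} is solvable in time $3^{\Oh(p)}\poly(n)=\poly(n)$ and the reduction outputs a fixed trivial instance -- so that $\log n$ and $\log^2 n$ are $\poly(p)$ (cf.\ the trade-off around \Cref{lem:prelim:log}). Fix a nice path decomposition of width $p$; pick for each edge $uv$ a bag containing both endpoints at which it will be \emph{processed}; and fix a map $s$ assigning each vertex a \emph{slot} in $\{0,\dots,p\}$ so that vertices sharing a bag get distinct slots -- a proper coloring of an interval graph of clique number $\le p+1$, which exists.

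We work in $S_k$ with $k=\Oh(p+\log^2 n)=\poly(p)$, whose ground set is split into $p+1$ \emph{color cells} of three points each (with $\rho_i$ the $3$-cycle on cell $i$, and $\rho_i^{\,c}$ encoding the color $c\in\zz_3$ of the vertex currently in slot $i$), a constant number of scratch cells, and an \emph{error counter} realizing $\zz_N$ for $N$ a product of the first few primes exceeding $|E(G)|$; by the Chinese Remainder Theorem this $\zz_N$ embeds in $S_{\Oh(\log^2 n)}$ with ``$+1$'' an honest permutation of the points. Sweeping the decomposition left to right we emit one block of items per operation: an \emph{introduce} of $v$ into slot $i$ emits $\rho_i$ and $\rho_i^{\,2}$, letting a solver multiply cell $i$ by any element of $\langle\rho_i\rangle$ and so assign $v$ any color, which we \emph{declare} to be $v$'s color in the extracted assignment; a \emph{forget} emits nothing (a residue left in a cell is harmless: it only relabels later occupants of that slot, and the final target forces each cell's total rotation to vanish, which is always achievable); processing an edge $uv$ emits a gadget (below) that restores cells $s(u),s(v)$ and all scratch cells and adds $1$ to the counter exactly when those two cells hold equal values. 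After the sweep we append a cleanup block allowing arbitrary rotation of every color cell, and set the target $t$ to be the identity on all color/scratch cells and $0$ on the counter. Correctness is then routine: a proper coloring yields a sub-collection of items multiplying to $t$; conversely any sub-collection with product $t$ induces a coloring of $G$ (read off the content of cell $s(v)$ during $v$'s lifetime, which is well defined since vertices sharing a slot never coexist and edge gadgets restore the cells they touch), and driving the counter back to $0$ -- with no wrap-around, since $N>|E(G)|$ -- forces every edge to be bichromatic.

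The main obstacle is the edge gadget: realizing ``add $1$ to the counter iff cells $s(u)$ and $s(v)$ agree, otherwise be transparent, and never disturb those two cells''. No single permutation achieves this, because the ground set is a disjoint union of cells, so a fixed bijection of it induces a product of independent permutations on the individual cell contents and cannot couple the counter to a comparison of two other cells. The gadget must instead be a short scripted sequence of items forcing the solver to first \emph{expose} the two colors onto scratch cells -- the content of a cell being transferred by sandwiching rotations between the point-set-swapping transposition, i.e.\ via conjugation -- then compare the copies by rotating one against the other and update the counter, and finally undo everything; the nondeterminism permitted by $\le_{\textsc{nppt}}$ is spent to fix this bookkeeping so that the only sub-collection closing the sequence off consistently with $t$ is the honest one. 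Proving that this gadget is sound -- that no ``cheating'' sub-collection reaches $t$ past a monochromatic edge -- is the technical heart of the argument.
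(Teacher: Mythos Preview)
Your proposal has a genuine gap: the edge gadget is never built. You explicitly identify it as ``the main obstacle'' and ``the technical heart of the argument'', describe what it \emph{should} do (compare two color cells and conditionally bump a counter, then restore everything), and then stop. The sentence ``the nondeterminism permitted by $\le_{\textsc{nppt}}$ is spent to fix this bookkeeping so that the only sub-collection closing the sequence off consistently with $t$ is the honest one'' is not a construction; it is a restatement of the goal. Nothing in the write-up specifies which $\poly(p)$ bits are guessed, which items are emitted per edge, or why a dishonest solver cannot route around a monochromatic edge while still hitting the target.

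There is also a conceptual obstacle you brush past. In {\sc Group-$S_k$ Subset Sum} the solver's only freedom is to pick a subsequence; the resulting product is a fixed composition of the chosen permutations and cannot ``branch'' on the current state of a cell. To simulate a conditional you must offer one optional item per possible state and then \emph{force} the solver to take exactly the item matching the true state---and you need a mechanism that makes every wrong choice detectable at the target. Your conjugation idea (swap points, rotate, swap back) moves information between cells, but it does not by itself prevent the solver from, say, skipping the whole block, or pairing a ``copy'' with a mismatched ``uncopy''. The paper solves exactly this forcing problem, and not the way you sketch: it first passes through the intermediate {\sc 0-1 Counter Machine} (where required/optional items and a special $S$-counter enforce ``pick exactly one of these''), and then encodes the $0/1$-run condition algebraically via the semidirect product $U_q=\zz_q^2\rtimes\zz_2$ (\Cref{lem:perm:run-group-U}). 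The point is that in $U_q$ the group law itself is conditional (the $z$-coordinate flips which coordinate receives the addend), so no explicit comparison gadget is needed; the nondeterminism is used once at the end to guess the exponents $n_1,\dots,n_\ell$ in the target, not for per-edge bookkeeping. If you want to push your direct approach through, you would need to supply a concrete per-edge item list together with a soundness proof that no cheating subsequence reaches $t$; as written, that argument is missing.
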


Consequently, {\sc Group-$S_k$ Subset Sum[$k$]}
does not admit a polynomial certificate assuming \Cref{conj:certificate} and NP $\not\sub$ coNP/poly.
Unlike \Cref{thm:main:scheduling}, this time establishing hardness requires a nondeterministic reduction.
An interesting feature of {\sc 3Coloring[pw]} is that it is NL-complete under logspace reductions when
the pathwidth {\sc pw} is restricted to $\Oh(\log n)$~\cite{AllenderCLPT14, PilipczukWrochna18}. 
On the other hand, {\sc Subset Sum} can be solved in time $\tilde{\Oh}(tn^2)$ and space $\text{polylog}(tn)$ using algebraic techniques~\cite{JinVW21}.
Therefore, obtaining a logspace PPT from {\sc 3Coloring[pw]} to \sslog (where \textsc{pw} = $\log n$ implies $t = 2^{\text{polylog}(n)}$) 
would lead to a surprising consequence: a proof that 
NL $\sub$ DSPACE($\text{polylog}(n)$) that is significantly different from Savitch's Theorem (see also discussion in~\cite[\S 1]{PilipczukWrochna18} on low-space determinization).
This suggests that a hypothetical reduction to \sslog should either exploit the ``full power'' of NPPT (so it cannot be improved to a logspace PPT) or 
start directly from \andsat.

 Finally, we examine the case of the group family $\zz^k_k$ on which {\sc Subset Sum} is still NP-hard (as this generalizes {\sc Subset Sum} on cyclic groups) 
 but enjoys a polynomial certificate.
Specifically, 
we exploit the bound on the maximal order of an element in $\zz^k_k$ to prove that
 there always exists a solution of bounded size.

\begin{restatable}[$\bigstar$]{lemma}{thmZkk}
    {\sc Group-$\zz^k_k$ Subset Sum[$k$]} admits a polynomial certificate.
\end{restatable}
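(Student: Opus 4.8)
The plan is to show that every yes-instance of {\sc Group-$\zz^k_k$ Subset Sum} admits a solution using only $\poly(k)$ items, and then to take (the indices of) such a short solution as the certificate. Since $\zz^k_k$ is abelian, a solution is simply a sub-multiset of the input items whose sum (in $\zz^k_k$) equals the target $t$, and the relevant quantity turns out to be the \emph{Davenport constant} $D(\zz^k_k)$, i.e.\ the least $D$ such that every sequence of $D$ elements of $\zz^k_k$ contains a non-empty subsequence summing to the identity.

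First I would argue that a smallest solution is short. Take a yes-instance and let $S \subseteq [n]$ be an inclusion-wise smallest solution. If $|S| \ge D(\zz^k_k)$, then the multiset $(g_i)_{i\in S}$ contains a non-empty zero-sum sub-multiset $(g_i)_{i \in S'}$ with $\emptyset \ne S' \subseteq S$; removing $S'$ leaves a set $S \setminus S'$ with the same sum $t$ but strictly fewer elements, contradicting minimality. Hence $|S| < D(\zz^k_k)$. Now I would invoke the classical polynomial upper bound on the Davenport constant of groups of small exponent (van Emde Boas--Kruyswijk, Meshulam): $D(G) \le \exp(G)\br{1 + \ln\br{|G|/\exp(G)}}$. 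The group $\zz^k_k$ has exponent $k$ --- this is exactly the ``bound on the maximal order of an element'' --- and order $k^k$, so $D(\zz^k_k) \le k\br{1 + (k-1)\ln k} = \Oh(k^2\log k)$. (For prime-power $k$ one even gets the exact value $D(\zz^k_k) = k(k-1)+1$ from Olson's theorem for $p$-groups, but the general bound is all we need.) Consequently a smallest solution has at most $\Oh(k^2\log k)$ items.

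The certificate is then the list of the at most $\Oh(k^2\log k)$ indices of a smallest solution, written in binary using $\Oh(k^2\log k \cdot \log n)$ bits; verification just reads off the indicated items, sums them in $\zz^k_k$, and compares with $t$, which is clearly polynomial-time. This is a certificate of size $\poly(k + \log n)$. Since {\sc Group-$\zz^k_k$ Subset Sum} is solvable in time $\Oh(|\zz^k_k|\cdot n) = 2^{\Oh(k\log k)}\cdot n$ by the textbook dynamic program maintaining the set of reachable group elements, we have a problem solvable in time $2^{\poly(k)}\poly(n)$, so \Cref{lem:prelim:log} upgrades the $\poly(k+\log n)$-certificate to one of size $\poly(k)$.

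\textbf{Main obstacle.} The bound on $D(\zz^k_k)$ is the crux: the naive pigeonhole argument on partial sums only yields $|S| < |\zz^k_k| = k^k$, which is hopelessly large, and a cyclic group $\zz_q$ genuinely forces solutions of size $q-1$, so the small exponent of $\zz^k_k$ must really be exploited. The cleanest route is to quote the known polynomial bound on the Davenport constant of small-exponent groups rather than to reprove the underlying averaging/polynomial-method argument; everything else --- minimality of $S$, the zero-sum removal step, and the trade-off with $\log n$ --- is routine.
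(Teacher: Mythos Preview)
Your proposal is correct and follows essentially the same approach as the paper: both invoke Meshulam's bound on zero-sum subsequences (equivalently, the Davenport constant of a group of small exponent) to show that a minimal solution has at most $\Oh(k^2\log k)$ items, and then use such a short solution as the certificate. The only cosmetic difference is that the paper encodes the certificate as the list of \emph{group elements} (each in $k\log k$ bits, so $k^3\log^2 k$ bits in total, already $\poly(k)$), whereas you list \emph{indices} and then invoke \Cref{lem:prelim:log} to absorb the $\log n$ factor; both routes are equally valid.
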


In summary, {\sc Group-$G$ Subset Sum} appears easy for $G=\zz^k_k$ (due to bounded maximal order), hard for $G = S_k$ (due to non-commutativity), and the case $G = \zz_{2^k}$ lies somewhere in between. 
In the light of \Cref{thm:main:equivalent}, tightening this gap seems a promising avenue to settle \Cref{conj:ss}.

\subparagraph{Organization of the paper.}
We begin with the preliminaries where we formally introduce the novel concepts, such as NPPT.
We prove Theorems~\ref{thm:main:equivalent} and \ref{thm:main:permutation} in Sections \ref{sec:subsetsum} and \ref{sec:perm}, respectively.
The proofs marked with ($\bigstar$) can be found in the appendix.

\section{Preliminaries}
\label{sec:prelims}

We denote the set $\{1,\dots,n\}$ by $[n]$.
For a sequence $x_1, x_2, \dots, x_n$, its subsequence is any sequence of the form $x_{i_1}, \dots, x_{i_m}$ for some choice of increasing indices $1 \le i_1 < \dots <i_m \le n$.
All considered logarithms are binary.

A parameterized problem $P$ is formally defined as a subset of $\Sigma^* \times \nn$.
For the sake of disambiguation, whenever we refer to a parameterized problem, we denote the choice of the parameter in the $[\cdot]$ bracket, e.g., {\sc 3Coloring[pw]}.
We call $P$ {\em fixed-parameter tractable} (FPT) is the containment $(I,k) \in P$ can be decided in time $f(k)\cdot\poly(|I|)$ for some computable function $f$.
We say that $P$ admits a {\em polynomial compression} into a problem $Q$ if there is a polynomial-time algorithm that transform $(I,k)$ into an equivalent instance of $Q$ of size $\poly(k)$.
If $Q$ coincides with the non-parameterized version of $P$ then such an algorithm is called a {\em polynomial kernelization}.
A {\em polynomial Turing kernelization} for $P$ is a polynomial-time algorithm that determines if $(I,k) \in P$ using an oracle that can answer if $(I',k') \in P$ whenever $|I'|+k' \le \poly(k)$.

\begin{definition}
    Let $P \sub \Sigma^* \times \nn$ be a parameterized problem.
    We say that $P$ has a \underline{polynomial certificate} if there is an algorithm ${\mathcal A}$ that, given an instance $(I,k)$ of $P$ and a string $y$ of $\poly(k)$ bits, runs in polynomial time and accepts or rejects $(I,k)$ with the following guarantees.
    \begin{enumerate}
        \item If $(I,k) \in P$, then there exists $y$ for which ${\mathcal A}$ accepts.
        \item If $(I,k) \not\in P$, then ${\mathcal A}$ rejects  $(I,k)$ for every $y$.
    \end{enumerate}
\end{definition}

\begin{lemma}
    Let $P \sub \Sigma^* \times \nn$ and $Q \sub \Sigma^*$.
    Suppose that $Q \in NP$ and $P$ admits a polynomial compression into $Q$.
    Then $P$ admits a polynomial certificate.
\end{lemma}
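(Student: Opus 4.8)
The plan is to chain the two hypotheses together in the obvious way. Let $f$ be the polynomial-time compression algorithm that maps an instance $(I,k)$ of $P$ to an instance $I'$ of $Q$ with $|I'| \le p(k)$ for some polynomial $p$, and with the equivalence $(I,k) \in P \iff I' \in Q$. Since $Q \in NP$, there is a polynomial-time verifier $V$ and a polynomial $q$ such that $I' \in Q$ if and only if there exists a string $w$ of length at most $q(|I'|)$ with $V(I',w)$ accepting. The certificate for $P$ will simply be such a witness string $w$ for the compressed instance.

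First I would define the certification algorithm $\mathcal{A}$: on input $(I,k)$ together with a candidate string $y$, compute $I' = f(I,k)$ in polynomial time, then run the NP-verifier $V(I', y)$ and accept if and only if $V$ accepts. The running time is polynomial in $|I| + k$ since $f$ runs in polynomial time, $|I'| \le p(k) \le \poly(|I|+k)$, and $V$ runs in time polynomial in $|I'| + |y|$.

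Next I would check the two required guarantees. If $(I,k) \in P$, then by correctness of the compression $I' \in Q$, so by the NP property there is a witness $y$ of length at most $q(|I'|) \le q(p(k)) = \poly(k)$ on which $V$, hence $\mathcal{A}$, accepts; this is the string whose existence is required. Conversely, if $(I,k) \notin P$, then $I' \notin Q$, so $V(I',y)$ rejects for every $y$, and hence $\mathcal{A}$ rejects for every $y$. The only point needing a word of care is that the certificate length $q(p(k))$ is indeed polynomial in $k$ (not in $|I|$): this is exactly where we use that the compression target has size bounded by a polynomial in the \emph{parameter}, so that the NP-witness inherits a parameter-bounded length. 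There is no real obstacle here; the lemma is essentially a bookkeeping composition, and the only subtlety worth flagging is ensuring we feed $V$ the compressed instance $I'$ rather than the original $(I,k)$, so that the witness size is controlled by $p(k)$.
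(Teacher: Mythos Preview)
Your proposal is correct and follows essentially the same approach as the paper: compress $(I,k)$ to $I'$ of size $\poly(k)$, then use the NP witness for $I'$ as the certificate, noting that its length is $\poly(|I'|) = \poly(k)$. The paper's proof is just a terser version of exactly this argument.
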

\begin{proof}
    For a given instance $(I,k)$ of $P$ we execute the compression algorithm to obtain an equivalent instance $I'$ of $Q$ of size $\poly(k)$.
    Since  $Q \in$ NP the instance $I'$ can be solved in polynomial-time with an access to a string $y$ of $\poly(|I'|) = \poly(k)$ nondeterministic bits.
    Then $y$ forms a certificate for $(I,k)$.
\end{proof}

\begin{definition}
    Let $P,Q \sub \Sigma^* \times \nn$ be parameterized problems.
    An algorithm ${\mathcal A}$ is called a \underline{polynomial parameter transformation} (PPT) from $P$ to $Q$ if, given an instance $(I,k)$ of $P$, runs in polynomial time, and outputs an equivalent instance $(I',k')$ of $Q$ with $k' \le \poly(k)$.

    An algorithm ${\mathcal B}$ is called a \underline{nondeterministic polynomial parameter transformation} (NPPT) from $P$ to $Q$ if, given an instance $(I,k)$ of $P$ and a string $y$ of $\poly(k)$ bits, runs in polynomial time, and outputs an instance $(I',k')$ of $Q$ with the following guarantees.
    \begin{enumerate}
        \item $k' \le \poly(k)$
        \item If $(I,k) \in P$, then there exists $y$ for which ${\mathcal B}$ outputs $(I',k') \in Q$.
        \item If $(I,k) \not\in P$, then ${\mathcal B}$ outputs $(I',k') \not\in Q$ for every $y$.
    \end{enumerate}
\end{definition}

Clearly, PPT is a special case of NPPT.
We write $P \le_{\textsc{ppt}} Q$ ($P \le_{\textsc{nppt}} Q$) if there is a (nondeterministic) PPT from $P$ to $Q$.
We write $P \equiv_{\textsc{ppt}} Q$ ($P \equiv_{\textsc{nppt}} Q$) when we have reductions in both directions.
It is easy to see that the relation $\le_{\textsc{nppt}}$ is transitive.
Similarly as the relation $\le_{\textsc{ppt}}$ is monotone with respect to having a polynomial kernelization, the relation $\le_{\textsc{nppt}}$
is monotone with respect to having a polynomial certificate. 

\begin{lemma}
     Let $P,Q \in \Sigma^* \times \nn$ be parameterized problems. If  $P \le_{\textsc{nppt}} Q$ and $Q$ admits a polynomial certificate then $P$ does as well.
\end{lemma}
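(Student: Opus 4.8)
The plan is to compose the NPPT from $P$ to $Q$ with the polynomial certificate algorithm for $Q$, concatenating their nondeterministic advice strings. Concretely, suppose $\mathcal{B}$ is a NPPT from $P$ to $Q$ that, on input $(I,k)$ and an advice string $y_1$ of length $\poly(k)$, runs in polynomial time and outputs an instance $(I',k')$ of $Q$ with $k' \le \poly(k)$, satisfying the two guarantees in the definition. Suppose also that $\mathcal{A}$ is a polynomial-certificate algorithm for $Q$ that, on input $(I',k')$ and an advice string $y_2$ of length $\poly(k')$, runs in polynomial time and accepts or rejects with the stated one-sided guarantees.

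First I would describe the combined algorithm $\mathcal{C}$: on input $(I,k)$ of $P$ together with an advice string $y = (y_1, y_2)$, it runs $\mathcal{B}(I,k,y_1)$ to produce $(I',k')$, and then runs $\mathcal{A}(I',k',y_2)$, accepting if and only if $\mathcal{A}$ accepts. Next I would verify the length bound on the advice: since $k' \le \poly(k)$, the length of $y_2$ is $\poly(k') = \poly(\poly(k)) = \poly(k)$, and $|y_1| = \poly(k)$, so $|y| = \poly(k)$ as required. Polynomial running time is immediate since both $\mathcal{B}$ and $\mathcal{A}$ run in polynomial time and the intermediate instance $(I',k')$ has size polynomial in the input.

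Then I would check correctness in both directions. If $(I,k) \in P$, then by the second guarantee for $\mathcal{B}$ there is a choice of $y_1$ with $(I',k') \in Q$; by the first guarantee for $\mathcal{A}$ there is then a choice of $y_2$ making $\mathcal{A}$ accept $(I',k')$; hence $\mathcal{C}$ accepts on $y = (y_1,y_2)$. Conversely, if $(I,k) \notin P$, then by the third guarantee for $\mathcal{B}$, for \emph{every} $y_1$ the output satisfies $(I',k') \notin Q$, and then by the second guarantee for $\mathcal{A}$, $\mathcal{A}$ rejects $(I',k')$ for every $y_2$; hence $\mathcal{C}$ rejects on every $y$. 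This establishes that $\mathcal{C}$ is a polynomial certificate for $P$.

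This proof is entirely routine; there is no real obstacle. The only point requiring a moment of care is that the composition of two polynomial bounds is still polynomial, so that the advice string $y_2$ fed to $\mathcal{A}$ — whose length depends on the \emph{intermediate} parameter $k'$ — remains of length $\poly(k)$; this is exactly where the condition $k' \le \poly(k)$ in the definition of NPPT is used. One should also note that $|I'|$ need not be polynomial in $k$, but this causes no issue since the certificate size for $Q$ is measured against the parameter $k'$, not $|I'|$.
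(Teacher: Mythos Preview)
Your proof is correct and follows exactly the same approach as the paper: compose the NPPT with the certificate verifier for $Q$, guessing the concatenated advice $(y_1,y_2)$ and using $k' \le \poly(k)$ to bound $|y_2|$. The paper's version is just a two-sentence sketch of the same argument, whereas you spell out the running time and both directions of correctness explicitly.
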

\begin{proof}
    Given an instance $(I,k)$ of $P$ the algorithm guesses a string $y_1$ of length $\poly(k)$ guiding the reduction to $Q$ and constructs an instance $(I',k')$ with $k' = \poly(k)$.
    Then it tries to prove that $(I',k') \in Q$ by guessing a certificate $y_2$ of length  $\poly(k') = \poly(k)$.
\end{proof}

A different property transferred by PPT is {polynomial Turing kernelization.}
Hermelin et al.~\cite{HermelinKSWW15} proposed a hardness framework for this property by considering complexity classes closed under PPT (the WK-hierarchy).

Next, we prove the equivalence mentioned in the Introduction.

\begin{lemma}\label{lem:prelim:log}
    Suppose $P \sub \Sigma^* \times \nn$ admits an algorithm $\mathcal{A}$ deciding if $(I,k) \in P$ in time $2^{p(k)}\poly(|I|)$  where $p(\cdot)$ is a polynomial function. 
    Then $P[k] \equiv_{\textsc{ppt}} P[k+\log |I|]$.
\end{lemma}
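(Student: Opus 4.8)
The claim is an equivalence $P[k] \equiv_{\textsc{ppt}} P[k+\log|I|]$, so I need PPT reductions in both directions. One direction is trivial: the identity map is a PPT from $P[k]$ to $P[k+\log|I|]$, since it leaves the instance untouched and the new parameter $k' = k + \log|I|$ is bounded by $\poly(k)$ only when $\log|I| \le \poly(k)$ — wait, that is \emph{not} always true, so in fact the genuinely easy direction is the other one. Let me reconsider: from $P[k+\log|I|]$ to $P[k]$, the identity map again keeps the instance but now the new parameter is just $k$, and we must check $k \le \poly(k + \log|I|)$, which is obvious. So $P[k+\log|I|] \le_{\textsc{ppt}} P[k]$ is immediate via the identity reduction.

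The interesting direction is $P[k] \le_{\textsc{ppt}} P[k+\log|I|]$, and here I would exploit the assumed algorithm $\mathcal A$ running in time $2^{p(k)}\poly(|I|)$ to perform a win/win argument. Given $(I,k)$, first compare $p(k)$ with $\log|I|$. If $p(k) \ge \log|I|$, i.e. $|I| \le 2^{p(k)}$, then $\log|I| \le p(k) = \poly(k)$, so the identity reduction already maps $(I,k)$ to an instance of $P[k+\log|I|]$ whose parameter is $k + \log|I| \le k + p(k) = \poly(k)$; output $(I, k+\log|I|)$. If instead $p(k) < \log|I|$, i.e. $2^{p(k)} < |I|$, then the running time of $\mathcal A$ on $(I,k)$ is $2^{p(k)}\poly(|I|) < |I|\cdot\poly(|I|) = \poly(|I|)$, so we can simply run $\mathcal A$ in polynomial time, decide the instance outright, and output a trivial fixed yes- or no-instance of $P[k+\log|I|]$ of constant size (with constant parameter) accordingly.

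I should double-check the bookkeeping: in the first case the output parameter is a polynomial in $k$ by construction; in the second case it is a constant; in both cases the reduction runs in polynomial time (the case distinction $p(k) \lessgtr \log|I|$ is computable in polynomial time since $p$ is a fixed polynomial, and computing $\log|I|$ means just reading the length of the encoding). Correctness is immediate: in the first case the instance is literally unchanged, and in the second case the reduction has already solved it correctly using $\mathcal A$. Combining both directions gives the stated equivalence.

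The only mild subtlety — and the step I would be most careful about — is ensuring the trade-off in the second case is stated for the right quantity: one wants $2^{p(k)}\poly(|I|)$ to collapse to $\poly(|I|)$, which needs $2^{p(k)} \le |I|^{O(1)}$, and $p(k) < \log|I|$ gives exactly $2^{p(k)} < |I|$. I would also remark that the lemma is the formal counterpart of the informal observation in the Introduction that for problems solvable in time $2^{\poly(k)}\poly(n)$ the parameterizations by $k$, by $k+\log n$, and (via this lemma together with the argument there) effectively by $\poly(k)$-sized certificates all coincide.
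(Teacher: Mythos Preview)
Your argument is correct and essentially identical to the paper's: the direction $P[k+\log|I|]\le_{\textsc{ppt}} P[k]$ is trivial, and for $P[k]\le_{\textsc{ppt}} P[k+\log|I|]$ you do the same win/win on whether $p(k)\lessgtr \log|I|$, running $\mathcal A$ in polynomial time in one branch and passing the instance through with parameter $k+\log|I|\le k+p(k)$ in the other. The only cosmetic difference is that you spell out the initial misstep about which direction is trivial before correcting yourself; the actual proof content matches the paper exactly.
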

\begin{proof}
    The direction  $P[k+\log |I|] \le_{\textsc{ppt}} 
 P[k]$ is trivial.
    To give a reduction in the second direction, we first check if $p(k) \le \log |I|$.
    If yes, we execute $\mathcal{A}$ in time $\poly(|I|)$ and according to the outcome we return a trivial yes/no-instance.
    Otherwise we have $\log |I| < p(k)$ so we can output $(I,k')$ for the new parameter $k' = k + \log |I|$ being polynomial in $k$.
\end{proof}

\subparagraph{Pathwidth.}
A path decomposition of a graph $G$ is a
sequence $\mathcal{P} = (X_1,X_2,\dots,X_r)$ of {\em bags}, where $X_i \sub V(G)$, and:
\begin{enumerate}
    \item For each~$v \in V(G)$ the set~$\{i \mid v \in X_i\}$ forms a {non-empty} subinterval of $[r]$. 
    \item For each edge~$uv \in E(G)$ there is $i \in [r]$ with~$\{u,v\} \subseteq X_i$.
\end{enumerate}

The \emph{width} of a path decomposition is defined as~$\max_{i=1}^r |X_i| - 1$. The \emph{pathwidth} of a graph~$G$ is the minimum width of a path decomposition of~$G$.

\begin{lemma}{\cite[Lemma 7.2]{cygan2015parameterized}}
\label{prelim:pathwidth:nice}
    If a graph $G$ has pathwidth at most $p$, then it admits a {\em nice} path decomposition $\mathcal{P} = (X_1,X_2,\dots,X_r)$ of width at most $p$, for which:
\begin{itemize}
    \item $X_1 = X_r = \emptyset$.
    \item For each $i \in [r-1]$ there is either a vertex $v \not\in X_i$ for which $X_{i+1} = X_i \cup \{v\}$ or a vertex $v \in X_i$ for which $X_{i+1} = X_i \sm \{v\}$.
\end{itemize}
Furthermore, given any path decomposition of $G$, we can turn it into a nice path decomposition of no greater width, in polynomial time.
\end{lemma}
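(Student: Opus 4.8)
The plan is to start from an arbitrary path decomposition $\mathcal{Q} = (Y_1,\dots,Y_s)$ of $G$ of width at most $p$. Such a decomposition exists by the very definition of pathwidth, which already settles the existence part of the statement; for the ``furthermore'' part it is handed to us as input. I would then massage $\mathcal{Q}$ in two cheap steps. First, pad the sequence with empty bags at both ends, replacing $\mathcal{Q}$ by $(\emptyset, Y_1,\dots,Y_s, \emptyset)$; this changes neither the width nor the edge-coverage property, and it keeps the occurrences of each vertex a non-empty interval, since the two new bags contain no vertex at all. Second, and this is the heart of the construction, interpolate between every pair of consecutive bags $Z_i, Z_{i+1}$ of the padded sequence: writing $Z_i \setminus Z_{i+1} = \{a_1,\dots,a_\ell\}$ and $Z_{i+1}\setminus Z_i = \{b_1,\dots,b_m\}$, insert the chain
\[
Z_i,\ Z_i\setminus\{a_1\},\ \dots,\ Z_i\cap Z_{i+1},\ (Z_i\cap Z_{i+1})\cup\{b_1\},\ \dots,\ Z_{i+1},
\]
first forgetting the $a_j$ one at a time and then introducing the $b_j$ one at a time. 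Every inserted bag is contained in $Z_i$ or in $Z_{i+1}$, so its size is at most $p+1$ and the width never grows; consecutive bags now differ by exactly one vertex, and after deleting repeated consecutive bags at the seams (and possibly the duplicated empty bag when $Y_1$ or $Y_s$ was already empty) we obtain a decomposition $\mathcal{P} = (X_1,\dots,X_r)$ of exactly the required ``nice'' shape with $X_1 = X_r = \emptyset$.

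It then remains to verify that $\mathcal{P}$ is still a valid path decomposition. The edge property is immediate: every original bag $Y_i$ still occurs verbatim in $\mathcal{P}$, so every edge covered before is still covered. For the interval property I would argue block by block. Inside the interpolation block between $Z_i$ and $Z_{i+1}$: a vertex $v \in Z_i\cap Z_{i+1}$ is never among the forgotten $a_j$ nor among the introduced $b_j$, so it appears in all of the inserted bags of that block; a vertex $v\in Z_i\setminus Z_{i+1}$ appears in a prefix of the block and then vanishes within it; a vertex $v\in Z_{i+1}\setminus Z_i$ appears in a suffix of the block. Gluing these local pictures along the sequence of blocks, and using that $v$ already occupied a contiguous run of the original $Z_i$'s, one checks that the occurrences of $v$ in $\mathcal{P}$ again form a single contiguous interval, which is non-empty because $v$ occurs in at least one original bag. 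Finally, the transformation touches $\Oh(s\cdot p)$ bags, each of size $\Oh(p)$, and is evidently computable in time polynomial in $s$ and in the size of $G$, which gives the polynomial-time claim.

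I do not expect a genuine obstacle here; the one place that needs care is the bookkeeping for the interval property \emph{across} the seams between consecutive interpolation blocks and at the two empty end bags — one must make sure that a vertex forgotten in one block and never reintroduced, versus a vertex of $Z_i\cap Z_{i+1}$ that persists through a block, together still yield one unbroken interval rather than two pieces, and that appending the empty boundary bags does not accidentally create an empty interval for some vertex. Once this short case analysis is written out in full, the lemma follows.
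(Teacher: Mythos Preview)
Your proof is correct and is essentially the standard construction. Note, however, that the paper does not supply its own proof of this lemma at all: it is quoted verbatim as \cite[Lemma 7.2]{cygan2015parameterized} and used as a black box, so there is nothing in the paper to compare your argument against. What you wrote matches the textbook proof (pad with empty bags, then interpolate between consecutive bags by forgetting and then introducing one vertex at a time), and your discussion of the interval property across block seams is the only point where any care is needed; your case analysis handles it.
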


The bags of the form $X_{i+1} = X_i \cup \{v\}$ are called {\em introduce bags} while the ones of the form $X_{i+1} = X_i \sm \{v\}$ are called {\em forget bags}.

Similarly as in the previous works~\cite{AllenderCLPT14, BodlaenderGNS21, PilipczukWrochna18} we assume that a path decomposition of certain width is provided with the input.
This is not a restrictive assumption for our model since pathwidth can be approximated within a polynomial factor in polynomial time~\cite{GroenlandJNW23}.


\subparagraph{Group theory.}
The basic definitions about groups can be found in the book~\cite{robinson2003introduction}.
A~homomorphism between groups $G,H$ is a mapping $\phi\colon G \to H$  that preserves the group operation, i.e., $\phi(x) \circ_H \phi(y) = \phi(x \circ_G y)$ for all $x,y \in G$.
An isomorphism is a bijective homomorphism and
an automorphism of $G$ is an isomorphism from $G$ to $G$. 
We denote by $Aut(G)$ the automorphism group of $G$ with the group operation given as functional composition.
A~subgroup $N$ of $G$ is {\em normal} if for every $g \in G, n \in N$ we have $g \circ_G n \circ_G g^{-1} \in N$.

The symmetric group $S_k$ comprises permutations over the set $[k]$ with the group operation given by composition.
For a permutation $\pi \in S_k$ we consider a directed graph over the vertex set $[k]$ and arcs given as $\{(v,\pi(v)) \mid v \in [k]\}$.
The cycles of this graph are called the cycles~of~$\pi$.

We denote by $\zz_k$ the cyclic group with addition modulo $k$.
We write the corresponding group operation as $\oplus_k$.
An order of an element $x \in G$ is the size of the cyclic subgroup of $G$ generated by $x$.
The Landau's function $g(k)$ is defined as the maximum order of an element $x$ in~$S_k$. It is known that $g(k)$ equals $\max \mathsf{lcm}(k_1,\dots,k_\ell)$ over all partitions $k = k_1 + \dots + k_\ell$ (these numbers correspond to the lengths of cycles in $x$) and that $g(k) = 2^{\Theta(\sqrt{k\log k})}$~\cite{Nicolas1997}. 
An element of large order can be found easily if we settle for a slightly weaker bound.

\begin{lemma}\label{lem:prelim:landau}
    For each $k$ there exists $\pi \in S_k$ of order $2^{\Omega({\sqrt k}/{\log k})}$ and it can be found in time $\poly(k)$.
\end{lemma}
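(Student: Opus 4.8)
The plan is to construct $\pi$ as a product of disjoint cycles whose lengths are distinct primes, so that the order of $\pi$ is the product of those primes. To get order $2^{\Omega(\sqrt{k}/\log k)}$ while using at most $k$ points in total, I would take the first $\ell$ primes $p_1 < p_2 < \dots < p_\ell$ and choose $\ell$ as large as possible subject to $\sum_{i=1}^\ell p_i \le k$. By standard prime-counting estimates (the prime number theorem, or even the weaker Chebyshev-type bounds that suffice here), $p_i = \Theta(i\log i)$ and hence $\sum_{i=1}^\ell p_i = \Theta(\ell^2 \log \ell)$; so we may take $\ell = \Theta(\sqrt{k/\log k})$. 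The remaining $k - \sum p_i$ points are left as fixed points (equivalently, padded with $1$-cycles), which does not affect the order.

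Next I would lower-bound the order, which equals $\prod_{i=1}^\ell p_i$ since the cycle lengths are coprime. Since each $p_i \ge 2$, we get $\prod_{i=1}^\ell p_i \ge 2^\ell = 2^{\Omega(\sqrt{k/\log k})}$, and $\sqrt{k/\log k} = \Omega(\sqrt{k}/\log k)$, which is exactly the claimed bound (in fact slightly stronger). For the algorithmic part, computing the first $\ell = O(\sqrt{k})$ primes up to value $O(\sqrt{k}\log k)$ can be done by a sieve in time $\poly(k)$, greedily adding primes until the running sum would exceed $k$; then one writes down the permutation $\pi$ explicitly on $[k]$ as the concatenation of cycles $(1\ 2\ \dots\ p_1)$, $(p_1{+}1\ \dots\ p_1{+}p_2)$, and so on, with the leftover elements mapped to themselves. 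This is clearly polynomial in $k$.

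The only real subtlety is the quantitative bookkeeping: one needs a prime-counting estimate strong enough to guarantee that the sum of the first $\ell$ primes is $O(\ell^2 \log \ell)$, so that $\ell$ can be taken as large as $c\sqrt{k/\log k}$ for a suitable constant $c$. This follows from elementary bounds ($p_i \le C i \log i$ for all $i$ and a suitable constant $C$, which is a consequence of $\pi(x) \ge c' x/\log x$), and summing gives $\sum_{i\le \ell} p_i \le C' \ell^2 \log \ell \le k$ for $\ell = c\sqrt{k/\log k}$ with $c$ small enough. There is nothing deep here — it is a weakened, easy-to-prove version of the asymptotics $g(k) = 2^{\Theta(\sqrt{k\log k})}$ for Landau's function, where we deliberately trade a $\log$ factor in the exponent for a completely elementary argument. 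I expect the write-up to be short; the main thing to be careful about is stating the prime estimate in a self-contained way and checking the constant in the exponent comes out as $\Omega(\sqrt{k}/\log k)$ rather than something smaller.
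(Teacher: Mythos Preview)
Your proposal is correct and follows essentially the same approach as the paper: build $\pi$ from disjoint cycles of distinct prime lengths and lower-bound the order by $2^\ell$. The only cosmetic difference is that the paper takes all primes below $\sqrt{k}$ (so the sum is trivially at most $\sqrt{k}\cdot\sqrt{k}=k$ and $\ell=\Theta(\sqrt{k}/\log k)$ by the prime number theorem), whereas you greedily take the first $\ell$ primes and invoke $p_i=\Theta(i\log i)$; your variant actually yields the slightly sharper $\ell=\Theta(\sqrt{k/\log k})$, but both meet the stated bound.
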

\begin{proof}
    Consider all the primes $p_1,\dots,p_\ell$ that are smaller than $\sqrt{k}$.
    By the prime number theorem there are $\ell = \Theta({\sqrt k}/{\log k})$ such primes~\cite{hardy1979introduction}.
    We have $p_1 + \dots + p_\ell \le {\sqrt k} \cdot {\sqrt k} = k$ so we can find a permutation in $S_k$ with cycles of lengths $p_1,\dots,p_\ell$ (and possibly trivial cycles of length 1).
    We have $\mathsf{lcm}(p_1,\dots,p_\ell) = \prod_{i=1}^\ell p_i \ge 2^\ell = 2^{\Omega({\sqrt k}/{\log k})}$.
\end{proof}

For two groups $N, H$ and a homomorphism $\phi \colon H \to Aut(N)$ we define the {\em outer semidirect product}~\cite{robinson2003introduction} $N \rtimes_\phi H$ as follows.
The elements of $N \rtimes_\phi H$ are $\{(n,h) \mid n \in N, h \in H\}$ and the group operation $\circ$ is given as $(n_1,h_1) \circ (n_2,h_2) = \br{n_1 \circ \phi_{h_1}(n_2), h_1 \circ h_2}$.
A special case of the semidirect product occurs when we combine subgroups of a common group.

\begin{lemma}[{\cite[{\S 4.3}]{robinson2003introduction}}]
\label{lem:prelim:semidirect}
Let $G$ be a group with a normal subgroup $N$ and a subgroup $H$, such that every element $g \in G$ can be written uniquely 
as $g = n\circ h$ for $n \in N, h \in H$.
Let $\phi \colon H \to Aut(N)$ be given as $\phi_h(n) = h\circ n \circ h^{-1}$ (this is well-defined because $N$ is normal in $G$).
Then $G$ is isomorphic to the semidirect product $N \rtimes_\phi H$.
\end{lemma}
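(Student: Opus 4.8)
The plan is to exhibit the obvious ``multiplication'' map and verify it is an isomorphism; this is the standard internal-versus-external semidirect product argument, so I will present it as a routine verification (and, as in the paper, one may simply cite~\cite[\S 4.3]{robinson2003introduction}).

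\textbf{Step 1: $\phi$ is a well-defined homomorphism into $Aut(N)$.} First I would check that for $h \in H \sub G$ and $n \in N$, normality of $N$ in $G$ gives $h \circ n \circ h^{-1} \in N$, so $\phi_h$ genuinely maps $N$ to $N$. It preserves the group operation (conjugation always does) and is a bijection, with two-sided inverse $\phi_{h^{-1}}$; hence $\phi_h \in Aut(N)$. That $\phi$ itself is a homomorphism follows from
\begin{align*}
\phi_{h_1 \circ h_2}(n) = (h_1 \circ h_2)\circ n \circ (h_1 \circ h_2)^{-1} = h_1 \circ \br{h_2 \circ n \circ h_2^{-1}} \circ h_1^{-1} = \phi_{h_1}\br{\phi_{h_2}(n)}.
\end{align*}

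\textbf{Step 2: the map $\psi$ and the homomorphism check.} Define $\psi \colon N \rtimes_\phi H \to G$ by $\psi(n,h) = n \circ h$. Unwinding the semidirect-product operation and substituting $\phi_{h_1}(n_2) = h_1 \circ n_2 \circ h_1^{-1}$,
\begin{align*}
\psi\br{(n_1,h_1)\circ(n_2,h_2)} &= \br{n_1 \circ \phi_{h_1}(n_2)} \circ \br{h_1 \circ h_2} = n_1 \circ h_1 \circ n_2 \circ h_1^{-1} \circ h_1 \circ h_2 \\
&= n_1 \circ h_1 \circ n_2 \circ h_2 = \br{n_1 \circ h_1}\circ\br{n_2 \circ h_2} = \psi(n_1,h_1)\circ\psi(n_2,h_2),
\end{align*}
so $\psi$ is a group homomorphism.

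\textbf{Step 3: bijectivity, from the factorization hypothesis.} The assumption that every $g \in G$ can be written \emph{uniquely} as $g = n \circ h$ with $n \in N$, $h \in H$ says exactly that $\psi$ is surjective (existence of the factorization) and injective (its uniqueness). Therefore $\psi$ is a bijective homomorphism, i.e.\ an isomorphism, and $G \cong N \rtimes_\phi H$. There is no genuine obstacle in this proof: the only point that requires attention is confirming in Step~1 that each $\phi_h$ is an automorphism of $N$ rather than merely an inner automorphism of $G$, which is precisely where normality of $N$ is used; everything else is a direct computation.
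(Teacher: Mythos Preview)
Your proof is correct and follows the standard internal-versus-external semidirect product argument. The paper does not provide its own proof of this lemma; it simply cites the result from~\cite[\S 4.3]{robinson2003introduction}, so there is nothing further to compare.
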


\defparproblem{Group-$G$ Subset Sum}{A sequence of elements $g_1, g_2,\dots,g_n \in G$, an element $g \in G$}{$\log |G|$}{Is there a subsequence $(i_1 < i_2 < \dots < i_r)$ of  $[n]$
such that $g_{i_1} \circ g_{i_2} \circ \dots \circ g_{i_r} = g$?}

We assume that the encoding of the group elements as well as the group operation $\circ$ are implicit for a specific choice of a group family.
For a group family parameterized by $k$, like $(S_k)_{k=1}^{\infty}$, we treat $k$ as the parameter.
In all considered cases it holds that $k \le \log |G| \le \poly(k)$ so these two parameterizations are equivalent under PPT.

\section{Equivalences}
\label{sec:subsetsum}

We formally introduce the variants of ILP that will be studied in this section.

\defparproblem{0-1 ILP}
{A matrix $A \in \{-1,0,1\}^{m \times n}$, a vector $b \in \zz^m$}
{$m$}
{Is there a vector $x \in \{0,1\}^n$ for which $Ax=b$?}

In {\sc Monotone 0-1 ILP} we restrict ourselves to matrices $A \in \{0,1\}^{m \times n}$.
In {\sc 0-Sum 0-1 ILP} we have $A \in \{-1,0,1\}^{m \times n}$ and we seek a binary vector $x \ne 0$ for which $Ax=0$.

{\bf We first check that all the parameterized problems considered in this section are solvable in time $2^{k^{\Oh(1)}}n^{\Oh(1)}$.}
For \sslog we can use the classic $\Oh(tn)$-time algorithm~\cite{bellman1958dynamic} which can be easily modified to solve {\sc Group-$\zz_q$ Subset Sum[$\log q$]} in time $\Oh(qn)$.
For {\sc Knapsack[$\log(p_{max} + w_{max})$]} there is an $\Oh(p_{max}\cdot w_{max}\cdot n)$-time algorithm~\cite{pisinger1999linear} which also works for the larger parameterization by $\log(t+w)$.
Next, \ilp can be solved in time $2^{\Oh(m^2\log m)}\cdot n$ using the algorithm for general matrix $A$~\cite{EisenbrandW20}.
This algorithm can be used to solve {\sc 0-sum 0-1 ILP[$m$]} due to \Cref{lem:ss:0sum:obs}.
Hence by \Cref{lem:prelim:log} we can assume in our reductions that $(\log n)$ is bounded by a 
polynomial function of the parameter.

\begin{lemma}
    {\sc Knapsack[$\log(t + w)$]} $\equiv_{\textsc{ppt}}$ {\sc Knapsack[$\log(p_{max} + w_{max})$]}.
\end{lemma}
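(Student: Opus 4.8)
The plan is to treat \textsc{Knapsack} as one problem under two parameterizations and observe that, after a handful of standard reduction rules, the two parameters differ by at most an additive $\Oh(\log n)$ term; the only real work is then to absorb this $\log n$ gap using \Cref{lem:prelim:log}. Recall that \textsc{Knapsack} is solvable in time $\Oh(p_{max}\cdot w_{max}\cdot n)=2^{\Oh(\log(p_{max}+w_{max}))}\cdot\poly(|I|)$, so \Cref{lem:prelim:log} applies to the parameterization by $\log(p_{max}+w_{max})$, and in a reduction starting from such an instance we may assume $\log n\le\poly(k)$ for $k=\log(p_{max}+w_{max})$.

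First I would set up the reduction rules, none of which changes the parameter by more than the claimed amount. (i) Discard every item $i$ with $p_i>t$: such an item cannot lie in a feasible subset, so the answer is unchanged, and afterwards $p_{max}\le t$. (ii) Replace each weight $w_i$ by $\min(w_i,w)$: a subset $S$ satisfies $\sum_{i\in S}w_i\ge w$ iff $\sum_{i\in S}\min(w_i,w)\ge w$ — the backward implication holds since $\min(w_i,w)\le w_i$, and for the forward one, if some member of $S$ already has $w_i\ge w$ then the capped sum is $\ge w$, while otherwise capping changes nothing on $S$ — so the answer is preserved and afterwards $w_{max}\le w$. (iii) Replace $t$ by $\min(t,\sum_i p_i)$: this does not alter the family of feasible subsets and yields $t\le n\cdot p_{max}$. (iv) If $w>\sum_i w_i$, output a fixed trivial no-instance; otherwise $w\le n\cdot w_{max}$. (Degenerate items with $p_i=0$ or $w_i=0$, and the trivial instances of rule (iv), are handled by obvious special cases.)

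With these in hand, the two reductions are short. For {\sc Knapsack[$\log(t+w)$]} $\le_{\textsc{ppt}}$ {\sc Knapsack[$\log(p_{max}+w_{max})$]} I would apply rules (i) and (ii) and return the instance unchanged; then $p_{max}+w_{max}\le t+w$, so the new parameter is at most the old one. For {\sc Knapsack[$\log(p_{max}+w_{max})$]} $\le_{\textsc{ppt}}$ {\sc Knapsack[$\log(t+w)$]} I would first invoke \Cref{lem:prelim:log} to assume $\log n\le\poly(k)$ with $k=\log(p_{max}+w_{max})$, then apply rules (iii) and (iv) and return the instance unchanged; by (iii)–(iv) we get $t+w\le n\cdot(p_{max}+w_{max})$, hence the new parameter satisfies $\log(t+w)\le\log n+\log(p_{max}+w_{max})\le\poly(k)$, as required for a PPT.

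The main obstacle is conceptual rather than technical: verifying that the weight-capping rule (ii) is answer-preserving, and recognizing that the inequality $\log(t+w)\le\log n+\log(p_{max}+w_{max})$ cannot be improved to a bound polynomial in $\log(p_{max}+w_{max})$ by elementary preprocessing — which is precisely why \Cref{lem:prelim:log} (equivalently, the assumption that $\log n$ is polynomially bounded by the parameter) is essential in the second direction and a plain parameter transformation would not suffice.
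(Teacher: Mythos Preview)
Your proof is correct and follows essentially the same approach as the paper: use simple preprocessing to bound one parameter by the other plus an additive $\Oh(\log n)$, then absorb the $\log n$ via \Cref{lem:prelim:log}. The paper treats the direction {\sc Knapsack[$\log(t+w)$]} $\le_{\textsc{ppt}}$ {\sc Knapsack[$\log(p_{max}+w_{max})$]} as immediate, whereas you spell it out (your weight-capping rule (ii) is a clean alternative to simply checking whether a single item already meets the weight threshold), and for the main direction your rules (iii)--(iv) are minor variants of the paper's ``if $p_{max}\cdot n<t$ take all items; if $w_{max}\cdot n<w$ reject''.
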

\begin{proof}
    We only need to show the reduction from {\sc Knapsack[$\log(p_{max} + w_{max})$]}.
    When $p_{max} \cdot n < t$ we can afford taking all the items. 
    On the other hand, if $w_{max} \cdot n < w$ then no solution can exist.
    Therefore, we can assume that  $\log t \le \log p_{max} + \log n$ and $\log w \le \log w_{max} + \log n$.
    By \Cref{lem:prelim:log} we can assume $\log n$ to be polynomial in $\log(p_{max} + w_{max})$ so the new parameter $\log(t + w)$ is polynomial in the original one.   
\end{proof}

\begin{lemma}
    \sslog $\equiv_{\textsc{nppt}}$ {\sc Knapsack[$\log(t + w)$]}.
\end{lemma}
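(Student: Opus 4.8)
I would prove the two reductions separately. The direction $\sslog \le_{\textsc{ppt}}$ {\sc Knapsack[$\log(t+w)$]} is immediate, since {\sc Subset Sum} is the special case of {\sc Knapsack} in which profit equals size: given items $a_1,\dots,a_n$ and target $t$, output the {\sc Knapsack} instance with $p_i = w_i = a_i$, size bound $t$, and weight demand $w := t$. A subset $S$ satisfies $\sum_{i\in S}p_i \le t$ and $\sum_{i\in S}w_i \ge t$ if and only if $\sum_{i\in S}a_i = t$, because $\sum_{i\in S}p_i = \sum_{i\in S}w_i$. The new parameter is $\log(t+w) = \log(2t) = \Oh(\log t)$.

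For the converse, {\sc Knapsack[$\log(t+w)$]} $\le_{\textsc{nppt}} \sslog$, I would first normalize the instance: delete every item with $p_i > t$ (it can never be used); output a trivial yes-instance if $w=0$ or if some surviving item has $w_i \ge w$; and output a trivial no-instance if $\sum_i w_i < w$. Afterwards every item satisfies $p_i \le t \le 2^k$ and $w_i < w \le 2^k$, so $\Sigma_p := \sum_i p_i$ and $\Sigma_w := \sum_i w_i$ are both at most $n\cdot 2^k$. Recall that the discussion preceding these lemmas lets us assume $\log n = \poly(k)$ (via \Cref{lem:prelim:log}, as {\sc Knapsack} is FPT), hence $\log\Sigma_p, \log\Sigma_w = \poly(k)$.

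The reduction then has two conceptual steps. First, using $\log\Sigma_p + \log\Sigma_w = \poly(k)$ nondeterministic bits, guess values $P^\star \in \{0,\dots,t\}$ and $Q^\star \in \{w,\dots,\Sigma_w\}$, intended to equal $\sum_{i\in S}p_i$ and $\sum_{i\in S}w_i$ for a hypothetical solution $S$; this reduces the problem to deciding whether some $S\sub[n]$ simultaneously has $\sum_{i\in S}p_i = P^\star$ and $\sum_{i\in S}w_i = Q^\star$, i.e.\ a two-dimensional {\sc Subset Sum}. Second, collapse the two coordinates by the standard base-$M$ packing: set $M := \Sigma_p + t + 1$, replace item $i$ by the single number $v_i := p_i + M\cdot w_i$, and set the target $T := P^\star + M\cdot Q^\star$. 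Since $\sum_{i\in S}p_i \le \Sigma_p < M$ and $P^\star \le t < M$, comparing both sides of $\sum_{i\in S}v_i = T$ modulo $M$ forces $\sum_{i\in S}p_i = P^\star$, and then $\sum_{i\in S}w_i = Q^\star$; conversely these two equalities clearly give $\sum_{i\in S}v_i = T$. The new parameter is $\log T \le \log M + \log Q^\star + 1 = \poly(k)$. Correctness of the NPPT follows by the routine case split: a {\sc Knapsack} solution $S$ yields a correct guess $P^\star = \sum_{i\in S}p_i \le t$, $Q^\star = \sum_{i\in S}w_i \ge w$ inside the guessed ranges that makes the output a yes-instance; and for \emph{any} guess with $P^\star\le t$, $Q^\star\ge w$, a set $S$ solving the {\sc Subset Sum} output automatically has $\sum_{i\in S}p_i \le t$ and $\sum_{i\in S}w_i \ge w$, so a yes output forces the original instance to be a yes-instance.

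The step requiring care is not the packing itself but the bookkeeping that keeps $\log T$ polynomial in $k$: this relies on the preprocessing that caps $p_i,w_i$ by $2^k$ and on the standing assumption $\log n = \poly(k)$, and on correctly turning {\sc Knapsack}'s two inequalities into exact equalities through the nondeterministic guesses. (If a genuine PPT is preferred, the guessing of $P^\star,Q^\star$ can be replaced by adjoining $\Oh(\log t)$ ``size-slack'' items whose subset sums realize exactly $\{0,\dots,t\}$ in the $p$-coordinate and $\Oh(\log\Sigma_w)$ ``weight-slack'' items realizing exactly $\{0,\dots,\Sigma_w-w\}$ in the $w$-coordinate, with target $T := t + M\cdot\Sigma_w$; the nondeterministic version is shorter to verify and suffices here.)
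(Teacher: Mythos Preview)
Your proposal is correct and follows essentially the same approach as the paper: the $(\le)$ direction is identical, and for $(\ge)$ both arguments nondeterministically guess the exact total size and weight of a hypothetical solution and then collapse the resulting two-dimensional equality constraint into a single {\sc Subset Sum} instance via base-$M$ packing. The only cosmetic differences are which coordinate sits in the high digit (you put weights high, the paper puts sizes high) and the precise choice of base; your parenthetical remark about a deterministic slack-item variant is an extra observation not present in the paper.
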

\begin{proof}
    The $(\le)$ reduction is standard: we translate each input integer $p_i$ into an item $(p_i,p_i)$ and set $w=t$.
    Then we can pack items of total weight $t$ into a knapsack of capacity $t$ if and only if the {\sc Subset Sum} instance is solvable.

    Now consider the $(\ge)$ reduction.
    Let $k = \log(t + w)$.
    By the discussion at the beginning of this section we can assume that $\log n \le \log(t\cdot w) \le 2k$.
    We can also assume that $w_{max} < w$ as any item with weight exceeding $w$ and size fitting into the knapsack would form a trivial solution.
    Let $W = w \cdot n + 1$.

    Suppose there is a set of items with total size equal $t' \le t$ and total weight equal $w' \ge w$.
    Note that $w'$ must be less than $W$.
    We nondeterministically choose $t'$ and $w'$: this requires guessing $\log t + \log W \le 4k$ bits.
    Now we create an instance of {\sc Subset Sum} by mapping each item $(p_i, w_i)$ into integer $p_i \cdot W + w_i$ and setting the target integer to $t'' = t' \cdot W + w'$.
    If we guessed $(t',w')$ correctly then such an instance clearly has a solution.
    On the other hand, if this instance of {\sc Subset Sum} admits a solution then we have $\sum_{i \in I} (p_i \cdot W + w_i) =  t' \cdot W + w'$ for some  $I \sub [n]$. 
    Since both $w'$ and $\sum_{i \in I'} w_i$ belong to $[1,W)$ we must have $\sum_{i \in I} w_i = w'$ and $\sum_{i \in I} p_i = t'$ so the original instance of {\sc Knapsack} has a solution as well.
    Finally, it holds that $\log t'' \le 5k$ so the parameter is being transformed linearly. 
\end{proof}

We will need the following extension of the last argument.

\begin{lemma}\label{lem:ss:twoseq}
    Let $W \in \nn$ and $a_1,\dots,a_n$, $b_1,\dots,b_n$ be sequences satisfying $a_i,b_i \in [0,W)$ for each $i\in[n]$.
    Suppose that $S := \sum_{i=1}^n a_i W^{i-1} = \sum_{i=1}^n b_i W^{i-1}$.
    Then $a_i=b_i$ for each $i\in [n]$.
\end{lemma}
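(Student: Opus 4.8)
The statement is precisely the uniqueness of base-$W$ representations, so the plan is a short induction on $n$. Before starting, note that the case $W \le 1$ is degenerate: if $W = 1$ then $[0,W) = \{0\}$ forces every $a_i$ and $b_i$ to be $0$, and if $W = 0$ then $[0,W) = \emptyset$ forces $n = 0$; in either case the claim is trivial. So I would assume $W \ge 2$. The base case $n = 0$ is vacuous, and $n = 1$ gives $a_1 = S = b_1$ directly.

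For the inductive step the key observation is to reduce the identity $\sum_{i=1}^n a_i W^{i-1} = \sum_{i=1}^n b_i W^{i-1}$ modulo $W$. Every term with $i \ge 2$ is divisible by $W$, so $S \equiv a_1 \pmod{W}$ and likewise $S \equiv b_1 \pmod{W}$; hence $a_1 \equiv b_1 \pmod{W}$, and since $a_1, b_1 \in [0,W)$ this yields $a_1 = b_1$. Subtracting the $i = 1$ terms leaves $\sum_{i=2}^n a_i W^{i-1} = \sum_{i=2}^n b_i W^{i-1}$, and both sides are divisible by $W$, so dividing through gives $\sum_{j=1}^{n-1} a_{j+1} W^{j-1} = \sum_{j=1}^{n-1} b_{j+1} W^{j-1}$. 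The shifted sequences $(a_2,\dots,a_n)$ and $(b_2,\dots,b_n)$ still lie in $[0,W)$, so the induction hypothesis for length $n-1$ gives $a_i = b_i$ for all $i \ge 2$, finishing the argument.

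A non-inductive alternative would be: if the sequences differ, let $j$ be the largest index with $a_j \ne b_j$, say $a_j > b_j$; the terms with $i > j$ cancel, so $(a_j - b_j) W^{j-1} = \sum_{i<j} (b_i - a_i) W^{i-1}$, whose left-hand side is at least $W^{j-1}$ while the right-hand side has absolute value at most $(W-1)\sum_{i=0}^{j-2} W^i = W^{j-1} - 1$, a contradiction. Either way there is no real obstacle; the only points needing a moment of care are dispatching the small-$W$ cases and getting the index ranges (and hence the geometric-sum bound) right.
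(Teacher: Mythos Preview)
Your proof is correct and follows essentially the same approach as the paper: reduce modulo $W$ to peel off the lowest digit, then divide by $W$ and repeat. The paper presents this as an iterative argument (``this argument generalizes readily to every $i \in [n]$'') rather than a formal induction, but the content is identical.
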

\begin{proof}
    Consider the remainder of $S$ when divided by $W$.
    Since $W$ divides all the terms in $S$ for $i \in [2,n]$ and $a_1,b_1 \in [0,W)$ we must have $a_1 = (S \mod W) = b_1$.
    Next, consider $S' = (S - a_1) / W = \sum_{i=2}^n a_i W^{i-2} = \sum_{i=2}^n b_i W^{i-2}$.
    Then $a_2 = (S' \mod W) = b_2$. This argument generalizes readily to every $i \in [n]$.
\end{proof}

\begin{lemma}
    \sslog $\equiv_{\textsc{nppt}}$ \monoilp.
\end{lemma}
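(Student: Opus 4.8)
The plan is to establish the equivalence by two opposite reductions. The direction \monoilp $\le_{\textsc{ppt}}$ \sslog I would do by an ordinary deterministic PPT via a positional (base-$W$) encoding, which is in particular an NPPT; the direction \sslog $\le_{\textsc{nppt}}$ \monoilp is the interesting one, where a plain PPT does not seem to work and the reduction will have to guess, for each bit position of the target, how many of the selected items carry a one in that position.

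For \monoilp $\le_{\textsc{ppt}}$ \sslog, start from $A \in \{0,1\}^{m \times n}$ and $b \in \zz^m$. For every $x \in \{0,1\}^n$ each coordinate of $Ax$ lies in $[0,n]$, so I would first output a trivial no-instance if some $b_j \notin [0,n]$. Otherwise set $W = n+1$, map the $i$-th column to the integer $\sum_{j=1}^m A_{j,i} W^{j-1}$, and set the target to $\sum_{j=1}^m b_j W^{j-1}$. For $S \subseteq [n]$ the sum of the chosen integers is $\sum_{j=1}^m \big(\sum_{i \in S} A_{j,i}\big) W^{j-1}$; since both $\sum_{i \in S} A_{j,i}$ and $b_j$ lie in $[0,W)$, \Cref{lem:ss:twoseq} yields that this equals the target iff $\sum_{i \in S} A_{j,i} = b_j$ for all $j$, i.e.\ iff $\mathbf{1}_S$ solves the ILP. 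The target is below $W^m = (n+1)^m$, and since \monoilp is solvable in time $2^{\poly(m)} n^{\Oh(1)}$ I may assume $\log n \le \poly(m)$ by \Cref{lem:prelim:log}, so the new parameter $\log t \le m \log(n+1) = \poly(m)$.

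For \sslog $\le_{\textsc{nppt}}$ \monoilp, let the instance be $a_1, \dots, a_n$ with target $t$ and $k = \log t$. I may assume each $a_i \le t$ (discard larger items; declare a trivial yes-instance if some $a_i = t$), so $a_i$ is described by its bits $a_{i,0}, \dots, a_{i,k-1}$, and again $\log n \le \poly(k)$. The reduction guesses a profile $(\sigma_0, \dots, \sigma_{k-1}) \in \{0, \dots, n\}^k$, costing $k \cdot \lceil \log(n+1) \rceil = \poly(k)$ bits, and checks whether $\sum_{j=0}^{k-1} 2^j \sigma_j = t$; if the check fails it outputs a trivial no-instance, and otherwise it outputs the \monoilp instance with $m = k$ rows, matrix $A_{j+1,i} = a_{i,j}$, and right-hand side $b_{j+1} = \sigma_j$. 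If the original instance has a solution $S$, then the guess $\sigma_j := \sum_{i \in S} a_{i,j} \in [0,n]$ passes the check because $\sum_j 2^j \sigma_j = \sum_{i \in S} a_i = t$, and $\mathbf{1}_S$ solves the ILP; conversely, whenever the check passes and $x \in \{0,1\}^n$ solves the output ILP, the set $S = \{ i : x_i = 1 \}$ satisfies $\sum_{i \in S} a_i = \sum_j 2^j \sum_{i \in S} a_{i,j} = \sum_j 2^j \sigma_j = t$. Hence the output is a yes-instance for some guess iff the input is, and the parameter $m = k$ is linear in $k$, so this is a valid NPPT.

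The step I expect to be the crux is this second reduction: the column sums $\sigma_j$ of a \emph{solution} cannot be read off the \sslog instance deterministically, since distinct subsets summing to $t$ induce distinct profiles, so the nondeterministic guessing of $\sigma$ together with the consistency check $\sum_j 2^j \sigma_j = t$ carried out inside the reduction is exactly what makes the argument go through; this is also the reason the reduction is not obviously improvable to an ordinary PPT. The first direction, by contrast, is a routine positional encoding resting on \Cref{lem:ss:twoseq}.
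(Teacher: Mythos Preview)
Your proof is correct and follows essentially the same approach as the paper: both directions match the paper's argument, with the positional base-$(n+1)$ encoding (relying on \Cref{lem:ss:twoseq}) for \monoilp $\le_{\textsc{ppt}}$ \sslog, and the nondeterministic guess of the per-bit column sums together with the consistency check $\sum_j 2^j\sigma_j = t$ for \sslog $\le_{\textsc{nppt}}$ \monoilp. The only cosmetic difference is that you bound each guessed $\sigma_j$ by $n$ while the paper bounds it by $t$; both give $\poly(k)$ nondeterministic bits after invoking \Cref{lem:prelim:log}.
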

\begin{proof}
    ($\le$): Consider an instance $ (\{p_1, \dots, p_n\}, t)$ of {\sc Subset Sum}. Let $k = \lceil \log t \rceil$. We can assume that all numbers $p_i$ belong to the interval $[t]$.
    For an integer $x \in [t]$ let $\bin(x) \in \{0,1\}^k$ denote the binary encoding of $x$ so that $x = \sum_{j=1}^k \bin(x)_j \cdot 2^{j-1}$.
    Observe that the condition $x_1 + \dots + x_m = t$ can be expressed as $\sum_{j=1}^k \br{\sum_{i=1}^m \bin(x_i)_j} \cdot 2^{j-1} = t$.
    
    We nondeterministically guess a sequence $b = (b_1,\dots,b_k)$
    so that $b_j$ equals $\sum_{i \in I} \bin(p_i)_j$ where $I \sub [n]$ is a solution.
    This sequence must satisfy $\max_{j=1}^k b_j \le t$, and so we need $k^2$ nondeterministic bits to guess $b$.
    We check if the sequence $b$ satisfies $\sum_{j=1}^k b_j \cdot 2^{j-1} = t$; if no then the guess was incorrect and we return a trivial no-instance.
    Otherwise we construct an instance of {\sc Monotone 0-1 ILP[$k$]} with a system $Ax = b$.
    The vector $b$ is given as above and its length is $k$.
    The matrix $A$ comprises $n$ columns where the $i$-th column is $\bin(p_i)$.
    This system has a solution $x \in \{0,1\}^n$ if and only if there exists  $I \sub [n]$ so that $\sum_{i \in I} \bin(p_i)_j = b_j$ for all $j \in [k]$.
    This implies that $\sum_{i \in I} p_i = t$.
    Conversely, if such a set $I \sub [n]$ exists, then there is $b \in [t]^k$
    for which $Ax=b$ admits a boolean solution.

    ($\ge$): Consider an instance $Ax=b$ of \monoilp.  As usual, we assume $\log n \le \poly(m)$.
    We can also assume that $||b||_\infty \le n$ as otherwise $Ax=b$ is clearly infeasible.
    We construct an instance of {\sc Subset Sum} with $n$ items and target integer $t = \sum_{j=1}^m b_j \cdot (n+1)^{j-1}$.
    Note that $t \le m\cdot ||b||_\infty \cdot (n+1)^m$ so $\log t \le \poly(m)$.
    For $i \in [n]$ let $a^i \in \{0,1\}^m$ denote the $i$-th column of the matrix $A$.
    We define $p_i = \sum_{j=1}^m a^i_j \cdot (n+1)^{j-1}$ and we claim that that instance $J = (\{p_1, \dots, p_n\}, t)$ of {\sc Subset Sum} is solavble exactly when the system $Ax=b$ has a boolean solution.

    First, if $x \in \{0,1\}^m$ forms a solution to $Ax = b$ then for each $j\in [m]$ we have $\sum_{i=1}^n x_ia^i_j (n+1)^{j-1} = b_j(n+1)^{j-1}$ and so $\sum_{i=1}^n x_ip_i = t$.
    Hence  the set $I = \{i \in [n] \mid x_i = 1\}$ encodes a solution to $J$.
    In the other direction, suppose that there is $I \sub [n]$ for which $\sum_{i \in I} p_i = t$.
    Then $t =  \sum_{j=1}^m ( \sum_{i \in I} a^i_j) \cdot (n+1)^{j-1}$.
    Due to  \Cref{lem:ss:twoseq} we must have $b_i = \sum_{i \in I} a^i_j$ for each $i\in[m]$ and there is subset of columns of $A$ that sums up to the vector $b$.
    This concludes the proof.
\end{proof}

For the next reduction, we will utilize the lower bound on the norm of vectors in a so-called Graver basis of a matrix.
For two vectors $y,x \in \zz^n$ we write $y \vartriangleleft x$ if for every $i \in [n]$ it holds that $y_ix_i \ge 0$ and $|y_i| \le |x_i|$.
A non-zero vector $x \in \zz^n$ belongs to the Graver basis of $A \in \zz^{m \times n}$ if $Ax = 0$ and no other non-zero solution $Ay=0$ satisfies $y \vartriangleleft x$.
In other words, $x$ encodes a sequence of columns of $A$, some possibly repeated or negated, that sums to 0 and none of its nontrivial subsequences sums to 0.
The following lemma concerns the existence of vectors with a large $\ell_1$-norm in a Graver basis of a certain matrix.
We state it in the matrix-column interpretation.

\begin{lemma}[{\cite[Thm. 9, Cor. 5]{berndt2024new}}]\label{lem:graver:contruction}
    For every $k \in \nn$ there is a sequence $(v_1,\dots,v_n)$ of vectors from $\{-1,0,1\}^k$ such that
    \begin{enumerate}
        \item $n = \Theta(2^k)$,
        \item the vectors $v_1,\dots,v_n$ sum up to 0, and
        \item no proper non-empty subsequence of $(v_1,\dots,v_n)$ sums up to 0.
    \end{enumerate}
\end{lemma}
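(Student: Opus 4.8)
The plan is to give a short self-contained construction achieving $n=2^k$ exactly (the cited work states the $\Theta(2^k)$ bound; the following is one way to realise it). I would build the sequence as the list of ``moves'' of a $k$-bit binary counter that counts from $0$ up to $2^k-1$ and then resets to~$0$. Writing $e_j$ for the $j$-th standard basis vector of $\zz^k$, for $t\in[k]$ set $u_t:=e_t-\sum_{j<t}e_j\in\{-1,0,1\}^k$ (the change of the register when an increment propagates a carry exactly up to bit~$t$) and $w:=-(1,1,\dots,1)\in\{-1,0,1\}^k$ (the reset move from $11\cdots1$ to $00\cdots0$). The sequence $(v_1,\dots,v_n)$ is then, in any order, $2^{k-t}$ copies of $u_t$ for each $t\in[k]$ together with one copy of~$w$; thus $n=\sum_{t=1}^{k}2^{k-t}+1=2^k$, which is property~(1). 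A direct computation using $\sum_{t>j}2^{k-t}=2^{k-j}-1$ gives $\sum_{t=1}^{k}2^{k-t}u_t=(1,1,\dots,1)$, which is cancelled by~$w$, giving property~(2).

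The real content is property~(3), which I would prove by a coordinate-by-coordinate argument running from coordinate~$k$ down to coordinate~$1$. A subsequence is specified by a choice of $a_t\in\{0,\dots,2^{k-t}\}$ copies of $u_t$ for each~$t$ and $b\in\{0,1\}$ copies of~$w$, and its sum is $\sum_t a_t u_t+b w$. Since the $j$-th coordinate of $u_t$ is $+1$ when $t=j$, is $-1$ when $t>j$, and is $0$ when $t<j$, while the $j$-th coordinate of $w$ is $-1$, the vanishing of coordinate~$j$ of the sum is precisely the equation $a_j=\sum_{t>j}a_t+b$. Reading this for $j=k$ gives $a_k=b$, and a downward induction on~$j$ then forces all $a_j=0$ when $b=0$ (the empty subsequence) and forces $a_j=\sum_{t>j}2^{k-t}+1=2^{k-j}$ for every~$j$ when $b=1$, i.e.\ every available copy is used (the whole sequence). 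Hence no proper nonempty subsequence sums to~$0$.

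I expect the main obstacle to be exactly property~(3): many natural ``$\pm$-basis cycle'' candidates fail it, e.g.\ the $2^k$ edge vectors of a Gray-code Hamiltonian cycle on $\{0,1\}^k$ sum to~$0$ but already contain cancelling pairs. The signed-digit vectors $u_t$ are engineered so that the single recursion $a_j=\sum_{t>j}a_t+b$ together with the capacity bounds $0\le a_t\le 2^{k-t}$ collapses every zero-sum selection to one of the two trivial extremes; after that, only routine arithmetic remains.
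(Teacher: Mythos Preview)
Your construction is correct. The binary-counter interpretation is apt: the vectors $u_t$ record an increment with carry propagating to bit~$t$, the multiplicities $2^{k-t}$ are exactly how often such a carry occurs when counting from $0$ to $2^k-1$, and $w$ is the wrap-around. Your coordinatewise recursion $a_j=\sum_{t>j}a_t+b$ is forced and, together with the capacity bounds $0\le a_t\le 2^{k-t}$, indeed pins down the two trivial solutions. One tiny remark: in the $b=1$ branch you do not actually need the inequality $a_j\le 2^{k-j}$ in the induction; the recursion already determines $a_j$ uniquely, and the capacity bound merely confirms that the resulting profile is realisable, which of course it is since it equals the full sequence.

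As for comparison: the paper does not prove this lemma at all but imports it from~\cite{berndt2024new} as a black box about Graver-basis elements of large $\ell_1$-norm. Your argument is therefore a genuinely different, fully self-contained route. What the cited source buys is generality (bounds on Graver bases of arbitrary $\{-1,0,1\}$-matrices), whereas your construction buys simplicity and an exact constant $n=2^k$ with an elementary two-case verification. For the purpose the paper makes of the lemma (the reduction in \Cref{lem:ss:zeroilp}), your direct construction is entirely sufficient and arguably cleaner.
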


\begin{figure}[t]
\centering
\includegraphics[scale=0.6]{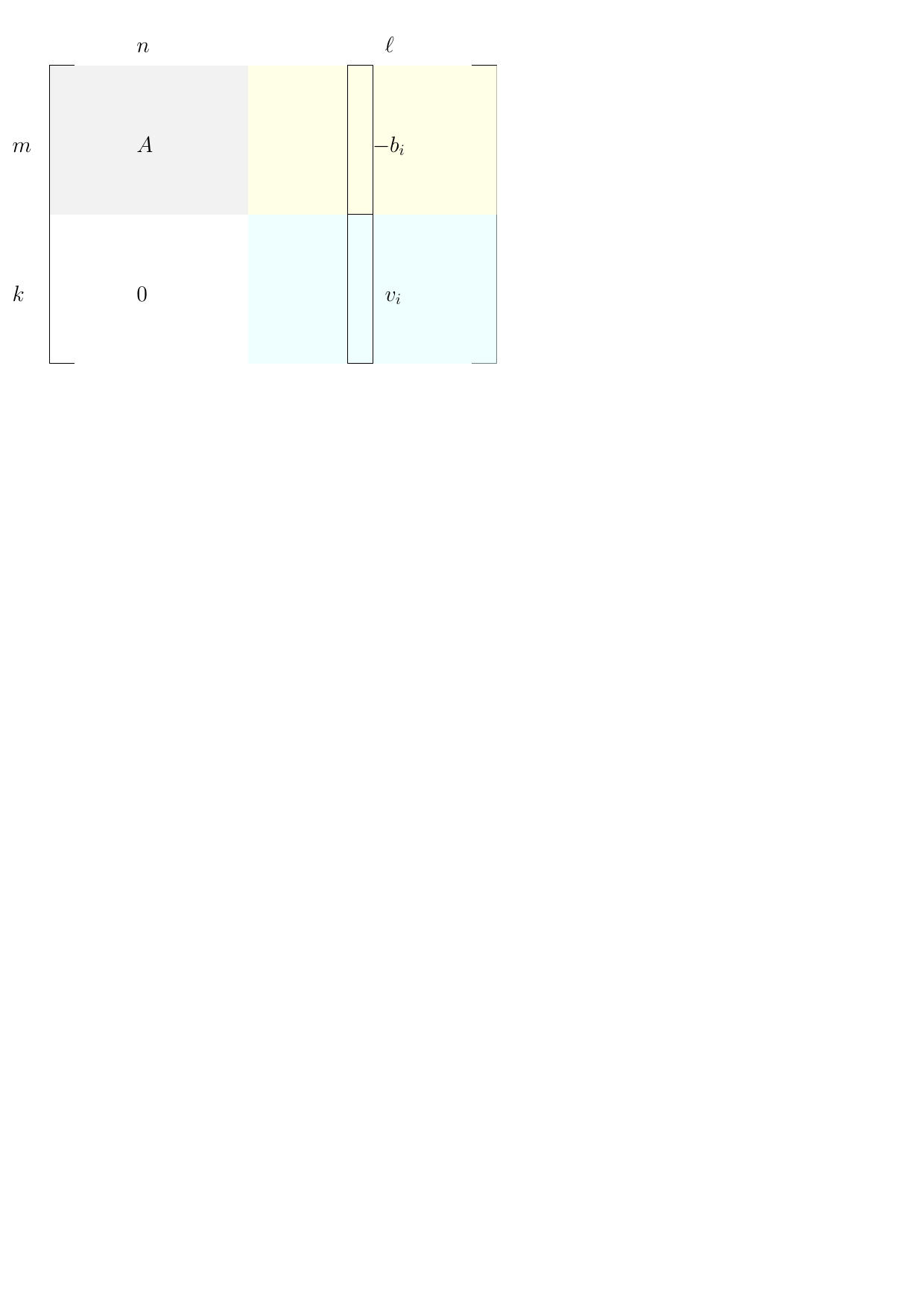}
\caption{
The matrix $A'$ in \Cref{lem:ss:zeroilp}.
}\label{fig:matrix}
\end{figure}

\begin{lemma}\label{lem:ss:zeroilp}
    \monoilp $\le_{\textsc{ppt}}$ \zeroilp.
\end{lemma}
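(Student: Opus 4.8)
The plan is to turn the equation $Ax=b$ into a homogeneous system $A'x'=0$ by adjoining columns that contribute $-b$ to the original rows, and to make those columns \emph{rigid}: in any nonzero solution of $A'x'=0$ they must be either all selected or all unselected. A naive rigidity gadget --- a cyclic pattern of $\pm 1$ entries tying each of these columns to the next --- would cost one new row per column, and we may need up to $||b||_1 \le mn$ of them, which would blow up the parameter. Instead I would use the minimal zero-sum sequence of \Cref{lem:graver:contruction}, which yokes $N=\Theta(2^k)$ columns together using only $k$ new rows; this is exactly the logarithmic compression needed to keep the new parameter $\poly(m)$.

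Concretely, after the standard preprocessing --- discard all-zero columns of $A$; output a fixed yes-instance if $b=0$ and a fixed no-instance if $Ax=b$ is clearly infeasible (e.g.\ $b\not\ge 0$, $||b||_\infty>n$, or no column remains); and invoke \Cref{lem:prelim:log} together with the discussion opening this section to assume $\log n\le\poly(m)$ --- we have $A\in\{0,1\}^{m\times n}$ with no zero column, $0\le b\neq 0$, and $B:=\sum_{j=1}^m b_j\le mn$. Choose $k=\Oh(\log(mn))=\poly(m)$ large enough that \Cref{lem:graver:contruction} yields a sequence $v_1,\dots,v_N\in\{-1,0,1\}^k$ with $N\ge B$, with $\sum_{i=1}^N v_i=0$, and with no proper nonempty subsequence summing to $0$; since $2^k=\poly(n)$ this sequence is obtained in polynomial time. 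Fix a map $\sigma\colon[B]\to[m]$ hitting each $j\in[m]$ exactly $b_j$ times. Build $A'\in\{-1,0,1\}^{(m+k)\times(n+N)}$ (see \Cref{fig:matrix}), the rows split into $m$ original rows and $k$ gadget rows: the $i$-th column of $A$, extended by zeros in the gadget rows; and, for each $i\in[N]$, a column equal to $v_i$ in the gadget rows and to $-e_{\sigma(i)}$ if $i\le B$, or $0$ if $i>B$, in the original rows. Output $(A',0)$ with parameter $m'=m+k=\poly(m)$; everything runs in polynomial time, so it remains to check that membership is preserved.

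If $x\in\{0,1\}^n$ solves $Ax=b$, select $x$ together with all $N$ gadget columns: the gadget rows give $\sum_i v_i=0$, the original rows give $Ax-\sum_{i\le B}e_{\sigma(i)}=Ax-b=0$, and the selection is nonzero, so $A'x'=0$ has a nonzero solution. Conversely, let $A'x'=0$ with $x'=(x,y)\neq 0$, $x\in\{0,1\}^n$, $y\in\{0,1\}^N$. Only the gadget columns touch the gadget rows, so $\sum_{i:\,y_i=1}v_i=0$, and the minimality property of the sequence forces $\{i:y_i=1\}$ to equal $\emptyset$ or $[N]$. If $y$ is all-zero, the original rows read $Ax=0$; since $A\ge 0$ has no zero column, this forces $x=0$, contradicting $x'\neq 0$. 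Hence $y$ is all-one, and the original rows read $Ax=\sum_{i\le B}e_{\sigma(i)}=b$, a $\{0,1\}$-solution of $Ax=b$. This establishes $\monoilp\le_{\textsc{ppt}}\zeroilp$.

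I expect the crux to be the design of the rigidity gadget together with the verification that \Cref{lem:graver:contruction} realizes precisely the all-or-nothing behaviour over $\Theta(mn)$ columns within only $\poly(m)$ extra rows. The one remaining subtlety is the preprocessing --- removing zero columns of $A$ and treating $b=0$ separately --- which is exactly what rules out spurious nonzero solutions of $A'x'=0$, namely those with $y$ all-zero, that would not project to solutions of $Ax=b$.
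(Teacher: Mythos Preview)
Your proof is correct and follows essentially the same approach as the paper: adjoin ``$-b$'' columns tagged with the minimal zero-sum sequence of \Cref{lem:graver:contruction} so that any nonzero homogeneous solution must pick all or none of them, and use the no-zero-column assumption to kill the ``none'' case. The only cosmetic difference is that the paper decomposes $b$ into $\ell\ge n$ vectors from $\{-1,0,1\}^m$ (one per gadget column) rather than your $||b||_1\le mn$ standard unit vectors, giving $k=\Oh(\log n)$ instead of your $k=\Oh(\log(mn))$ extra rows, but both are $\poly(m)$ after the $\log n\le\poly(m)$ preprocessing.
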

\begin{proof}
    Consider an instance $Ax = b$ of \monoilp with $A \in \{0,1\}^{m \times n}$.
    We can assume that $A$ contains 1 in every column as otherwise such a column can be discarded.
    Let $v_1,\dots,v_\ell \in \{-1,0,1\}^k$ be the sequence of vectors from \Cref{lem:graver:contruction} with $\ell \ge n$ and $k = \Oh(\log n)$.
    Next, we can assume that $||b||_\infty \le n \le \ell$ as otherwise there can be no solution.
    We decompose $b$ into a sum $b_1 + \dots + b_\ell$ of vectors from $\{-1,0,1\}^m$, possibly using zero-vectors for padding.
    Now we construct a matrix $A' \in \{-1,0,1\}^{(m+k)\times(n+\ell)}$.
    The first $n$ columns are given by the columns of $A$ with 0 on the remaining $k$ coordinates.
    The last $\ell$ columns are of the form $(-b_i,v_i)$ for $i \in [\ell]$.
    See \Cref{fig:matrix} for an illustration.
    The new parameter is $m+k$ which is $m + \Oh(\log n)$.

    We claim that $Ax = b$ is feasible over boolean domain if and only if $A'y = 0$ admits a non-zero boolean solution $y$.
    Consider a solution $x$ to $Ax = b$.
    We define $y$ as $x$ concatenated with vector $1^\ell$.
    In each of the first $m$ rows we have $(A'y)_j = (Ax)_j - b_j = 0$.
    In the remaining $k$ rows we have first $n$ zero vectors followed by the sequence  $v_1,\dots,v_\ell$ which sums up to 0 by construction.
    Hence $A'y = 0$ while $y \ne 0$.

    Now consider the other direction and let $y \ne 0$ be a solution to $A'y = 0$.
    Let us decompose $y$ as a concatenation of $y_1 \in \{0,1\}^n$ and $y_2 \in \{0,1\}^\ell$.
    First suppose that $y_2 = 0$.
    Then $y_1 \ne 0$ and $Ay_1 = 0$ but this is impossible since  $A \in \{0,1\}^{m \times n}$ and, by assumption, every column of $A$ contains 1.
    It remains to consider the case $y_2 \ne 0$.
    By inspecting the last $k$ rows of $A'$ we infer that the non-zero indices of $y_2$ correspond to a non-empty subsequence of  $v_1,\dots,v_\ell$ summing up to 0.
    By construction, this is not possible for any proper  subsequence of $(v_1,\dots,v_\ell)$ so we must have $y_2 = 1^\ell$.
    Hence $0 = A'y = A'y_1 + A'y_2 = A'y_1 + (-b,0)$ and so $Ay_1 = b$. This concludes the proof of the reduction.
\end{proof}

We will now reduce from \zeroilp to \ilp. 
The subtlety comes from the fact that in the latter problem we accept the solution $x = 0$ while in the first we do not.
Observe that the reduction is easy when we can afford guessing a single column from a solution.
For a matrix $A \in \zz^{m \times n}$ and $i \in [n]$ we denote by $A^i \in \zz^{m \times 1}$ the $i$-th column of $A$ and by $A^{-i} \in \zz^{m \times (n-1)}$ the matrix obtained from $A$ by removal of the $i$-th column.

\begin{observation}\label{lem:ss:0sum:obs}
    An instance $Ax=0$ of \zeroilp is solvable if and only if there is $i \in [n]$
     such that the instance $A^{-i}y= -A^i$ of \ilp is solvable. 
\end{observation}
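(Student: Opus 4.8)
The statement is an ``if and only if'', and both directions are essentially bookkeeping once the right index is fixed. For the forward direction, suppose $Ax=0$ has a non-zero boolean solution $x \in \{0,1\}^n$. Since $x \neq 0$, pick any index $i$ with $x_i = 1$. Write $x = x' + e_i$ where $e_i$ is the $i$-th standard basis vector and $x' \in \{0,1\}^n$ has $x'_i = 0$; let $y \in \{0,1\}^{n-1}$ be $x'$ with the $i$-th coordinate deleted. Then $0 = Ax = A^{-i}y + A^i$, so $A^{-i}y = -A^i$, i.e., this particular instance of \ilp is solvable, witnessing the existential claim.

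For the backward direction, suppose that for some $i \in [n]$ the instance $A^{-i}y = -A^i$ has a boolean solution $y \in \{0,1\}^{n-1}$. Re-insert a coordinate equal to $1$ at position $i$ to obtain $x \in \{0,1\}^n$ with $x_i = 1$. Then $Ax = A^{-i}y + A^i = -A^i + A^i = 0$, and $x \neq 0$ because $x_i = 1$. Hence $x$ is a valid solution to the \zeroilp instance $Ax=0$. This completes the equivalence.

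\textbf{Remark on the point of the observation.} The content here is not the calculation but the structural consequence: each of the $n$ \ilp instances $A^{-i}y = -A^i$ can be solved by the Eisenbrand--Weismantel algorithm in time $2^{\Oh(m^2\log m)}\cdot n$ (the matrix still has entries in $\{-1,0,1\}$ and $m$ rows), so running it for all $i \in [n]$ solves \zeroilp in time $2^{\Oh(m^2\log m)}\cdot n^2$; this is exactly the fact invoked earlier in the section to justify applying \Cref{lem:prelim:log} to \zeroilp. For the purpose of an NPPT, one would instead guess the index $i$ using $\log n = \poly(m)$ nondeterministic bits and output the single instance $A^{-i}y = -A^i$ of \ilp with parameter $m$; correctness in the yes- and no-directions follows immediately from the two implications above. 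There is no real obstacle here --- the only thing to be slightly careful about is that \ilp as defined accepts $y=0$, which is harmless since $A^{-i}\cdot 0 = 0 = -A^i$ would force $A^i = 0$, and a zero column can be assumed absent (or handled trivially).
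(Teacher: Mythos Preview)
Your proof is correct. The paper states this as an observation without proof, treating it as immediate; your argument supplies exactly the straightforward bookkeeping (pick an index $i$ with $x_i=1$ in one direction, re-insert $x_i=1$ in the other) that justifies it, and your remark accurately describes how the observation is subsequently used.
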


\begin{lemma}
    \zeroilp  $\le_{\textsc{nppt}}$ \ilp.
\end{lemma}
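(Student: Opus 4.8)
The plan is to derive this from \Cref{lem:ss:0sum:obs} with a single guess, using nondeterminism only to identify which coordinate of a solution is nonzero. First I would invoke the preprocessing recalled at the start of this section: since \zeroilp is solvable in time $2^{m^{\Oh(1)}}n^{\Oh(1)}$ (combine \Cref{lem:ss:0sum:obs} with the algorithm of~\cite{EisenbrandW20}), \Cref{lem:prelim:log} lets us assume $\log n \le \poly(m)$. Given an instance $Ax=0$ with $A \in \{-1,0,1\}^{m\times n}$, the reduction then guesses an index $i \in [n]$, which costs $\lceil \log n\rceil \le \poly(m)$ nondeterministic bits, and outputs the \ilp instance $A^{-i}y=-A^i$. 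Its matrix $A^{-i}$ lies in $\{-1,0,1\}^{m\times(n-1)}$ and its right-hand side $-A^i$ lies in $\{-1,0,1\}^m \sub \zz^m$, so this is a legal instance of \ilp; the new parameter is $m$, so the bound $k' \le \poly(k)$ is immediate, and the computation is polynomial.

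For correctness I would check the two conditions in the definition of an NPPT against \Cref{lem:ss:0sum:obs}. If $Ax=0$ has a nonzero boolean solution $x$, fix any $i$ with $x_i=1$; \Cref{lem:ss:0sum:obs} then guarantees $A^{-i}y=-A^i$ is feasible, so for this particular guess $i$ the reduction outputs a yes-instance of \ilp. Conversely, if $Ax=0$ has no nonzero boolean solution, \Cref{lem:ss:0sum:obs} says $A^{-i}y=-A^i$ is infeasible for \emph{every} $i \in [n]$, so no matter which $i$ is guessed the output is a no-instance of \ilp. This is exactly what $\le_{\textsc{nppt}}$ requires.

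I do not expect a hard computational step here; the conceptual point — and the reason nondeterminism is unavoidable — is that the only semantic difference between \zeroilp and \ilp is whether $x=0$ counts as a solution, and excluding it deterministically seems out of reach. The obvious deterministic patch would append a constraint forcing $\sum_{i=1}^n x_i \ge 1$, but an equality system over the alphabet $\{-1,0,1\}$ cannot encode such an inequality with only $\poly(m)$ extra rows and variables: a binary-encoded slack variable would need forbidden coefficients $2^j$, and a carry-propagating summation gadget would need $\Theta(n)$ rows while $n$ may be superpolynomial in $m$. Guessing one nonzero coordinate circumvents this obstruction entirely, so the main care needed is just in stating the preprocessing step so that the number of guessed bits stays $\poly(m)$.
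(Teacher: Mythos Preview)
Your proposal is correct and follows essentially the same approach as the paper: reduce to the regime $\log n \le \poly(m)$ via the FPT algorithm, then guess the index $i$ of a nonzero coordinate and output $A^{-i}y=-A^i$ using \Cref{lem:ss:0sum:obs}. Your write-up is in fact more detailed than the paper's (which dispatches the lemma in two sentences), and your closing paragraph on why a deterministic PPT seems hard is a nice addition, though not required for the proof itself.
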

\begin{proof}
   \Cref{lem:ss:0sum:obs} enables us to solve \zeroilp in polynomial time when $\log n$ is large compared to $m$, by considering all $i \in [n]$ and solving the obtained \ilp instance.
   Hence we can again assume that $\log n \le \poly(m)$.
    In this case, \Cref{lem:ss:0sum:obs} can be interpreted as an NPPT that guesses $\poly(m)$ bits to identify the index $i \in [n]$.
\end{proof}

\begin{lemma}
    \ilp $\le_{\textsc{nppt}}$ \monoilp.
\end{lemma}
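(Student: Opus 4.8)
The idea is a direct NPPT that eliminates the $-1$ entries of $A$ by guessing, for each constraint, the amount of ``negative mass'' that the (unknown) solution puts on it. As in the preceding reductions we may assume $\log n \le \poly(m)$ by \Cref{lem:prelim:log}, and, after rejecting obviously infeasible inputs, $\|b\|_\infty \le n$ (each coordinate of $Ax$ lies in $[-n,n]$ for $x \in \{0,1\}^n$).

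Given an instance $Ax=b$ of \ilp with $A \in \{-1,0,1\}^{m\times n}$, for a row $j \in [m]$ let $P_j = \{i \in [n] : a^i_j = 1\}$ and $N_j = \{i \in [n] : a^i_j = -1\}$, so the $j$-th equation is $\sum_{i\in P_j} x_i - \sum_{i\in N_j} x_i = b_j$. For any $0/1$ solution $x$, the number $v_j := \sum_{i \in N_j} x_i$ is an integer in $\{0,\dots,n\}$, and once $v_j$ is fixed the equation becomes equivalent to the pair of equations $\sum_{i\in P_j} x_i = b_j + v_j$ and $\sum_{i\in N_j} x_i = v_j$, both with coefficients in $\{0,1\}$. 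So the reduction guesses $v_1,\dots,v_m \in \{0,\dots,n\}$, which costs $m\lceil\log(n+1)\rceil \le \poly(m)$ bits, sets $u_j := b_j + v_j$, and if some $u_j \notin \{0,\dots,n\}$ outputs a trivial no-instance; otherwise it outputs the \monoilp instance on variables $x_1,\dots,x_n$ consisting of the $2m$ equations $\sum_{i\in P_j} x_i = u_j$ and $\sum_{i\in N_j} x_i = v_j$ for $j \in [m]$. Its matrix lies in $\{0,1\}^{2m\times n}$ and its parameter $2m$ is polynomial in $m$.

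For soundness, observe that for \emph{every} guess, any $0/1$ solution $x$ of the output instance satisfies $\sum_i a^i_j x_i = u_j - v_j = b_j$ for all $j$, hence solves $Ax=b$; thus a no-instance is always mapped to a no-instance. For completeness, if $Ax=b$ has a solution $x^*$, then for the guess $v_j := \sum_{i\in N_j} x^*_i$ we get $u_j = \sum_{i\in P_j} x^*_i \in \{0,\dots,n\}$, and $x^*$ witnesses that the produced instance is a yes-instance. This is exactly what the definition of an NPPT requires.

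The construction is short; the only point worth stressing is \emph{why} nondeterminism is used here. Without guessing, the natural way to remove a $-1$ coefficient is the substitution $-x_i = \bar{x}_i - 1$ with a fresh variable $\bar{x}_i$, but this is sound only if one also adds the linking equations $x_i + \bar{x}_i = 1$ for every ``mixed'' column $i$ (one carrying both a $+1$ and a $-1$), of which there can be $\Theta(n)$ many, blowing the parameter from $m$ to $m+\Theta(n)$. Guessing the $m$ row-sums $v_j$ replaces these $\Theta(n)$ local constraints by $m$ global ones, which keeps the parameter polynomial in $m$; the standing assumption $\log n = \poly(m)$ (\Cref{lem:prelim:log}) is what keeps the $\Theta(m\log n)$ guessed bits within the $\poly(m)$ budget. (One could alternatively route through \sslog using the already established equivalence \sslog $\equiv_{\textsc{nppt}}$ \monoilp together with a base-$(n{+}1)$ encoding of the columns, but the direct construction above is cleaner.)
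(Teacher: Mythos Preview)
Your proof is correct and is essentially the same as the paper's: the paper decomposes $A=A^+-A^-$ with $A^+,A^-\in\{0,1\}^{m\times n}$, guesses $b^+,b^-$ with $b^+-b^-=b$, and outputs the $2m$-constraint monotone system $A^+y=b^+$, $A^-y=b^-$, which is exactly your $\sum_{i\in P_j}x_i=u_j$, $\sum_{i\in N_j}x_i=v_j$ (with $b^+_j=u_j$ and $b^-_j=v_j$). Your version is marginally more economical in that you guess only the $v_j$ and compute $u_j=b_j+v_j$, whereas the paper guesses both vectors and checks consistency, but this makes no asymptotic difference.
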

\begin{proof}
    We decompose the matrix $A$ as $A^+ - A^-$ where $A^+, A^-$ have entries from $\{0,1\}$.
    Suppose that there exists a vector $x$ satisfying $Ax = b$.
    We nondeterministically guess vectors $b^+, b^-$ that satisfy $A^+x = b^+$, $A^-x = b^-$ and we check whether $b^+ - b^- = b$; if no then the guess is rejected.
    This requires $m\log n$ nondeterministic bits.
    We create an instance of \monoilp with $2m$ constraints given as $A^+y = b^+$, $A^-y = b^-$.
    If we made a correct guess, then $y=x$ is a solution to    
    the system above.
    On the other hand, if this system admits a solution $y$ then $Ay = A^+y - A^-y = b^+ - b^- = b$ so $y$ is also a solution to the original instance.
\end{proof}

\begin{lemma}
    \sslog $\equiv_{\textsc{nppt}}$ {\sc Group-$\zz_q$ Subset Sum[$\log q$]}.
\end{lemma}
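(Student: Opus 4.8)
The plan is to give reductions in both directions. In each case I first invoke \Cref{lem:prelim:log} — which applies because, as observed at the start of this section, {\sc Subset Sum} is solvable in time $\Oh(tn)$ and {\sc Group-$\zz_q$ Subset Sum} in time $\Oh(qn)$, i.e.\ both are in the $2^{\poly(k)}n^{\Oh(1)}$ regime — to assume that $\log n$ is bounded by a polynomial in the relevant parameter. After that, the only real content is to account for the modular wrap-around, and since both $\oplus_q$ and integer addition are commutative I may freely switch between ``subsequence'' and ``subset''.

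\textbf{($\le$).} Given a {\sc Subset Sum} instance $(p_1,\dots,p_n,t)$ with $k=\lceil\log t\rceil$, discard any $p_i>t$ (such items cannot occur in a solution), and set $q=nt+1$. Output the {\sc Group-$\zz_q$ Subset Sum[$\log q$]} instance whose elements are $g_i := p_i \in \{0,\dots,q-1\}$ and whose target is $g := t$. Every partial sum $\sum_{i\in I}p_i$ is at most $nt<q$, so reduction modulo $q$ acts as the identity on all sums that can actually occur; hence a subsequence has $\zz_q$-product $g$ if and only if its integer sum is exactly $t$. The new parameter satisfies $\log q \le \log n + \log t + 1$, which is $\poly(k)$ since we may assume $\log n\le\poly(k)$. (Note this direction is already a PPT.)

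\textbf{($\ge$).} Given $g_1,\dots,g_n,g\in\zz_q$ with $k=\lceil\log q\rceil$, regard each $g_i$ as an integer in $\{0,\dots,q-1\}$. A subsequence $I$ satisfies $\sum_{i\in I}g_i\equiv g\pmod q$ iff $\sum_{i\in I}g_i = g+cq$ for some integer $c$, and the bounds $0\le\sum_{i\in I}g_i<nq$ and $0\le g<q$ force $c\in\{0,\dots,n-1\}$. I would absorb this unknown offset in one of two equivalent ways. The nondeterministic route: guess $c\in\{0,\dots,n-1\}$ using $\lceil\log n\rceil=\poly(k)$ bits and output the {\sc Subset Sum} instance $(g_1,\dots,g_n,\,g+cq)$; if the group instance is a no-instance then no guessed $c$ yields a solvable instance (a solution of the latter would be a subsequence with $\zz_q$-product $g$), and if it is a yes-instance the correct $c$ does. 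A cleaner, fully deterministic variant: append $n$ extra items each of value $q$ and use target $t := g+nq$; a group solution with offset $c$ extends to a {\sc Subset Sum} solution by adding $n-c$ of the new items, and conversely any {\sc Subset Sum} solution using $j$ of the new items restricts to a subsequence of $[n]$ with integer sum $g+(n-j)q\equiv g\pmod q$ (and $n-j\ge 0$ because the left-hand side is nonnegative). Either way $\log t \le \log(n+1)+\log q = \poly(k)$.

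I do not anticipate a genuine obstacle here: the construction is elementary, and the only points demanding care are (i) choosing the modulus $q$, resp.\ the number of padding items, large enough to destroy all unwanted wrap-around, and (ii) the routine use of \Cref{lem:prelim:log} so that $\log n$ may be treated as $\poly(k)$ when bounding the output parameter. The statement is phrased with $\equiv_{\textsc{nppt}}$ only for uniformity with the rest of \Cref{thm:main:equivalent}; the deterministic variant above shows that in fact the two problems are $\equiv_{\textsc{ppt}}$.
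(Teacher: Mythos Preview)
Your proof is correct and essentially matches the paper's: both directions argue that the only difference between integer and modular addition is a bounded wrap-around, handled by choosing $q$ large enough in one direction and by guessing the offset $c$ (equivalently, the true integer sum $t'=g+cq$) in the other. Your padding trick for the $(\ge)$ direction---adding $n$ copies of $q$ and targeting $g+nq$---is a nice addition not in the paper; it upgrades the equivalence to $\equiv_{\textsc{ppt}}$, which the paper does not observe.
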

\begin{proof}
    For the reduction $(\le)$ consider $q = nt$ and leave $t$ intact.
    We can assume that each input number belongs to $[1,t)$ hence the sum of every subset belongs to $[1,q)$ and so there is no difference in performing addition in $\zz$ or $\zz_q$.

    Now we handle the reduction $(\ge)$.
    Let $S$ be the subset of numbers that sums up to $t$ modulo $q$.
    Since each item belongs to $[0,q)$ their sum in $\zz$ is bounded by $nq$; let us denote this value as $t'$.
    We nondeterministically guess $t' \in [0, nq]$ and check whether $t' = t \mod m$.
    We consider an instance $J$ of {\sc Subset Sum} over $\zz$ with the unchanged items and the target $t'$.
    We have $\log t' \le \log n + \log q$ what bounds the new parameter as well as the number of necessary
    nondeterministic bits.
    If the guess was correct then $J$ will have a solution.
    Finally, a solution to $J$ yields a solution to the original instance because $t' = t \mod q$.
\end{proof}

Using the presented lemmas, any two problems listed in \Cref{thm:main:equivalent} can be reduced to each other via NPPT.

\section{Permutation Subset Sum}
\label{sec:perm}

This section is devoted to the proof of \Cref{thm:main:permutation}.
We will use an intermediate problem involving a computational model with $\ell$ binary counters,
being a special case of bounded {\em Vector Addition System with States} (VASS)~\cite{vass}.
This can be also regarded as a counterpart of the intermediate problem used for establishing XNLP-hardness, which concerns cellular automata~\cite{BodlaenderGNS21, ElberfeldST15}.

For a sequence $\fcal = (f_1, \dots, f_n)$, $f_i \in \{O,R\}$ (optional/required),
we say that a subsequence of $[n]$ is {\em $\fcal$-restricted} if it contains all the 
indices $i$ with $f_i = R$.
We say that a sequence of vectors $v_1, \dots, v_n \in \{-1,0,1\}^\ell$ forms a {\em 0/1-run} if $v_1 + \dots + v_n = 0$ and for each $j \in [n]$ the partial sum $v_1 + \dots + v_j$ belongs to $\{0,1\}^\ell$.

\defparproblem{0-1 Counter Machine}
{Sequences $\mathcal{V} = (v_1, \dots, v_n),\, v_i \in \{-1,0,1\}^\ell$, and $\fcal = (f_1, \dots, f_n),\, f_i \in \{O,R\}$.}
{$\ell$}
{Is there a subsequence $(i_1 < i_2 < \dots < i_r)$ of $[n]$ that is $\fcal$-restricted and such that $(v_{i_1},v_{i_2},\dots,v_{i_r})$ forms a 0/1-run?}

Intuitively, a vector $v_i \in \{-1,0,1\}^\ell$ tells which of the $\ell$ counters should be increased or decreased.
We must ``execute'' all the vector $v_i$ for which $f_i = R$ plus some others so that the value of each counter is always kept within $\{0,1\}$.

We give a reduction from {\sc 3Coloring[pw]} to \vass.

\defparproblem{3Coloring}{An undirected graph $G$ and a path decomposition of $G$ of width at most {\sc pw}}{\sc pw}{Can we color $V(G)$ with 3 colors so that the endpoints of each edge are assigned different colors?}

\begin{restatable}[$\bigstar$]{lemma}{lemColoring}
\label{lem:perm:coloring}
    {\sc 3Coloring[pw]} $\le_{\textsc{ppt}}$ \vass.
\end{restatable}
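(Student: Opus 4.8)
\emph{Overall plan.} I would follow the standard template for reducing path-decomposition problems into ``sequential'' machine models: scan a nice path decomposition of $G$ from left to right while maintaining, in the $\ell=\Oh(\mathrm{pw})$ counters of the machine, an encoding of a proper $3$-colouring of the current bag, and realise each introduce/forget step by a short block of vectors in the constructed sequence. First I would apply \Cref{prelim:pathwidth:nice} to obtain a nice decomposition $(X_1,\dots,X_r)$ of width $p:=\mathrm{pw}$ with $X_1=X_r=\emptyset$, and greedily assign each vertex $v$ a \emph{slot} $s(v)\in[p+1]$ that stays free throughout $v$'s lifetime (possible since at most $p+1$ vertices lie in a bag at once). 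Every edge $uv$ of $G$ is charged to the introduce step of whichever of $u,v$ is introduced later; by the interval property the other endpoint is still in the bag at that moment, so the edge can be tested then.

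\emph{Counters and transition gadgets.} For each slot $s$ I would use an occupancy counter $\sigma_s$ and three colour counters $c_{s,1},c_{s,2},c_{s,3}$, plus a constant number of scratch counters, so that $\ell=\Oh(p)$; writing $e_z$ for the $z$-th unit vector, all vectors of the instance are $\pm1$ combinations of a few $e_z$. The invariant between consecutive blocks is: if slot $s$ holds a vertex coloured $\gamma$ then $\sigma_s=1$, $c_{s,\gamma}=1$ and $c_{s,\gamma'}=0$ for $\gamma'\neq\gamma$, while every other counter (in particular every scratch counter) is $0$; since $X_1=X_r=\emptyset$ this agrees with the requirement that the $0/1$-run begin and end at the zero vector. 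The block for introducing $v$ into slot $s$ with present neighbours $u_1,\dots,u_k$ in slots $s_1,\dots,s_k$ would (i)~\emph{force a colour choice}: three \emph{optional} vectors $e_{\sigma_s}+e_{c_{s,\gamma}}$, one per colour, followed by a \emph{required} pair $-e_{\sigma_s},+e_{\sigma_s}$ --- taking none of the three drives $\sigma_s$ below $0$, taking two overflows $\sigma_s$, so exactly one colour is selected; (ii)~for each $u_j$ run a \emph{probe} that temporarily routes $v$'s colour bit through $u_j$'s cell, i.e.\ the vector $-e_{c_{s,\gamma}}+e_{c_{s_j,\gamma}}$ and its reverse, which overflows $c_{s_j,\gamma}$ precisely when $u_j$ already carries colour $\gamma$ (an edge conflict) and otherwise restores the state. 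To prevent the solver from skipping, reordering, or misdirecting probes, the probes are threaded through scratch token counters guarded by required vectors, and each probe deposits a ``done'' bit that a concluding required vector consumes, underflowing if any probe was omitted. A forget step is the mirror image: optional vectors $-e_{\sigma_s}-e_{c_{s,\gamma}}$ and a required pair that force exactly the colour bit actually present to be removed.

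\emph{Correctness.} Completeness is routine --- a proper $3$-colouring of $G$ dictates every choice in every block and yields a valid $\fcal$-restricted $0/1$-run. For soundness I would first observe that any \emph{corruption} (a slot with $\sigma_s=1$ but not exactly one colour bit, or a stray scratch bit) must eventually cause an over/underflow or leave the final sum nonzero, hence cannot occur in a valid run; granting that, an induction over the bag boundaries shows the colour counters always encode a well-defined colouring of the current bag, and the forced probes guarantee it is proper on every edge, so $X_r=\emptyset$ delivers a proper $3$-colouring of $G$. Since $\ell=\Oh(p)$ and the sequence length is polynomial in $|V(G)|+r$, the construction is a PPT, as required.

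\emph{Main obstacle.} The crux is the tension built into the gadget design: each block must be permissive enough that every legitimate colouring still admits a valid run, yet restrictive enough that the solver is forced to commit to a genuine colour and to actually carry out each edge test --- and because \emph{optional} vectors let the solver freely drop or rearrange steps, all of this forcing must be simulated with required ``guard'' vectors, token-passing between scratch counters, and carefully placed over/underflow traps so that every shortcut is eventually fatal. Proving that this balance holds in all cases simultaneously --- in particular that no sequence of dishonest probes lets a non-$3$-colourable graph slip through --- is the technical heart of the proof.
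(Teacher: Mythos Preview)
Your overall architecture is exactly the paper's: take a nice path decomposition, greedily label vertices by slots in $[p+1]$, keep one binary counter per (slot, colour) pair plus a few auxiliaries, and emit a short block of optional/required vectors per introduce/forget/edge event. The introduce and forget gadgets you describe are essentially identical to the paper's (the paper uses a single shared helper counter $S$ in place of your per-slot $\sigma_s$, and treats edges as separate commands rather than folding them into the later endpoint's introduce block, but these are cosmetic differences).

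The real issue is your edge gadget. As you describe it, the forward probe $-e_{c_{s,\gamma}}+e_{c_{s_j,\gamma}}$ is indeed only feasible for the true colour $\gamma$ of $v$, and it overflows exactly on a conflict --- good. But the \emph{reverse} step is not pinned down: after the forward probe both $c_{s_j,\gamma}$ and $c_{s_j,\gamma'}$ (the neighbour's own colour) equal $1$, so the reverse vector for $\gamma'$ is equally feasible and produces a \emph{swap} of the two vertices' colours. This is fatal for soundness: on $K_4$, colour $a,b,c$ with $1,2,3$, colour $d$ with $2$, and during the $ad$-probe swap so that $a\mapsto 2$, $d\mapsto 1$; now the probes for $bd$ and $cd$ pass, yet the edge $ab$ (already checked before the swap) is monochromatic in the final state. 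Your ``done'' bit forces some reverse to be taken but does not force it to match the forward choice, and the phrase ``threaded through scratch token counters'' does not by itself rule this out.

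The paper fixes precisely this point, and its device is worth borrowing. Instead of probing one endpoint's colour into the other, it offers six optional vectors $[x_c{\downarrow},\,y_d{\downarrow},\,Z_{c,d}{\uparrow},\,S{\uparrow}]$, one per ordered pair $(c,d)$ with $c\ne d$; only the pair matching the two actual colours is feasible, which simultaneously verifies the edge. Crucially, the dedicated memory counter $Z_{c,d}$ is raised by the forward step and must be lowered by the reverse, so the reverse is \emph{forced} to use the same $(c,d)$ and the state is restored exactly --- no swap is possible. If you want to keep your overflow-style probe, the analogous fix is to add a per-colour token $Z_\gamma$ raised by the forward probe and consumed by the matching reverse; then the reverse for $\gamma'\ne\gamma$ underflows $Z_{\gamma'}$. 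With that single addition your sketch goes through and coincides with the paper's argument.
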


In the proof, 
we assign each vertex a label from  $[\textsc{pw} + 1]$  so that the labels in each bag are distinct. 
We introduce a counter for each pair (label, color) and whenever a vertex is introduced in a bag, we make the machine increase one of the counters corresponding to its label.
For each edge $uv$ there is a bag containing both $u,v$; we then insert a suitable sequence of vectors so that running it is possible if and only if the labels of $u, v$ have active counters in different colors.
Finally, when a vertex is forgotten we deactivate the corresponding counter.

In order to encode the operations on counters as composition of permutations, we will employ the following algebraic construction.
For $q \in \nn$ consider an automorphism $\phi_1 \colon \zz_q^2 \to \zz_q^2$ given as $\phi_1((x,y)) = (y,x)$.
Clearly $\phi_1 \circ \phi_1$ is identify, so there is a homomorphism $\phi\colon\zz_2 \to Aut(\zz_q^2)$ that assigns identity to $0 \in \zz_2$ and $\phi_1$ to $1 \in \zz_2$.
We define the group $U_q$ as the outer semidirect product  $\zz_q^2 \rtimes_\phi \zz_2$ (see \Cref{sec:prelims}).
That is, the elements of $U_q$ are $\{((x,y),z) \mid x,y \in \zz_q, z \in \zz_2\}$ and the group operation $\circ$ is given as 

\begin{equation}
   ((x_1,y_1),z_1) \circ ((x_2,y_2),z_2)=
   \begin{cases}
    ((x_1 \oplus_q x_2), (y_1 \oplus_q y_2), z_1 \oplus_2 z_2)  & \text{if } z_1 = 0\\
    ((x_1 \oplus_q y_2), (y_1 \oplus_q x_2), z_1 \oplus_2 z_2)  & \text{if } z_1 = 1.
   \end{cases}
\end{equation}

The $z$-coordinate works as addition modulo 2 whereas the element $z_1$ governs whether we add $(x_2,y_2)$ or $(y_2,x_2)$ modulo $q$ on the $(x,y)$-coordinates.
The neutral element is $((0,0),0)$.
Note that $U_q$ is non-commutative.

For $((x,y),z) \in U_q$ we define its {\em norm} as $x+y$.
Consider a mapping $\Gamma \colon \{-1,0,1\} \to U_q$ given as $\Gamma(-1) = ((1,0),1)$, $\Gamma(0) = ((0,0),0)$, $\Gamma(1) = ((0,1),1)$.

\begin{lemma}\label{lem:perm:run-group-U}
    Let $b_1,\dots, b_n \in  \{-1,0,1\}$ and $q > n$.
    Then  $b_1,\dots, b_n$ forms a 0/1-run (in dimension $\ell=1$) if and only if the group product $g = \Gamma(b_1) \circ  \Gamma(b_2) \circ \dots \circ  \Gamma(b_n)$ in $U_q$ is of the form $g = ((0,n'),0)$ for some $n' \in[n]$. 
\end{lemma}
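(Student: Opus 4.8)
The plan is to track how a single step of $\Gamma$ acts on a group element and then iterate. First I would observe the key mechanical fact: for any $((x,y),z) \in U_q$,
\begin{equation}
((x,y),z) \circ \Gamma(b) =
\begin{cases}
((x,y),z) & \text{if } b=0,\\
((x+1,y),z\oplus_2 1) & \text{if } z=0,\ b=-1,\\
((x,y+1),z\oplus_2 1) & \text{if } z=0,\ b=1,\\
((x,y+1),z\oplus_2 1) & \text{if } z=1,\ b=-1,\\
((x+1,y),z\oplus_2 1) & \text{if } z=1,\ b=1.
\end{cases}
\end{equation}
Reading this off: when $z=0$ the symbol $1$ adds to the $y$-coordinate, and when $z=1$ the symbol $1$ adds to the $x$-coordinate (and symmetrically for $-1$), and every nonzero symbol flips $z$. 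The crucial consequence is that the $y$-coordinate of the partial product after $j$ steps, call it $((x_j,y_j),z_j) := \Gamma(b_1)\circ\cdots\circ\Gamma(b_j)$, is incremented exactly when $b_j \ne 0$ and the flip-parity $z_{j-1}$ is such that $b_j$ gets routed to $y$. But $z_{j-1}$ is the parity of the number of nonzero symbols among $b_1,\dots,b_{j-1}$; routing to $y$ happens precisely when this parity and $\mathrm{sign}(b_j)$ agree in the pattern above. Unwinding, a short induction shows that $y_j - x_j$ equals the partial sum $b_1+\cdots+b_j$, and $x_j + y_j$ equals the number of nonzero entries among $b_1,\dots,b_j$. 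Since all $b_i \in\{-1,0,1\}$, this identity $y_j - x_j = \sum_{i\le j} b_i$ holds in $\zz$ as long as $q > n$ forces no wraparound, which is guaranteed by hypothesis; I would make this non-wraparound bookkeeping explicit since $x_j, y_j \le j \le n < q$ throughout.

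With the identity $y_j - x_j = \sum_{i\le j} b_i$ in hand, both directions follow. If $b_1,\dots,b_n$ is a $0/1$-run then every partial sum lies in $\{0,1\}$, so at every step $y_j - x_j \in \{0,1\}$ with $x_j,y_j\ge 0$; and the total sum being $0$ gives $y_n = x_n$. One then argues $x_n = 0$: since every partial sum is $0$ or $1$, whenever a partial sum equals $1$ the current routing sends the \emph{next} nonzero symbol (necessarily a $-1$, to return to $0$) to $y$ again, and whenever it equals $0$ the next nonzero symbol (a $+1$) is also routed to $y$ — in both cases the $x$-coordinate is never touched, so $x_j = 0$ for all $j$. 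Hence $g = ((0,n'),0)$ with $n' = y_n$, and $n' = \sum(\text{nonzero entries count})/?$ — more simply $n'=y_n=x_n+y_n\le n$, and $n'\ge 1$ unless all $b_i=0$, which I would either permit ($n'=0$, adjusting the statement's "$n'\in[n]$" reading to include the degenerate all-zero case, or noting that case is trivial) or assume away. Conversely, if $g = ((0,n'),0)$ then $z_n = 0$ so the number of nonzero symbols is even, and $y_n - x_n = \sum b_i = 0 - 0$? Here I need the partial-sum condition too: from $g$ alone I get the total sum is $y_n - x_n = n' - 0 = n'$, which should be $0$, so I actually need to re-examine — the clean statement is that $x_n = 0$ forces, via $y_j - x_j = \sum_{i\le j}b_i$ and $x_j\ge 0$, that all partial sums are $\ge -?$; the right invariant to also extract is that $x_j$ records "how far below zero the partial sum ever effectively went", so $x_n = 0$ combined with the routing rules forces every partial sum into $\{0,1\}$.

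The main obstacle, then, is pinning down exactly which invariant of the pair $(x_j, y_j)$ is maintained — it is not literally "$x_j = \max(0,-\text{partial sum})$" because of the parity-driven routing, so I expect to need a careful two-part induction: (i) $y_j - x_j = \sum_{i\le j} b_i$ always, and (ii) a monotone/sign invariant on $x_j$ (something like: $x_j$ increases only when the partial sum would otherwise dip below its running ``floor'') that shows $x_n = 0 \iff$ all partial sums lie in $\{0,1\}$, given the total is consistent. Once those two invariants are isolated and proved by induction on $j$ — using the case analysis in the displayed step formula above and the bound $q>n$ to stay in $\zz$ — the lemma drops out by specializing to $j=n$. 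I would also double-check the boundary convention for $n'$: the neutral element $((0,0),0)$ arises iff the run is empty or all-zero, so depending on whether the problem allows the trivial $0/1$-run I either include $n'=0$ or note it separately; this is cosmetic relative to the core argument.
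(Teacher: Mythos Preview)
Your displayed step formula is correct, but the invariant you then assert, $y_j - x_j = \sum_{i\le j} b_i$, is false, and both directions of your argument rest on it. Already for $b=(1,-1)$ your own table gives $((0,0),0)\circ\Gamma(1)=((0,1),1)$ and then $((0,1),1)\circ\Gamma(-1)=((0,2),0)$, so $y_2-x_2=2$ while the partial sum is $0$. The failure is visible in the table itself: when $z=1$ the sign is reversed ($b=-1$ increments $y$ and $b=1$ increments $x$), so $y-x$ moves with the partial sum only while $z=0$, and $z$ flips at every nonzero symbol. Your converse paragraph already shows the strain: from $g=((0,n'),0)$ and the claimed identity you would get $\sum b_i = n'$, not $0$, which is why you had to ``re-examine''. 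You acknowledge that the right invariant is the obstacle, but the proposal never supplies one, so as written there is no proof of either direction.

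The paper sidesteps the search for a running invariant. It first discards the zeros (so $z_j=j\bmod 2$) and observes that a $0/1$-run over $\{-1,1\}$ is literally the alternating word $1,-1,1,-1,\dots,1,-1$. The forward direction is then the single computation $\Gamma(1)\circ\Gamma(-1)=((0,2),0)$, giving $g=((0,n),0)$. For the converse it does a short case split on the first place the word deviates from alternation --- $b_1=-1$, or two consecutive $1$'s, or two consecutive $-1$'s --- and shows in each case that the $x$-coordinate becomes positive; since $x$ is non-decreasing and $q>n$ prevents wraparound, it stays positive. If you want to rescue the invariant route, the correct statement (after deleting zeros) is $x_j=\bigl|\{i\le j: b_i=(-1)^i\}\bigr|$, the number of positions where $b$ deviates from the alternating pattern; this drops out of your step table and yields the lemma immediately, since $x_n=0$ together with $z_n=0$ forces $b$ to be the even-length alternating word.
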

\begin{proof}
    Recall that $\Gamma(0)$ is the neutral element in $U_q$.
    Moreover, removing 0 from the sequence does not affect the property of being a 0/1-run, so we can assume that $b_i  \in \{-1,1\}$ for each $i \in [n]$.
    Note that the inequality $q > n$ is preserved by this modification.
    This inequality is only needed to ensure that the addition never overflows modulo $q$.

    Suppose now that $b_1,\dots, b_n$ is a 0/1-run.
    Then it comprises alternating 1s and -1s: $(1,-1,1,-1,\dots,1,-1)$.
    Hence the product $g = \Gamma(b_1)  \circ \dots \circ  \Gamma(b_n)$ equals $(\Gamma(1) \circ \Gamma(-1))^{n/2}$.
    We have $((0,1),1) \circ ((1,0),1) = ((0,2),0)$ and so $g = ((0,n),0)$.

    Now suppose that $b_1,\dots, b_n$ is not a 0/1-run.
    Then either $\sum_{i=1}^n b_i = 1$ or $\sum_{i=1}^j b_i \not\in \{0,1\}$ for some $j \in [n]$.
    In the first scenario $n$ is odd so $g$ has 1 on the  $z$-coordinate and so it is not in the form of $((0,n'),0)$.
    In the second scenario there are 3 cases: (a) $b_1 = -1$, (b) $(b_i,b_{i+1}) = (1,1)$ for some $i\in[n-1]$, or (c) $(b_i,b_{i+1}) = (-1,-1)$ for some $i\in[n-1]$.
    
    Case (a): $g = \Gamma(-1) \circ h = ((1,0),1) \circ h$ for some $h \in U_q$ of norm $\le n-1$. 
    Then $g$ cannot have 0 at the $x$-coordinate because $n<q$ and the addition does not overflow.
    
    Case (b): $\Gamma(1)^2 = ((0,1),1)^2 = ((0,1) \oplus_q (1,0), 1 \oplus_2 1) = ((1,1),0)$.
    For any $h_1,h_2 \in U_q$ of total norm $\le n-1$ the product $h_1 \circ ((1,1),0) \circ h_2$ cannot have 0 at the $x$-coordinate.

    Case (c): Analogous to (b) because again $\Gamma(-1)^2 = ((1,0),1)^2 =  ((1,0) \oplus_q (0,1), 1 \oplus_2 1) = ((1,1),0)$.
\end{proof}

Next, we show how to embed the group $U_q$ into a permutation group over a universe of small size.
On an intuitive level, we need to implement two features: counting modulo $q$ on both coordinates and a mechanism to swap the coordinates.
To this end, we will partition the universe into two sets corresponding to the two coordinates.
On each of them, we will use a permutation of order $q$ to implement counting without interacting with the other set.
Then we will employ a permutation being a bijection between the two sets, which will work as a switch.
See \Cref{fig:perm} for a~visualization.

\begin{figure}[t]
\centering
\includegraphics[scale=1.0]{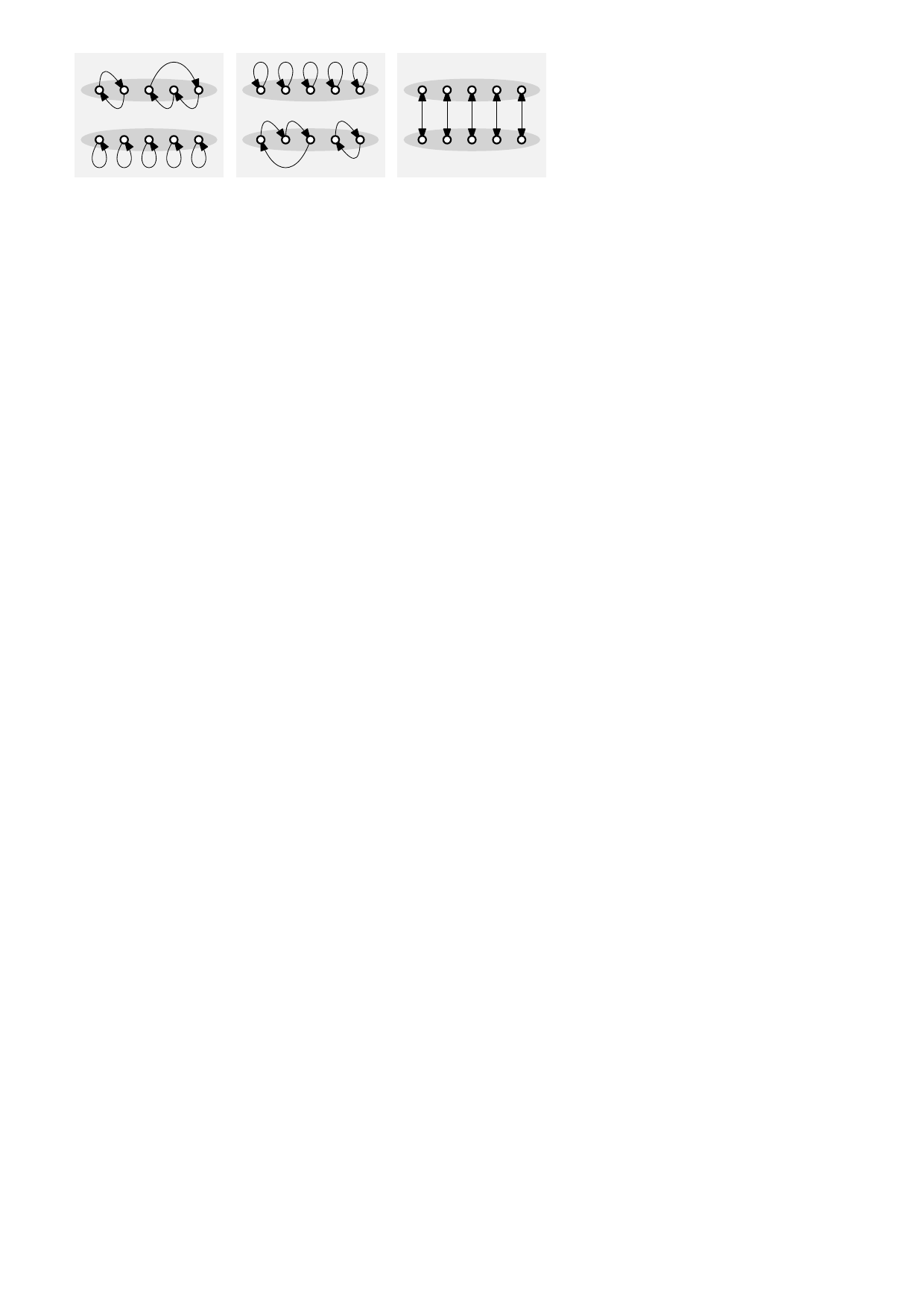}
\caption{
An illustration to \Cref{lem:perm:homo}. 
The three permutations are $\pi_1, \pi_0, \pi_z \in S_{10}$.
The first one acts as $g$ on the upper set and as identity on the lower set.
In the second one these roles are swapped whereas $\pi_z$ acts as symmetry between the two sets. 
The permutations $\widehat{\Gamma}(1), \widehat{\Gamma}(-1)$ in \Cref{lem:perm:run-group-perm} are obtained as $\pi_0 \circ \pi_z$ and $\pi_1 \circ \pi_z$.
Multiplying a sequence of permutations from $\{\widehat{\Gamma}(1), \widehat{\Gamma}(-1)\}$ yields a permutation acting as identity on the upper set if and only if the arguments are alternating~1s~and~-1s.
}\label{fig:perm}
\end{figure}

\begin{lemma}\label{lem:perm:homo}
    For every $n \in \nn$ there exist $q > n$ and $\hat r = \Oh(\log^3 n)$ for which there is a homomorphism $\chi \colon U_q \to S_{\hat r}$.
\end{lemma}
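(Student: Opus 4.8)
The plan is to realize $U_q$ directly as a subgroup of a symmetric group on $\Oh(\log^2 n)$ points, mimicking the picture in \Cref{fig:perm}: split the ground set into an ``upper'' half $A$ and a ``lower'' half $B$ of equal size, implement counting modulo $q$ independently on each half, and use a fixed bijection between $A$ and $B$ as the coordinate-swapping switch. The first ingredient is to pick $q$ with $q > n$ such that $\zz_q$ embeds into $S_m$ for a small $m$. I would take $q$ to be the product of all primes not exceeding a threshold $x := \Theta(\log n)$. By Chebyshev's estimate $\sum_{p \le x}\log p = \Omega(x)$, so with the constant in $x$ chosen appropriately we get $q > n$; moreover $q$ is squarefree, so by the Chinese Remainder Theorem $\zz_q \cong \prod_{p \le x}\zz_p$, and hence there is a permutation $\sigma \in S_m$ — a union of disjoint cycles whose lengths are the primes $p \le x$ — of order exactly $q$, where $m := \sum_{p \le x} p = \Oh(x^2/\log x) = \Oh(\log^2 n)$, since there are $\Oh(x/\log x)$ such primes and each is at most $x$. (All of this is computable in time $\poly(\log n)$, which is what the application will need.)

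Next I would build the embedding inside $S_{2m}$, with ground set $A \sqcup B$ and $|A| = |B| = m$. Fix an involution $\pi_z \in S_{2m}$ that maps $A$ onto $B$ and $B$ onto $A$ via some bijection, let $\sigma_A$ act as $\sigma$ on $A$ and trivially on $B$, and set $\sigma_B := \pi_z\,\sigma_A\,\pi_z^{-1}$, which acts as a copy of $\sigma$ on $B$ and trivially on $A$. Then $\sigma_A$ and $\sigma_B$ commute (disjoint supports), each has order $q$, $\pi_z^2 = \mathrm{id}$, and $\pi_z\sigma_A\pi_z^{-1} = \sigma_B$, $\pi_z\sigma_B\pi_z^{-1} = \sigma_A$. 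Put $G := \langle \sigma_A,\sigma_B,\pi_z\rangle$, $N := \langle \sigma_A,\sigma_B\rangle \cong \zz_q^2$, and $H := \langle\pi_z\rangle \cong \zz_2$.

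To finish, I would verify the hypotheses of \Cref{lem:prelim:semidirect}: $N$ is normal in $G$ (conjugating by $\pi_z$ swaps the two generators of $N$, conjugation inside $N$ is trivial), $N \cap H = \{\mathrm{id}\}$ (every element of $N$ fixes $A$ setwise while $\pi_z$ does not), and $NH = G$; hence each $g \in G$ factors uniquely as $n \circ h$ with $n \in N$, $h \in H$, so $|G| = q^2\cdot 2 = |U_q|$. \Cref{lem:prelim:semidirect} then gives $G \cong N \rtimes_\psi H$ with $\psi_{\pi_z}(\sigma_A^{x}\sigma_B^{y}) = \pi_z\,\sigma_A^{x}\sigma_B^{y}\,\pi_z^{-1} = \sigma_A^{y}\sigma_B^{x}$, and under the identifications $\sigma_A^{x}\sigma_B^{y} \leftrightarrow (x,y)$ and $\pi_z \leftrightarrow 1$ this is precisely the automorphism $\phi_1(x,y) = (y,x)$ used to define $U_q$. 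Thus $G \cong U_q$, and composing this isomorphism with the inclusion $G \hookrightarrow S_{2m}$ yields the desired homomorphism $\chi \colon U_q \to S_{\hat r}$ (in fact an embedding) with $\hat r = 2m = \Oh(\log^2 n) \subseteq \Oh(\log^3 n)$.

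The only real subtlety I anticipate is bookkeeping: matching the ``twisted'' multiplication rule of $U_q$ — where the switch coordinate $z_1$ acts on the second factor before the $\zz_q^2$-parts are added — with conjugation in $S_{2m}$, i.e.\ getting the side of the semidirect-product action right so that \Cref{lem:prelim:semidirect} applies verbatim rather than yielding the ``opposite'' product. The number-theoretic estimate $q > n$ together with $\sum_{p \le x} p = \Oh(\log^2 n)$ is standard, and there is comfortable slack since the target bound is $\Oh(\log^3 n)$ rather than $\Oh(\log^2 n)$.
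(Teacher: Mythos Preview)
Your proposal is correct and follows essentially the same route as the paper: build two disjoint copies of a high-order cyclic permutation, add a swap involution, and apply \Cref{lem:prelim:semidirect} to identify the resulting subgroup with $U_q$. The only difference is cosmetic: the paper obtains the order-$q$ permutation by invoking \Cref{lem:prelim:landau} (yielding $r = \Oh(\log^3 n)$), whereas you unpack the same prime-product construction directly and thereby get the sharper $\hat r = \Oh(\log^2 n)$; the embedding and the semidirect-product verification are otherwise identical.
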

\begin{proof}
    By \Cref{lem:prelim:landau} we can find a permutation $g$ of order $q > n$ in $S_{r}$ for some $r = \Oh(\log^3 n)$.
    The subgroup of $S_r$ generated by $g$ is isomorphic to $\zz_q$.
    We will now consider the permutation group over the set $[r] \times \zz_2$, which is isomorphic to $S_{2r}$.
    Instead of writing $\chi$ explicitly, we will identify a subgroup of $S_{2r}$ isomorphic to $U_q$.
    
    Let $\pi_z$ be the permutation given as $\pi_z(i,j) = (i,1-j)$ for $(i,j) \in [r]\times \zz_2$, i.e., it switches the second coordinate.
    Let $\pi_0$ act as $g$ on $[r] \times 0$ and as identify on  $[r] \times 1$.
    Analogously, let $\pi_1$ act as $g$ on $[r] \times 1$ and as identify on  $[r] \times 0$.
    Let $N$ be the subgroup of $S_{2r}$ generated by $\pi_0$ and $\pi_1$; it is isomorphic to $\zz_q^2$ and each element of $N$ is of the form $(\pi_0^x, \pi_1^y)$ for some $x,y \in \zz_q$.
    Next, let $H$ be the subgroup generated by $\pi_z$; it is isomorphic to $\zz_2$.
    Now consider a homomorphism $\phi \colon H \to Aut(N)$ given as conjugation  $\phi_\pi(g) = \pi \circ g \circ \pi^{-1}$.
    In this special case, the semidirect product $N \rtimes_\phi H$ is isomorphic to the subgroup of $S_{2r}$ generated by the elements of $N$ and $H$ (\Cref{lem:prelim:semidirect}).
    On the other hand, $\phi_{\pi_z}$ maps $(\pi_0^x, \pi_1^y) \in N$ into $(\pi_0^y, \pi_1^x)$ so this is exactly the same construction as used when defining $U_q$.
    We infer that $U_q$ is isomorphic to a subgroup of $S_{2r}$ and the corresponding homomorphism is given by the mapping of the generators: $\chi((0,1),0) = \pi_0$, $\chi((1,0),0) = \pi_1$, $\chi((0,0),1) = \pi_z$.
    
\end{proof}
Armed with such a homomorphism, we translate \Cref{lem:perm:run-group-U} to the language of permutations.

\begin{lemma}\label{lem:perm:run-group-perm}
    For every $n \in \nn$ there exists $r = \Oh(\log^3 n)$, a permutation $\pi \in S_r$ of order greater than $n$, and a mapping $\widehat\Gamma \colon \{-1,0,1\} \to S_r$ so that the following holds.
    A sequence $b_1,\dots, b_n \in  \{-1,0,1\}$ is a 0/1-run if and only if the product $\widehat\Gamma(b_1) \circ  \widehat\Gamma(b_2) \circ \dots \circ  \widehat\Gamma(b_n)$ is of the form $\pi^{n'}$ for some $n' \in [n]$.
\end{lemma}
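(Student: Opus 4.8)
The plan is to combine \Cref{lem:perm:run-group-U} with the homomorphism $\chi\colon U_q \to S_{\hat r}$ from \Cref{lem:perm:homo} in a straightforward way. First I would fix $n$ and invoke \Cref{lem:perm:homo} to obtain $q > n$, a value $\hat r = \Oh(\log^3 n)$, and the homomorphism $\chi$; set $r := \hat r$. Next I define $\widehat\Gamma := \chi \circ \Gamma$, i.e., $\widehat\Gamma(b) = \chi(\Gamma(b))$ for $b \in \{-1,0,1\}$, and I set $\pi := \chi((0,2),0)$, which is the image under $\chi$ of the element $((0,2),0) \in U_q$ appearing in \Cref{lem:perm:run-group-U}. (Concretely, using the generator identifications in \Cref{lem:perm:homo}, $\pi = \chi((0,1),0)^2 = \pi_0^2$, which has order $q/\gcd(q,2)$; if one wants order exactly $> n$ on the nose, one can instead pick $q$ odd in \Cref{lem:perm:homo}, or just note $q/\gcd(q,2) \ge q/2$ and re-run \Cref{lem:prelim:landau} with the bound $q > 2n$, which only changes constants.)

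The core of the argument is then purely formal. Since $\chi$ is a homomorphism,
\[
\widehat\Gamma(b_1) \circ \widehat\Gamma(b_2) \circ \dots \circ \widehat\Gamma(b_n) = \chi\br{\Gamma(b_1) \circ \Gamma(b_2) \circ \dots \circ \Gamma(b_n)} = \chi(g),
\]
where $g$ is the product in $U_q$ from \Cref{lem:perm:run-group-U}. Likewise $\pi^{n'} = \chi((0,2),0)^{n'} = \chi\br{((0,2),0)^{n'}} = \chi((0,2n'),0)$ (again reading addition on the $y$-coordinate modulo $q$, which is harmless since $2n' \le 2n < 2q$ and we will only care about $n' \in [n]$). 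By \Cref{lem:perm:run-group-U}, $b_1,\dots,b_n$ is a $0/1$-run iff $g = ((0,n'),0)$ for some $n' \in [n]$. So the $(\Rightarrow)$ direction is immediate: if it is a $0/1$-run, then $g = ((0,n'),0)$, and inspecting the proof of \Cref{lem:perm:run-group-U} one sees $n$ is even and $g = ((0,n),0) = ((0,2),0)^{n/2}$, hence $\chi(g) = \pi^{n/2}$ with $n/2 \in [n]$.

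The one subtlety — and the only place requiring a small argument rather than bookkeeping — is the $(\Leftarrow)$ direction: I must rule out that $\chi(g) = \pi^{n'}$ for some $n'$ while $g \ne ((0,m),0)$ for every $m \in [n]$. This is where I would use that $\chi$ is \emph{injective on the relevant part of $U_q$}: by the construction in \Cref{lem:perm:homo}, $\chi$ is in fact an isomorphism from $U_q$ onto a subgroup of $S_{2r}$, so $\chi(g) = \chi((0,2n'),0)$ forces $g = ((0,2n'),0)$ in $U_q$, i.e., $g$ has the required form with second coordinate $2n' \bmod q$. Combined with the fact (from the proof of \Cref{lem:perm:run-group-U}) that the $x$-coordinate and $z$-coordinate of $g$ are both forced to be $0$ exactly when the sequence is a $0/1$-run, and that in that case the norm is exactly $n \le n$, we conclude $b_1,\dots,b_n$ is a $0/1$-run. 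I expect the injectivity of $\chi$ to be the point that needs care: one should either state explicitly in \Cref{lem:perm:homo} that $\chi$ is injective (which it is, being the isomorphism $U_q \cong \langle \pi_0,\pi_1,\pi_z\rangle \le S_{2r}$ exhibited there), or carry the element $g$ itself rather than its image through the comparison and only apply $\chi$ at the very end — the latter avoids needing injectivity altogether, since both the characterization "$g = ((0,n'),0)$" and the target "$\pi^{n'}$" are statements one can phrase inside $U_q$ first and then push forward. I would adopt this second route to keep the lemma self-contained.
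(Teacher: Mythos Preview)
Your approach is the paper's: set $\widehat\Gamma = \chi\circ\Gamma$ and transport \Cref{lem:perm:run-group-U} through the homomorphism $\chi$ of \Cref{lem:perm:homo}. Two remarks. First, the paper takes the simpler choice $\pi = \chi((0,1),0)$, which has order exactly $q > n$ and gives $\pi^{n'} = \chi((0,n'),0)$ on the nose; with this choice the biconditional of \Cref{lem:perm:run-group-U} carries over verbatim and you avoid the detours through $q > 2n$, the parity of $q$, and re-entering the proof of \Cref{lem:perm:run-group-U} to cope with $2n' \notin [n]$. Second, your closing suggestion that injectivity of $\chi$ can be sidestepped by ``phrasing everything in $U_q$ and pushing forward at the end'' does not actually work for the $(\Leftarrow)$ direction: from $\chi(g) = \chi((0,n'),0)$ you must still deduce $g = ((0,n'),0)$, and that is exactly injectivity. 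This is harmless since $\chi$ \emph{is} injective---it is the isomorphism onto a subgroup constructed in \Cref{lem:perm:homo}---and the paper's proof (like your main argument) uses this implicitly.
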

\begin{proof}
    Let $\chi \colon U_q \to S_r$ be the homomorphism from \Cref{lem:perm:homo} for $q > n$ and $r = \Oh(\log^3 n)$.
    We define $\widehat\Gamma \colon \{-1,0,1\} \to S_r$
    as $\widehat\Gamma(i) = \chi(\Gamma(i))$ using the mapping $\Gamma$ from \Cref{lem:perm:run-group-U}.
    Since $\chi$ is a homomorphism, the condition $\Gamma(b_1) \circ \dots \circ \Gamma(b_n) = ((0,n'),0)$ is equivalent to $\widehat\Gamma(b_1) \circ \dots \circ \widehat\Gamma(b_n) = \chi(((0,n'),0))$.
    We have $g = \chi(((0,n'),0))$ for some $n' \in [n]$ if and only if $g = \pi^{n'}$ 
    for $\pi = \chi((0,1),0)$.
    The order of $\pi$ is $q > n$, as requested.
\end{proof}

For a sequence of vectors from $\{0,1\}^\ell$ we can use a Cartesian product of $\ell$ permutation groups $S_r$ to check the property of being a 0/1-run by inspecting the product of permutations from $S_{\ell r}$. 
This enables us to encode the problem with binary counters as {\sc Group-$S_k$ Subset Sum[$k$]}.
We remark that we need nondeterminism to guess the target permutation.
This boils down to guessing
the number $n'$ from \Cref{lem:perm:run-group-perm} for each of $\ell$ coordinates.

Finally, \Cref{thm:main:permutation} follows by combining \Cref{lem:perm:coloring} with \Cref{thm:perm:final}.

\begin{restatable}[$\bigstar$]{lemma}{lemPermFinal}
\label{thm:perm:final}
    \vass $\le_{\textsc{nppt}}$ {\sc Group-$S_k$ Subset Sum[$k$]}.
\end{restatable}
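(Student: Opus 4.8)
The plan is to turn a \vass instance $(\mathcal{V},\fcal)$ with $\mathcal{V}=(v_1,\dots,v_n)$, $v_i\in\{-1,0,1\}^\ell$, into an instance of {\sc Group-$S_k$ Subset Sum[$k$]}, handling the $\fcal$-restriction and the 0/1-run condition separately and then gluing the coordinates together by a Cartesian product. First I would preprocess as in the other proofs of this section: by \Cref{lem:prelim:log} (using the fact, noted at the top of \Cref{sec:subsetsum} and usable here through the fact that \vass is solvable in time $2^{\Oh(\ell)}\poly(n)$ by dynamic programming over the $2^\ell$ reachable counter states) we may assume $\log n \le \poly(\ell)$. Fix $n$ and invoke \Cref{lem:perm:run-group-perm} to get $r = \Oh(\log^3 n) = \poly(\ell)$, a permutation $\pi\in S_r$ of order $q>n$, and $\widehat\Gamma\colon\{-1,0,1\}\to S_r$ such that a sequence $b_1,\dots,b_m$ with $m\le n$ over $\{-1,0,1\}$ is a 0/1-run iff $\widehat\Gamma(b_1)\circ\cdots\circ\widehat\Gamma(b_m)=\pi^{n'}$ for some $n'\in[n]$.

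Next I would build the ambient group as $S_r^\ell$, which embeds into $S_{r\ell}$ acting on the disjoint union of $\ell$ copies of $[r]$; here $k = r\ell = \poly(\ell)$, so $\log|S_k|=\Oh(k\log k)=\poly(\ell)$ as required. For each item index $i\in[n]$ define the group element $h_i \in S_r^\ell$ whose $j$-th component is $\widehat\Gamma\big((v_i)_j\big)$. The sequence of group elements handed to {\sc Group-$S_k$ Subset Sum} is $h_1,\dots,h_n$ in this order. For the target: we nondeterministically guess, for each coordinate $j\in[\ell]$, a number $n'_j\in[n]$, which costs $\ell\lceil\log n\rceil\le\poly(\ell)$ bits, and set the target element $g^\star\in S_r^\ell$ to be the one whose $j$-th component is $\pi^{n'_j}$. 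This is where nondeterminism enters, exactly as flagged before the lemma statement: the target run of a counter may end at any value in $[n]$, so we cannot name the target permutation deterministically.

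The $\fcal$-restriction requires a small gadget so that the chosen subsequence is forced to contain every required index. For each $i$ with $f_i=R$ I would make $h_i$ \emph{unavoidable} by reserving a fresh pair of coordinates (or a fresh small block of the universe) dedicated to index $i$: augment the group with an extra factor $\zz_2$ (i.e.\ $S_2$) per required index, let $h_i$ act as the transposition on its own block and trivially on the others, and set the corresponding target component to the transposition as well; since a block can only be "flipped" by its unique generator, a subsequence reaching the target must include index $i$. Because each such block has constant size, adding all of them keeps $k=\poly(\ell)$ (there are at most $n$ required indices, but they contribute only $\Oh(n)$ to the universe size — wait, that is too large; instead one encodes the at most $n$ parities using $\Oh(\log n)$ blocks via the same trick used for the Graver-basis run, or, more simply, one observes that $\log n\le\poly(\ell)$ so $\Oh(n)=\Oh(2^{\poly(\ell)})$ is NOT polynomial, hence I would instead prepend the required $v_i$'s as a mandatory prefix/suffix is impossible since order matters — so the correct route is to reserve $\Oh(\log n)$ coordinates and realize the incidence vector of required indices inside them using \Cref{lem:graver:contruction}, exactly analogously to \Cref{lem:ss:zeroilp}). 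Concretely: pad $\mathcal V$ with dummy optional vectors and pick a Graver-type sequence on $\Oh(\log n)$ coordinates so that the only way to make those coordinates return to a designated target is to pick precisely the required indices together with a fixed set of dummies.

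Finally I would verify correctness. In the forward direction, an $\fcal$-restricted subsequence $(i_1<\dots<i_r)$ forming a 0/1-run projects, coordinate by coordinate, to a 0/1-run of the scalars $(v_{i_t})_j$ (the partial sums stay in $\{0,1\}$ in each coordinate because they do so in $\{0,1\}^\ell$); by \Cref{lem:perm:run-group-perm} the $j$-th component of $h_{i_1}\circ\cdots\circ h_{i_r}$ is $\pi^{n'_j}$ for some $n'_j\in[n]$, and guessing those $n'_j$ correctly makes the product equal $g^\star$; the auxiliary blocks check out because exactly the required indices are present. Conversely, if for some guess the product of a chosen subsequence equals $g^\star$, then in each real coordinate $j$ the product is $\pi^{n'_j}$, so by \Cref{lem:perm:run-group-perm} the projected scalar sequence is a 0/1-run, hence the vector partial sums lie in $\{0,1\}^\ell$ and the total is $0$; and the auxiliary-block constraint forces all required indices into the subsequence. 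Thus the original \vass instance is a yes-instance iff some guess yields a solvable {\sc Group-$S_k$ Subset Sum} instance, with $k=\poly(\ell)$ and $\poly(\ell)$ guessed bits, which is an NPPT. I expect the main obstacle to be the $\fcal$-restriction gadget: naively it wants one coordinate per required index, which is too many because $n$ can be superpolynomial in $\ell$, so the step that needs care is compressing the "must-include" constraints into $\Oh(\log n)=\poly(\ell)$ coordinates using the Graver-basis construction of \Cref{lem:graver:contruction}, mirroring the trick in \Cref{lem:ss:zeroilp}.
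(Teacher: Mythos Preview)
Your handling of the 0/1-run condition---working in $S_r^\ell$, setting the $j$-th factor of the $i$-th element to $\widehat\Gamma\big((v_i)_j\big)$, and nondeterministically guessing exponents $n'_j$ for the target $(\pi^{n'_1},\dots,\pi^{n'_\ell})$---is exactly the paper's construction.

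Where you diverge is the $\fcal$-restriction, and here you are working far harder than necessary. The paper adds \emph{one} extra $S_r$-factor (so $k=(\ell{+}1)r$): the $(\ell{+}1)$-st component of the $i$-th group element is $\pi$ when $f_i=R$ and the identity when $f_i=O$, and the target on that factor is $\pi^{f_C}$ where $f_C$ is the number of required indices. Since any subsequence contributes at most $f_C\le n$ copies of $\pi$ on this factor and $\pi$ has order greater than $n$, the product there equals $\pi^{f_C}$ if and only if all required indices are chosen. A single cyclic counter of large enough order---precisely the permutation $\pi$ you already invoked---does the whole job; no per-index blocks and no Graver machinery are needed.

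Your Graver-based detour can be made to work (assign the vectors from \Cref{lem:graver:contruction} to the required indices and to fresh dummy items with zero real part, then embed each of the $O(\log n)$ integer coordinates into a copy of $\langle\pi\rangle\cong\zz_q\subseteq S_r$), and it stays within $k=\poly(\ell)$. But as written it is only a sketch: you never specify how the $\{-1,0,1\}$-vectors become permutations, and the analogy with \Cref{lem:ss:zeroilp} is not immediate because that lemma operates over $\zz$, not inside a symmetric group. The paper's counting trick is both simpler and tighter, and it is the observation you are missing.
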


\section{Conclusion}

We have introduced the nondeterministic polynomial parameter transformation (NPPT) and used this concept to shed some light on the unresolved questions about short certificates for FPT problems.
We believe that our work will give an impetus for further systematic study of certification complexity in various contexts.

The main question remains to decipher certification complexity of \sslog.
Even though {\sc Subset Sum} enjoys a seemingly simple structure, some former breakthroughs required advanced techniques such as additive combinatorics~\cite{AbboudBHS22, chen2024approximating} or number theory~\cite{Kane10}.
\Cref{thm:main:equivalent} makes it now possible to analyze \sslog through the geometric lens using concepts such as lattice cones~\cite{EisenbrandS06} or Graver bases~\cite{berndt2024new, berstein2009graver, kudo2013lower}.

Drucker et al.~\cite{DruckerNS16} suggested also to study {\sc $k$-Disjoint Paths} and {\sc $K$-Cycle} in their regime of polynomial witness compression.
Recall that the difference between that model and ours is that they ask for a \emph{randomized} algorithm that \emph{outputs} a solution.
Observe that a polynomial certificate (or witness compression) for  {\sc $k$-Disjoint Paths} would entail an algorithm with running time $2^{k^{\Oh(1)}}n^{\Oh(1)}$ which seems currently out of reach~\cite{kawarabayashi2012disjoint}.
What about a certificate of size $(k + \log n)^{\Oh(1)}$?

Interestingly, {\sc Planar $k$-Disjoint Paths} does admit a polynomial certificate: if $k^2 \le \log n$ one can execute the known $2^{\Oh(k^2)}n^{\Oh(1)}$-time algorithm~\cite{LokshtanovMP0Z20} and otherwise one can guess the homology class of a solution (out of $n^{\Oh(k)} \le 2^{\Oh(k^3)}$) and then solve the problem in polynomial time~\cite{schrijver1994finding}.
Another interesting question is whether {\sc $k$-Disjoint Paths} admits a certificate of size $(k + \log n)^{\Oh(1)}$ on acyclic digraphs.
Note that we need to incorporate $(\log n)$ in the certificate size because the problem is W[1]-hard when parameterized by $k$~\cite{Slivkins10}.
The problem admits an $n^{\Oh(k)}$-time algorithm based on dynamic programming~\cite{fortune1980directed}.

For {\sc $K$-Cycle} we cannot expect to rule out a polynomial certificate via a PPT from \andsat because the problem admits a polynomial compression~\cite{Wahlstrom13Abusing}, a property unlikely to hold for \andsat~\cite{fomin2019kernelization}.
Is it possible to establish the 
certification hardness by NPPT (which does not preserve polynomial compression) or
would such a reduction also lead to unexpected consequences?


A different question related to bounded nondeterminism is whether 
one can rule out a logspace algorithm for 
directed reachability (which is NL-complete) using only polylog($n$) nondeterministic bits.
Observe that relying on the analog of the assumption \conp for NL would be pointless because NL $=$ coNL by Immerman-Szelepcsényi Theorem.
This direction bears some resemblance to the question whether directed reachability can be solved in polynomial time and polylogarithmic space, i.e., whether NL $\sub$ SC~\cite{aaronson2005complexity}.

\bibliographystyle{plainurl}
\bibliography{main}

\appendix

\section{Scheduling}

A {\em job} is represented by a triple $(p_i,w_i,d_i)$ of positive integers (processing time, weight, due date).
For a sequence of $n$ jobs a {\em schedule} is a permutation $\rho \colon [n] \to [n]$.
The {\em completion time} $C_i$ of the $i$-th job in a schedule $\rho$ equals $\sum_{j \in [n],\, \rho(j) \le \rho(i)} p_j$.
A job is called {\em tardy} is $C_i > d_i$. 
A job is being {\em processed} at time $x$ if $x \in (C_i - p_i, C_i]$.

\defparproblem{Scheduling Weighted Tardy Jobs}{A sequence of jobs  $(p_1,w_1,d_1), (p_2,w_2,d_2),\dots,(w_n,p_n,d_n)$, integer $w$}{$\log (d_{max}+w_{max})$}{Is there a schedule in which the total weight of tardy jobs is at most $w$?}

We show that {\sc Scheduling Weighted Tardy Jobs} does not admit a polynomial certificate as long as \conp and \Cref{conj:certificate} holds.
The proof utilizes a construction that appeared in a conditional lower bound on pseudo-polynomial running time for {\sc Scheduling Weighted Tardy Jobs}~\cite[Lemma 3.4]{AbboudBHS20}.
On an intuitive level, we will pack multiple instances of {\sc Subset Sum} into a single instance of {\sc Scheduling Weighted Tardy Jobs} and adjust the weights to enforce that all the {\sc Subset Sum} instances must be solvable if a good schedule exists.

\thmMainScheduling*
\begin{proof}
    Consider an instance \andsat given by a sequence of formulas $\Phi_1, \dots, \Phi_n$ of {\sc 3SAT}, each on at most $k$ variables.
    If $2^k \le n$ then the running time $2^k\cdot\poly(k)\cdot n$ becomes polynomial in $n$ so we can assume that $n \le 2^k$.
    We transform each instance $\Phi_j$ into an equivalent instance $(S_j,t_j)$ of {\sc Subset Sum} following the standard NP-hardness proof for the latter~\cite{jansen2016bounding}.
    The number of items in $S_j$, as well as logarithm of $t_j$, becomes linear in the number of variables plus the number of clauses  in $\Phi_j$, which is $\Oh(k^3)$.
    
    Now we create an instance $I$ of {\sc Scheduling Weighted Tardy Jobs} with $\sum_{i=1}^n |S_i|$ items.
    Let $s_j = t_1 + \dots + t_j$ for $j \in [n]$ and $s_0 = 0$.
    For each $j \in [n]$ and $x \in S_j$ we create a job $u$ with due date $d_u = s_j$, processing time $p_u = x$, and weight $w_u = p_u \cdot (n+1-j)$; we say that such job $u$ comes from the $j$-th set.
    We claim that $I$ admits a schedule with total weight of non-tardy jobs at least $\sum_{i=1}^n t_i\cdot(n+1-i)$ if and only if all $n$ instances  $(S_j,t_j)$ of {\sc Subset Sum} are solvable.
    
    The implication $(\Leftarrow)$ is easy.
    For $j\in [n]$ let $S'_j \sub S_j$ be the subset summing up to $t_j$.
    We schedule first the jobs corresponding to the elements of $S'_1$, then the ones corresponding to $S'_2$, and so on.
    The tardy jobs are scheduled in the end in an arbitrary order.
    Then the jobs coming from the $j$-th set are being processed within the interval $(s_{j-1}, s_j]$ of length $t_j$ so they all meet their deadlines.
    The weight of the non-tardy jobs from the $j$-th set equals $\sum_{x \in S'_j} x \cdot (n+1-j) = t_j \cdot  (n+1-j)$ and the total weight is as promised.

    Now we prove the implication $(\Rightarrow)$.
    Consider some schedule of jobs in the instance $I$ with total weight of the non-tardy jobs at least $\sum_{i=1}^n t_i\cdot(n+1-i)$.
    We define two sequences $(a_i), (b_i)$ of length $s_n$.
    We set $a_i = (n+1-j)$ where $j \in [n]$ is the unique index satisfying $i \in (s_{j-1}, s_j]$.
    If there is a job $u$ being processed at time $i$ we set $b_i = w_u/p_u$, otherwise we set $b_i = 0$.
    We claim that for each $i \in [s_n]$ it holds that $b_i \le a_i$.
    Indeed, when $i \in (s_{j-1}, s_j]$ then any non-tardy job $u$ processed at time $i$ must have its due date at $s_j$ or later.
    Hence $u$ comes from the $j'$-th set, where $j' \ge j$, so $b_i = w_u/p_u = n+1-j' \le n+1-j = a_i$.
    Now observe that $\sum_{i=1}^t a_i =  \sum_{j=1}^n t_j\cdot(n+1-j)$ whereas $\sum_{i=1}^t b_i$ equals the total weight of the non-tardy jobs in the considered schedule.
    By assumption, we must have $\sum_{i=1}^t a_i = \sum_{i=1}^t b_i$ and so $a_i = b_i$ for all $i \in [t]$.
    Therefore, for each $j \in [n]$ and  $i \in (s_{j-1}, s_j]$ we have $b_i = (n+1-j)$ so the total processing time of the non-tardy jobs coming from the $j$-set equals $t_j$.
    Hence each instance $(S_j,t_j)$ of {\sc Subset Sum} admits a solution.

    It remains to check that the parameter $\log (d_{max} + w_{max})$ is polynomially bounded in~$k$.
    The maximum due date $d_{max}$ equals $s_n \le n \cdot 2^{\Oh(k^3)}$.
    The maximum weight $w_{max}$ is bounded by $n$ times the maximum size of an element in $\bigcup_{i=1}^n S_i$ which is $2^{\Oh(k^3)}$.
    By the initial assumption $n \le 2^k$ so  $\log (d_{max} + w_{max}) \le \poly(k)$.
    This concludes the proof.   
\end{proof}

\section{Permutation Subset Sum: Missing proofs}

\lemColoring*
\begin{proof}
    Let $k = \textsc{pw} + 1$.
    By \Cref{prelim:pathwidth:nice} we can assume that the given path decomposition of the graph $G$ is nice.
    We can turn it into a sequence $C$ of commands of the form $\intro(v), \forget(v), \edge(u,v)$ such that (i) each vertex $v$ is introduced once and forgotten once afterwards, and (ii) for each edge $uv \in E(G)$ the command $\edge(u,v)$ appears at some point when $u,v$ are present (i.e., after introduction and before being forgotten). 
    Next, we assign each vertex a label from $[k]$ in such a way that no two vertices of the same label share a bag.
    To this end, we scan the sequence $C$, store the current bag, and assign the colors greedily.
    Whenever we see a command $\intro(v)$ then the current bag must contain less than $k$ vertices so there is some label not being used in the current bag.
    We then assign this label to $v$ and continue.

    We will now assume that the arguments of the commands  $\intro, \forget, \edge$ refer to labels, e.g., $\intro(x)$ means that a vertex with label $x$ appears in the current bag and $\edge(x,y)$ means that the vertices labeled $x,y$ in the current bag are connected by an edge.

    Let $D \sub [3]^2$ be the set of all 6 ordered pairs different colors.
    For each label $x \in [k]$ and each color $c \in [3]$ we create a counter called $x_c$.
    The intended meaning if $x_c = 1$ is that the vertex labeled $x$ in the current bag is assigned color $c$.
    We also create special counters: one named $S$ 
    and 6 counters $Z_{c,d}$ for each $(c,d) \in D$.
    Hence the total number of counters is $\ell = 3k + 7$.

    We translate the sequence $C$ into sequences $\mathcal{V}$, $\fcal$ forming the instance of \vass.
    Instead of specifying $\fcal$ directly, we will indicate which vectors in $\mathcal{V}$ are {\em optional} and which are {\em required}.
    To concisely describe a vector in which, e.g., counter $y_1$ is being increased, counter $y_2$ is being decreased, and the remaining ones stay intact, we write
    $[y_1\uparrow,\, y_2\downarrow]$. 
    We scan the sequence $C$ and for each command we output a {\em block} of vectors, according to the following instructions.

    \begin{itemize}
        \item [$\intro(x)$:] We insert optional vectors $[x_1\uparrow,\, S\uparrow]$, $[x_2\uparrow,\, S\uparrow]$, $[x_3\uparrow,\, S\uparrow]$, followed by a required vector $[S\downarrow]$.
        
        Since the counter $S$ cannot exceed 1, at most one of the first three vectors can be used.
        In the end we are required to decrease $S$ so if we began with $S$ set to 0 then exactly one of $x_1, x_2, x_3$ must be set to 1.

        \item [$\forget(x)$:] We insert optional vectors $[x_1\downarrow,\, S\uparrow]$, $[x_2\downarrow,\, S\uparrow]$, $[x_3\downarrow,\, S\uparrow]$, followed by a required vector $[S\downarrow]$.

        Similarly as before, we can decrease at most one of the counters $x_1, x_2, x_3$ and we will check that exactly one this events must happen.
        
        \item [$\edge(x,y)$:] 
        We insert 6 optional vectors $[x_c\downarrow,\, y_d\downarrow,\, Z_{c,d}\uparrow,\, S\uparrow]$, one for each $(c,d) \in D$.
        This is followed by required vectors $[S\downarrow]$, $[S\uparrow]$.
        Then we insert 6 optional vectors $[x_c\uparrow,\, y_d\uparrow,\, Z_{c,d}\downarrow,\, S\downarrow]$, for each $(c,d) \in D$, this time followed by $[S\uparrow]$, $[S\downarrow]$.

        Suppose that initially $S$ is set to 0.
        The required vectors manipulating $S$ enforce that exactly one of the first 6 vectors and exactly one of the last 6 vectors are used.
        This verifies that the vertices labeled $x,y$ are colored with different colors. 
    \end{itemize}

\begin{claim}
If the graph is 3-colorable then there exists a subsequence of $\mathcal{V}$ whose indices are $\fcal$-restricted and which forms a 0/1-run.
\end{claim}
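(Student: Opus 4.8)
The plan is to convert a proper $3$-coloring of $G$ directly into the required subsequence. Making the chosen index set $\fcal$-restricted only forces us to include \emph{every} required vector, so the sole freedom — and the only thing to get right — is which optional vectors to pick so that each partial sum stays in $\{0,1\}^\ell$ and the total sum is $0$. Fix a proper coloring $\gamma\colon V(G)\to[3]$. Recall that the greedy labeling assigns each vertex a label in $[k]$ so that a bag never contains two vertices of the same label; hence, while scanning the command sequence $C$, there is at every point a well-defined partial function sending each label $x$ present in the current bag to the color $\gamma(x)\in[3]$ of its (unique) vertex.

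I would then build the subsequence of $\mathcal{V}$ block by block, always taking every required vector and choosing the optional vectors as follows. In the block for $\intro(x)$ take $[x_{\gamma(x)}\uparrow,\, S\uparrow]$, where $\gamma(x)$ is the color of the vertex just introduced. In the block for $\forget(x)$ take $[x_{\gamma(x)}\downarrow,\, S\uparrow]$. In the block for $\edge(x,y)$, write $c=\gamma(x)$ and $d=\gamma(y)$; since the coloring is proper and $x,y$ share a bag (so $x\ne y$), we have $(c,d)\in D$, hence both $[x_c\downarrow,\, y_d\downarrow,\, Z_{c,d}\uparrow,\, S\uparrow]$ and $[x_c\uparrow,\, y_d\uparrow,\, Z_{c,d}\downarrow,\, S\downarrow]$ occur in this block, and we take both. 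All remaining optional vectors are skipped.

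The heart of the argument is the following invariant, proved by induction over the blocks: immediately before each block the counter vector satisfies $S=0$, every $Z_{c,d}=0$, and $x_c=1$ precisely when label $x$ is present in the current bag with $\gamma(x)=c$ (all other $x_c=0$). For each of the three block types I would verify that, starting from a state satisfying the invariant, the chosen vectors keep every intermediate partial sum in $\{0,1\}^\ell$ and restore the invariant: in the $\intro(x)$ block the single chosen optional vector flips $x_{\gamma(x)}$ from $0$ to $1$ and $S$ from $0$ to $1$, and the required $[S\downarrow]$ resets $S$; the $\forget(x)$ block is symmetric; in the $\edge(x,y)$ block the two chosen vectors temporarily move $x_c,y_d$ from $1$ to $0$ and $Z_{c,d}$ from $0$ to $1$ and then back, while the $S$-counter cycles through $0,1,0,1,0$ as dictated by the four required vectors. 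Since a nice path decomposition ends with an empty bag, after the final block every counter is $0$, so the selected vectors sum to $0$; together with the partial-sum bound this shows they form a $0/1$-run, and their indices are $\fcal$-restricted by construction.

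The only delicate point is the bookkeeping inside the $\edge$ block: one must use $(c,d)\in D$ — which is exactly where properness of $\gamma$ enters — to know the needed optional vectors are available, and then track the five intermediate states to confirm no counter ever leaves $\{0,1\}$. Everything else is a routine check that the $S$-counter, touched by every optional vector, forces at most one optional vector per mini-group to be usable, and that our color-driven choice is always a legal such choice.
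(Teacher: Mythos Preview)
Your proposal is correct and follows essentially the same approach as the paper: fix a proper 3-coloring, select in each block the color-dictated optional vector(s), and verify the same invariant (special counters zero between blocks, $x_c=1$ iff label $x$ is present with color $c$) to ensure all partial sums stay in $\{0,1\}^\ell$ and the final sum is $0$. The only slip is cosmetic: in the $\edge$ block the $S$-counter passes through seven states $0,1,0,1,0,1,0$ (six selected vectors, not four), but this does not affect the argument.
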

\begin{claimproof}
    We will maintain the following invariant: between the blocks (1) all the special counters are set to 0, (2) for each vertex in the current bag with label $x$ and color $c$ the counter $x_c$ is set to 1.
    
    When a vertex with label $x$ and color $c$ is introduced we choose the optional vector $[x_c\uparrow,\, S\uparrow]$ and subsequently $S$ gets decreased.
    When a vertex with label $x$ and color $c$ is forgotten we know that $x_c$ is currently set to 1 so we can choose the optional vector $[x_c\downarrow,\, S\uparrow]$ and then again $S$ gets decreased.
    When the command $\edge(x,y)$ is processed we know that there is a pair $(c,d) \in D$ so the counters $x_c, y_d$ are set to 1.
    Therefore we can execute the sequence $[x_c\downarrow,\, y_d\downarrow,\, Z_{c,d}\uparrow,\, S\uparrow]$, $[S\downarrow]$, $[S\uparrow]$, $[x_c\uparrow,\, y_d\uparrow,\, Z_{c,d}\downarrow,\, S\downarrow]$, $[S\uparrow]$, $[S\downarrow]$, maintaining the invariant in the end.
    Finally, when a vertex labeled $x$ with color $c$ is being forgotten then the counter $x_c$ is decreased so in the end all the counters are 0.
\end{claimproof}

\begin{claim}
If there exists a subsequence of $\mathcal{V}$ whose indices are $\fcal$-restricted and which forms a 0/1-run, then the graph is 3-colorable.
\end{claim}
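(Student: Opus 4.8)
The plan is to reverse-engineer a proper $3$-colouring out of the run, exploiting that the required vectors pin down the value of the auxiliary counter $S$ --- and, through $S$, the behaviour of all other counters --- at every block boundary. The first step I would take is to establish the \emph{boundary invariant}: once the block corresponding to each command of $C$ has been processed, every special counter ($S$ and all $Z_{c,d}$) equals $0$. This follows by induction on the blocks. At the start of a block $S=0$ by the inductive hypothesis; since a $0/1$-run keeps every partial sum in $\{0,1\}^\ell$, at most one of the $S$-incrementing optional vectors of the block may be selected, while the required $S$-decrementing vector must be selected ($\fcal$-restrictedness) and must not underflow --- hence \emph{exactly} one such optional vector is selected and $S$ is back to $0$ after the block. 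For an $\edge(x,y)$ block the same counting, applied in turn to the four required $S$-vectors, forces exactly one of the first six optional vectors and exactly one of the last six optional vectors to be selected; moreover the $Z$-counter raised by the former must coincide with the one lowered by the latter (any other choice would push a $Z$-coordinate outside $\{0,1\}$), so all $Z_{c,d}$ return to $0$ and, as a by-product, the two counters $x_c,y_d$ touched in the block are decremented and then re-incremented, hence restored.

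The second step is to track the vertex--colour counters. An $\intro(x)$ block, analysed as above, forces exactly one of $x_1,x_2,x_3$ to be incremented: start from $S=0$ and from $x_1=x_2=x_3=0$, the latter because the label $x$ is inactive just before $v$ is introduced (this is exactly the property guaranteed by the greedy labelling) and a $\forget(x)$ block, again by the $S$-counting, lowers the unique raised $x_c$ and thus leaves all $x_\cdot=0$. I would therefore define $c(v)\in[3]$, for a vertex $v$ with label $x(v)$, to be the index of the counter incremented in the block for $\intro(x(v))$; this is well defined since each vertex is introduced once. Between $\intro(x(v))$ and $\forget(x(v))$ the only blocks that touch the counters $x(v)_\cdot$ are $\edge$ blocks incident to the label $x(v)$ (such commands occur only while the label is active), and each of them restores the touched counter; hence throughout the life of $v$ we have $x(v)_{c(v)}=1$ and $x(v)_c=0$ for $c\ne c(v)$.

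The final step is to check that $c$ is proper. Take $uv\in E(G)$ and the command $\edge(x,y)$ with $x=x(u)$, $y=x(v)$ that the construction issues while both $u$ and $v$ are in the current bag. By the boundary invariant, at the start of that block all special counters are $0$, only $x_{c(u)}=1$ among the counters with label $x$, and only $y_{c(v)}=1$ among those with label $y$. The unique selected vector among the first six optional vectors is some $[x_c\downarrow,\,y_d\downarrow,\,Z_{c,d}\uparrow,\,S\uparrow]$ with $(c,d)\in D$, and keeping all partial sums in $\{0,1\}^\ell$ forces $x_c=y_d=1$ beforehand, i.e.\ $c=c(u)$ and $d=c(v)$. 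Since $(c,d)\in D$ means $c\ne d$, we conclude $c(u)\ne c(v)$, so $c$ is a proper $3$-colouring of $G$, which proves the claim.

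I expect the only genuinely delicate point to be the bookkeeping inside the boundary invariant for $\edge$ blocks --- in particular the argument that the $Z$-counter raised in the first half must be the very one lowered in the second half, which is what prevents a spurious colour change from leaking across a block. Everything else is a routine propagation of the $\{0,1\}$-bound through the rigid block structure, combined with the fact that $\fcal$-restrictedness pins the parity of choices at each required $S$-vector.
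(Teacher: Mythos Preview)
Your proposal is correct and follows essentially the same approach as the paper: you establish the same two invariants (special counters zero between blocks; exactly one $x_c$ active while a vertex with label $x$ is present), extract the colouring from the $\intro$ blocks, and verify properness via the $\edge$ blocks using the fact that the selected optional vector has $(c,d)\in D$. The only cosmetic difference is that the paper states the two invariants upfront and proves them separately, whereas you weave the second invariant into the narrative; the content and the delicate $Z_{c,d}$-matching argument are identical.
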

\begin{claimproof}
We will prove two invariants about the states of counters between the blocks: (1) all the special counters are set to 0, (2) for each $x \in [k]$ if the vertex labeled $x$ is present in the current bag then there is exactly one $c \in [3]$ so that $x_c$ is set to 1, and if there is no vertex labeled $x$ in the current bag then all $x_1,x_2,x_3$ are set to 0.

We first prove (1).
Observe that each block ends with required $[S\downarrow]$ so it remains to analyze the counters $Z_{c,d}$.
They can be activated only in a block corresponding to command $\edge(x,y)$.
The required actions on $S$ enforce that that exactly one of the first 6 optional vectors and exactly one of the last 6 optional vectors are used.
Since each vector in second group decreases some counter $Z_{c,d}$ and the first group can increase only one, these two vectors must process the same pair $(c,d) \in D$.
Hence in the end all the special counters are again deactivated.

We move on to invariant (2).
First consider a command $\intro(x)$. Due to the initial preprocessing, no other vertex labeled $x$ can be present in the current bag so the  counters  $x_1,x_2,x_3$ are inactive. 
We must activate exactly one of the counters $x_1,x_2,x_3$ so the invariant is preserved. An analogous argument applies to $\forget(x)$.
Now consider the block corresponding to command $\edge(x,y)$.
By the argument from the the analysis of invariant (1), the two optional vectors used in this block process the same pair $(c,d) \in D$.
So the state of  all the counters after processing this block is the same as directly before it. 

We define the 3-coloring of the graph as follows.
When a vertex $v$ with label $x$ is being introduced, we know that one counter $x_c$ for some $c \in [3]$ gets increased.
We assign the color $c$ to $v$.
To check that this is a correct coloring, consider an edge $uv$.
Let $x,y$ be the labels of $u,v$. 
There is a command $\edge(x,y)$ in $C$ being executed when both $u,v$ are present in the current bag.
By the analysis above, we must take one of the vectors $[x_c\downarrow,\, y_d\downarrow,\, Z_{c,d}\uparrow,\, S\uparrow]$ at the beginning of the corresponding block, which implies that the counters $x_c$ and $y_d$ were active at the beginning of the block.
But $(c,d) \in D$ so $c\ne d$ and we infer that $u,v$ must have been assigned different colors.
\end{claimproof}
This concludes the correctness proof of the reduction.  
\end{proof}

\lemPermFinal*
\begin{proof}
    Consider an instance of \vass given by the sequences $\mathcal{V} = (v_1, \dots, v_n)$, $v_i \in \{-1,0,1\}^\ell$, and $\fcal = (f_1, \dots, f_n)$, $f_i \in \{O,R\}$.
    The problem can be easily solved in time $\Oh(2^\ell\cdot n)$ so we can assume from now on that $\log n \le \ell$.
    Let $r = \Oh(\log^3 n)$, $\pi \in S_r$, and $\widehat\Gamma \colon \{-1,0,1\} \to S_r$ be as in \Cref{lem:perm:run-group-perm}.

    For each $i \in [n]$ we construct an $(\ell+1)$-tuple of permutations $(p^1_i, p^2_i, \dots, p^{\ell+1}_i)$ from $S_r$ as follows.
    For each $j \in [\ell]$ we set $p^j_i = \widehat\Gamma(v_i^j)$ where $v_i = (v_i^1,\dots, v_i^\ell)$.
    We set $p^{\ell+1}_i = \pi$ if $f_i = R$ (i.e., the $i$-th vector is required) and otherwise we set $p^{\ell+1}_i$ to the identify permutation.
    Let $f_C$ denote the number of indices with $f_i = R$.
    Next, let $I = (i_1 < \dots < i_m)$ denote a subsequence of $[n]$.
    We claim that the following conditions are equivalent.
    \begin{enumerate}
        \item The subsequence $I$ is $\fcal$-restricted.
        \item The subsequence of $p^{\ell+1}_1,\dots, p^{\ell+1}_n$ given by indices $I$ yields product $\pi^{f_C}$.
    \end{enumerate}
    To see this, observe that $\pi$ has order larger than $n$ (as guaranteed by \Cref{lem:perm:run-group-perm}) so the product  $\pi^{f_C}$ is obtained exactly when $I$ contains all $f_C$ indices for which $f_i = R$.

    We move on to the next equivalence.
    \begin{enumerate}
        \item The subsequence of $v_1, \dots, v_n$ given by indices $I$ forms a 0/1-run.
        \item For each $j \in [\ell]$ the subsequence of $p^j_1,\dots, p^j_n$ given by indices $I$ yields product $\pi^{n_i}$ for some $n_i \in [n]$.
    \end{enumerate}
    A sequence of $\ell$-dimensional vectors forms a 0/1-run if and only if each single-dimensional sequence (corresponding to one of $\ell$ coordinates) forms a 0/1-run.
    Fix $j \in [\ell]$.
    By \Cref{lem:perm:run-group-perm} the subsequence of $v^j_1, \dots, v^j_n$ given by indices $I$ forms a 0/1-run if only only if multiplying their images under the mapping $\widehat\Gamma$ yields a product of the form $\pi^{n_i}$ for some $n_i \in [n]$.
    This justifies the second equivalence.

    To create an instance of {\sc Group-$S_k$ Subset Sum[$k$]} we take $k = (\ell +1)\cdot r$ so we can simulate multiplication in $S_r^{\ell+1}$ by dividing $[k]$ into $(\ell+1)$ subsets of size $r$.
    We have $k = \Oh(\ell^4)$ so indeed the new parameter is polynomial in $\ell$.
    
    For each $i \in [n]$ we transform the tuple $(p^1_i, p^2_i, \dots, p^{\ell+1}_i)$ into $\widehat{p}_i \in S_k$ using the aforementioned natural homomorphism.
    We nondeterministically guess the numbers $n_1,\dots, n_\ell \in [n]$ 
    and set the target permutation $p_T$ as the image of $(\pi^{n_1}, \pi^{n_2}, \dots, \pi^{n_\ell}, \pi^{f_C})$ under the natural homomorphism.
    This requires guessing $\ell \cdot \log n \le \ell^2$ bits.   
    By the equivalences above, the instance of \vass is solvable if and only there is a tuple $(n_1,\dots, n_\ell)$ for which the constructed instance $((\widehat{p}_1,\dots,\widehat{p}_n), p_T)$ of {\sc Group-$S_k$ Subset Sum[$k$]} is solvable.
    This concludes the proof.
\end{proof}

\section{Remaining proofs}
\label{sec:remaining}

First, we show that {\sc Group-$G$ Subset Sum} becomes easy for the group family $\zz_k^k$.
To this end, we need the following result from the 0-sum theory (see~\cite{zero-sum-survey} for a survey).

\begin{theorem}[\cite{MESHULAM1990197}]
\label{thm:remain:0sum}
    Let $G$ be a finite commutative group, $m$ be the maximal order of an element in $G$, and $s$ satisfy $s>m(1 + \log(|G|\,/\,m))$.
    Then any sequence $a_1 \dots, a_s$ of elements in $G$ has a non-empty subsequence that sums to zero. 
\end{theorem}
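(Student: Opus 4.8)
The statement is, by contraposition, equivalent to: every \emph{zero-sum-free} sequence $S=(a_1,\dots,a_s)$ over $G$ satisfies $s\le m\,(1+\log(|G|/m))$. Since for a finite abelian group $m$ is the exponent, this is Meshulam's bound on the \emph{Davenport constant} $D(G)$ — the least $d$ such that every length-$d$ sequence over $G$ has a nonempty zero-sum subsequence — so the plan is to prove $D(G)\le m\,(1+\log(|G|/m))$. Two elementary facts frame the task. First, $D(\zz_d)=d$, by the pigeonhole principle on the $d+1$ partial sums $0,a_1,a_1+a_2,\dots$; this already matches the claimed bound when $G$ is cyclic (where $|G|=m$), so it serves as the base case. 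Second, the same pigeonhole gives $D(G)\le|G|$, which is the trivial bound one must sharpen to a logarithmic one.

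The engine I would use is that the set of subset sums of a zero-sum-free sequence grows \emph{geometrically} in the length. Write $\sigma(I)=\sum_{i\in I}a_i$ and $A_k=\{\sigma(I):I\subseteq[k]\}$, so $A_0=\{0\}$ and $A_k=A_{k-1}+\{0,a_k\}$, and note two consequences of zero-sum-freeness: (i) for any window $a_{k+1},\dots,a_{k+\ell}$ with $\ell\le m$, the $\ell+1$ partial sums $0,a_{k+1},a_{k+1}+a_{k+2},\dots$ are pairwise distinct, since a coincidence would be a nonempty zero-sum subsequence; and (ii) $-\sigma(I)\notin A_k$ whenever $\emptyset\ne I\subseteq\{k+1,\dots,s\}$, for $\sigma(I)+\sigma(J)=0$ with $J\subseteq[k]$ would again produce one. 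The plan is to show, via Kneser's addition theorem, that $A$ at least doubles over every window of $m$ steps: given the window, either its own subset-sum set $B=\{0,a_{k+1}\}+\dots+\{0,a_{k+m}\}$ is already at least as large as $A_k$ — whence $|A_{k+m}|=|A_k+B|\ge|A_k|+|B|-1\ge 2|A_k|$ when the relevant stabilizer is trivial — or $B$ is small, in which case Kneser forces the window to be ``degenerate'' (governed by a proper subgroup), and (i)--(ii) together with the order bound $m$ rule this out for windows of length $m$. Granting a ``doubles every $m$ steps'' statement of this shape, one gets $|A_s|\ge m\cdot 2^{\lfloor s/m\rfloor-1}$, and since $A_s\subseteq G$ this yields $s\le m\,(1+\log(|G|/m))$ after adjusting lower-order terms.

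I expect the doubling step — turning the trivial $D(G)\le|G|$ into the logarithmic bound — to be the main obstacle, and it is precisely Meshulam's contribution; a clean writeup would likely follow his (or van Emde Boas and Kruyswijk's) original induction rather than my sketch. The natural alternative, inducting on $|G|$ through the invariant-factor decomposition $G\cong\zz_{d_1}\times\cdots\times\zz_{d_r}$ by peeling off a subgroup $H$ with $G/H\cong\zz_m$ (so $|H|=|G|/m$), projecting $S$ to $\bar S$ over $\zz_m$, extracting zero-sum subsequences of $\bar S$ and lifting (their $G$-sums lie in $H$ and form a zero-sum-free sequence there, to which induction applies), hits the same wall: the naive bookkeeping — minimal zero-sum subsequences over $\zz_m$ have length at most $m$, and at most $D(H)-1$ of them can be extracted — yields only the \emph{multiplicative}, still-trivial bound $D(G)\le m\cdot D(H)$; obtaining the additive recursion $D(G)\le m+D(H)$, which does reproduce Meshulam's formula, requires a much more economical extraction, and that economy is the crux. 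Finally, the exact constant and the base of the logarithm are a minor point: Meshulam's original inequality uses the natural logarithm and is strictly stronger than the binary-log form stated here, so it suffices to reach $D(G)\le m\,(1+\ln(|G|/m))$.
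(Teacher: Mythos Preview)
The paper does not prove this theorem: it is quoted verbatim as a black-box result from Meshulam's paper~\cite{MESHULAM1990197} and then applied once, in the proof that {\sc Group-$\zz_k^k$ Subset Sum[$k$]} admits a polynomial certificate. There is therefore no ``paper's own proof'' to compare your proposal against.

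As for the content of your sketch: you correctly identify the statement as the contrapositive of Meshulam's bound on the Davenport constant, and the overall shape --- subset-sum sets grow, with Kneser controlling the failure of growth --- is indeed the standard route. But your writeup is candid that the crucial step (the doubling-per-$m$-window claim) is left as a gap, and your alternative induction-on-invariant-factors discussion likewise arrives only at the trivial multiplicative recursion. So what you have is a correct framing and a plausible plan, not a proof; since the paper never asks for one, that is fine here, but if you intended this as a self-contained argument the doubling lemma would need to be filled in from Meshulam's (or van Emde Boas--Kruyswijk's) original.
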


\thmZkk*
\begin{proof}
    We apply \Cref{thm:remain:0sum} to the group $G = \zz_k^k$.
    The maximal order $m$ in $G$ is $k$ and so we can set $s = k^2\log k$.
    Consider an instance of {\sc Group-$\zz_k^k$ Subset Sum[$k$]} and a solution $a_1 + \dots + a_\ell = t$ that minimizes $\ell$.
    If $\ell \ge  s$ then there exists a non-empty subsequence of $a_1, \dots, a_\ell$ that sums to 0.
    Removing this subsequence from  $a_1, \dots, a_\ell$ does not modify the sum so we obtain a shorter solution, which yields a contradiction.
    Hence $\ell < s$ and so we can guess the solution using $\ell\cdot (k\log k) = k^3\log^2 k$ bits.
\end{proof}

Next, we prove that \sslog becomes easy when we drop the restriction that each element can be used by a solution only once.
Recall that in {\sc Unbounded Subset Sum} we ask for a multiset of integers from $\{p_1, p_2,\dots,p_n\}$ that sums up to $t$.
In fact, in this variant there always exists a solution with small support.
This is a special case of the integer version of Carathéodory's theorem~\cite{EisenbrandS06}.

\begin{lemma}\label{lem:remain:unbounded}
    {\sc Unbounded Subset Sum[$\log t$]} admits a polynomial certificate.
\end{lemma}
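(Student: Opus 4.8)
The plan is to show that if an instance of {\sc Unbounded Subset Sum} with target $t$ is solvable, then it admits a solution using at most $\Oh(\log t)$ distinct values, each with multiplicity that fits in $\Oh(\log t)$ bits; listing such a sparse solution then requires only $\poly(\log t)$ bits, which gives the certificate. First I would take an arbitrary solution, i.e.\ nonnegative integer multiplicities $c_1,\dots,c_n$ with $\sum_i c_i p_i = t$, and among all solutions pick one minimizing the support size $s = |\{i : c_i > 0\}|$. The goal is to argue $s \le \Oh(\log t)$. Suppose $s$ is large and consider the multiset of items $p_{i_1},\dots,p_{i_\ell}$ taken with multiplicity (so $\ell = \sum_i c_i$); if $\ell$ is large we can find a nonempty sub-multiset summing to $0$ modulo nothing --- wait, we need an exact zero-sum, which cannot happen over the positive integers. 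So instead I would work with prefix sums: the partial sums $0 = q_0 < q_1 < \dots < q_\ell = t$ of the chosen items in any order. If two prefix sums coincide modulo some cleverly chosen modulus we could cancel, but over $\zz$ prefix sums are strictly increasing, so the right move is a different pigeonhole.

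The correct route is the integer Carathéodory argument, exactly as the paper hints: consider the vectors $(p_i, 1) \in \zz^2$ for items in the support together with the equation $\sum_i c_i (p_i,1) = (t, \ell)$ --- actually only one equation $\sum_i c_i p_i = t$ is binding (the number of items used is free in {\sc Unbounded Subset Sum}), so we are solving a single linear Diophantine equation with nonnegative coefficients. For a one-row system, any vertex of the polytope $\{c \ge 0 : \sum_i c_i p_i = t\}$ has at most one nonzero coordinate, but integer solutions need not be vertices. Instead I would use the following exchange argument directly. Order the support so that $p_{i_1} \le p_{i_2} \le \dots \le p_{i_s}$, and consider the partial sums $r_j = \sum_{a \le j} c_{i_a} p_{i_a}$ for $j = 0,\dots,s$; these are strictly increasing in $[0,t]$. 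Among $r_0, \dots, r_s$ consider their residues modulo the smallest used value $p_{i_1}$: if $s > p_{i_1}$ two residues collide, say $r_j \equiv r_{j'} \pmod{p_{i_1}}$ with $j<j'$, and then $r_{j'} - r_j$ is a positive multiple of $p_{i_1}$, so we can decrease $c_{i_{j+1}},\dots,c_{i_{j'}}$ to zero and add $\frac{r_{j'}-r_j}{p_{i_1}}$ copies of $p_{i_1}$, strictly reducing the support --- contradiction. Hence $s \le p_{i_1} \le t$, which is far too weak; the bound I actually want ($s = \Oh(\log t)$) should instead come from a halving/binary argument: replace pairs of equal items by their doubled value where possible, reducing the total item count $\ell$ geometrically, until $\ell \le \poly(\log t)$.

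Concretely, I expect the cleanest argument to be: take a solution minimizing the \emph{number of items with multiplicity}, $\ell = \sum_i c_i$. If $c_i \ge 2$ for some $i$ with $2p_i$ also present as some $p_{i'}$ --- this may not hold, so doubling needs the items to be closed under relevant operations, which they are not. Therefore the honest approach is the cited integer Carathéodory theorem of Eisenbrand and Shmonin~\cite{EisenbrandS06}: every integer point in the cone generated by $p_1,\dots,p_n$ (all positive, so this is just $\{0,1,2,\dots\} \cdot$ everything) can be written using at most $2\log_2(\text{something}) \cdot (\text{rank}) = \Oh(\log(\max_i p_i)) = \Oh(\log t)$ of the generators, each with an exponentially-bounded coefficient. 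I would simply invoke that theorem with rank $1$, conclude there is a solution with support $\Oh(\log t)$ and multiplicities at most $t$ (hence $\Oh(\log t)$ bits each), and certify it by listing the $\Oh(\log t)$ indices (each $\Oh(\log n) = \Oh(\log t)$ bits after the usual $\log n < p(k)$ reduction) together with their multiplicities, for a total of $\Oh(\log^2 t)$ bits. The verifier recomputes the weighted sum and checks it equals $t$. The main obstacle is getting the support bound right: the naive exchange arguments only give a bound polynomial in $\max_i p_i$, and one genuinely needs the logarithmic-support guarantee from the integer Carathéodory machinery (or an explicit binary-encoding argument mimicking its proof) to land at $\poly(\log t)$ rather than $\poly(t)$.
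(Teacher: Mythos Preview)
Your proposal is ultimately correct: invoking the integer Carath\'eodory theorem of Eisenbrand and Shmonin does yield a solution with support $\Oh(\log t)$ and multiplicities at most $t$, and the paper itself remarks on this connection. However, the paper's own proof is more elementary and entirely self-contained, and your various abandoned detours circle around but miss the simple idea.

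The paper's argument: take a solution minimizing the number $\ell$ of \emph{distinct} values used (exactly the setup you start with), and let $S_D$ be this set of distinct values. Since each value in $S_D$ appears with multiplicity at least one in the solution and the solution sums to $t$, the sum of the elements of $S_D$ (each counted once) is at most $t$; hence every subset of $S_D$ has sum in $[0,t]$. If $\ell > \log t$ then $2^\ell > t$, so by pigeonhole two distinct subsets $S_1, S_2 \subseteq S_D$ have equal sum; removing their intersection leaves them disjoint with equal sum. Now take the element of $S_1 \cup S_2$ with smallest multiplicity $m$ in the solution, say it lies in $S_1$, and modify the solution by subtracting $m$ from every multiplicity in $S_1$ and adding $m$ to every multiplicity in $S_2$. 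The total is unchanged, no multiplicity becomes negative, and at least one multiplicity drops to zero --- contradicting minimality of $\ell$. Hence $\ell \le \log t$, and the solution can be encoded with $\Oh(\log^2 t)$ bits.

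Your modular-residue and doubling attempts are indeed dead ends here, but the pigeonhole you needed was the direct one on subset sums of the support, not on residues or prefix sums. Citing Eisenbrand--Shmonin is perfectly valid (and is what one would do in higher rank), but for rank one it obscures how little is really required.
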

\begin{proof}
    A solution can be represented as a multiset of input integers.
    Consider a solution $S$ that minimizes the number $\ell$ of distinct integers.
    Suppose for the sake of contradiction that $\ell > \log t$ and let $S_D$ denote the set of distinct elements in $S$.
    There are $2^\ell > t$ subsets of $S_D$ and each of them has a sum in the interval $[0,t]$.
    Consequently, there are two subsets $S_1, S_2 \sub S_D$ that give the same sum.
    The same holds for $S_1 \sm (S_1 \cap S_2)$ and $S_2 \sm (S_1 \cap S_2)$ so we can assume that $S_1,S_2$ are disjoint.
    Let $x \in S_1 \cup S_2$ be an element with the least multiplicity in $S$; let $m$ denote this multiplicity and assume w.l.o.g. that $x \in S_1$.
    We construct a new solution $S'$ from $S$: we decrease the multiplicity of each element from $S_1$ by $m$ and increase multiplicity of each element from $S_2$ by $m$.
    By the choice of $m$, the multiplicity of $x$ drops to 0 and we never decrease the multiplicity of any element below 0.
    Therefore, $S'$ is also a valid solution with a lower number of distinct elements; a contradiction.

    We have shown that $\ell \le \log t$.
    Moreover, the multiplicity of each element in $S$ cannot exceed~$t$.
    We can guess this solution by guessing the set $S_D$ and the numbers from $[t]$ representing their multiplicities, using $\log^2 t$ nondeterministic bits.
\end{proof}

Finally, we justify why we should not expect a PPT from {\sc 3-Coloring[pw]} to \andsat.
In fact, our argument works already for parameterization by treedepth.
We consider the standard {\sc CNF-SAT} problem with unbounded arity parameterized by the number of variables
~$n$.
It is known that {\sc CNF-SAT[$n$]}  is MK[2]-hard~\cite{HermelinKSWW15} and as such it is considered unlikely to admit a polynomial Turing kernelization.
On the other hand, \andsat admits a trivial polynomial Turing kernelization
and so it suffices to show a PPT from {\sc CNF-SAT[$n$]} to {\sc 3-Coloring[pw]}.
An analogous hardness has been observed for {\sc Indepedent Set} parameterized by treewidth~\cite{HolsKP20}.

\begin{lemma}\label{lem:remain:3color-mk2}
    {\sc CNF-SAT[$n$]} $\le_{\textsc{ppt}}$ {\sc 3-Coloring[pw]}. Consequently, {\sc 3-Coloring[pw]} is MK[2]-hard.
\end{lemma}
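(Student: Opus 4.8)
The plan is to run the textbook reduction from \textsc{SAT} to \textsc{3-Coloring}, but arranged so that the pathwidth of the produced graph depends only on the number of variables $n$, and not on the number of clauses $m$ nor on the clause arities. First I would preprocess the formula: since there are at most $3^n$ distinct clauses over $n$ variables (each variable occurs positively, negatively, or not at all), we may discard repeated clauses and assume $m \le 3^n$, hence $\log m = \Oh(n)$; an empty clause (if one survives) makes the formula unsatisfiable, and we then output a fixed non-$3$-colorable graph such as $K_4$.

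Next I would build the graph $G$ using the usual ``palette'' mechanism. Introduce a triangle on three vertices $T,F,B$; in any proper $3$-colouring its colours are pairwise distinct, so we may speak of the colours \emph{true}, \emph{false}, \emph{base}. For each variable $x_i$ add vertices $v_i,\bar v_i$ and edges $v_i\bar v_i$, $v_iB$, $\bar v_iB$, which forces $v_i,\bar v_i$ to take the colours \emph{true} and \emph{false} in one of the two orders, encoding a truth assignment. Call these $2n+3$ vertices the \emph{hub}. For a clause $C_j=\ell_{j,1}\vee\dots\vee\ell_{j,k_j}$, identify each literal with the corresponding hub vertex $L_{j,s}\in\{v_i,\bar v_i\}$ and add \emph{private} accumulator vertices $a_{j,1},\dots,a_{j,k_j-1}$ wired by the standard constant-size $3$-colouring OR-gadget so that $a_{j,1}$ realises $L_{j,1}\vee L_{j,2}$ and $a_{j,s}$ realises $a_{j,s-1}\vee L_{j,s+1}$; finally add edges from $a_{j,k_j-1}$ to $F$ and $B$ so that the last accumulator must be coloured \emph{true}. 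Unit clauses are handled by a single edge to $F$. Correctness---that $G$ is $3$-colourable iff the formula is satisfiable---is exactly the standard argument and I would only sketch it.

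The crux is bounding the width. Each OR-gadget touches only the hub, the previous accumulator $a_{j,s-1}$, the new accumulator $a_{j,s}$, and $\Oh(1)$ of its own private vertices. Hence, after deleting the $2n+3$ hub vertices, $G$ splits into one connected piece per clause, and each such piece is a path $a_{j,1}-a_{j,2}-\dots-a_{j,k_j-1}$ decorated by $\Oh(1)$ extra vertices between consecutive accumulators; such a ``caterpillar-like'' graph has pathwidth $\Oh(1)$. Therefore $\mathrm{pw}(G)\le(2n+3)+\Oh(1)=\Oh(n)$, and I would exhibit a path decomposition of this width explicitly by sweeping clause by clause, each clause contributing only a constant-size sliding window over its accumulators on top of a permanent bag containing the hub; this runs in polynomial time, so the construction is a PPT. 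The same deletion argument, now combined with the bound $\log m=\Oh(n)$ from the preprocessing and the fact that a path on $N$ vertices has treedepth $\Oh(\log N)$, gives $\mathrm{td}(G)\le(2n+3)+\Oh(\log m)=\Oh(n)$, yielding the statement for the treedepth parameterisation as well (and $\mathrm{pw}(G)\le\mathrm{td}(G)$). Finally, since \textsc{CNF-SAT[$n$]} is MK[2]-hard~\cite{HermelinKSWW15} and MK[2]-hardness is inherited along $\le_{\textsc{ppt}}$, the problem {\sc 3-Coloring[pw]} is MK[2]-hard.

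The main obstacle I expect is precisely this pathwidth (and treedepth) bound: one must phrase the OR-chain so that a clause of arity $k$ never forces $k$ of its vertices into a common bag---i.e.\ so that the private part of each clause gadget has bounded bandwidth---and then make the non-obvious observation that although the hub has size $\Theta(n)$, it is exactly the set whose removal decouples the clause gadgets into trivially simple components. The correctness of the $3$-colouring gadgets, the explicit construction of the decomposition, and the transfer of MK[2]-hardness are routine.
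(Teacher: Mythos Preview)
Your proposal is correct and takes essentially the same approach as the paper: the textbook CNF-SAT to \textsc{3-Coloring} reduction via chained OR-gadgets, with the pathwidth bound obtained by placing the $2n+\Oh(1)$ palette and literal vertices in every bag. Your decomposition is slightly tighter than the paper's (you slide a constant-size window along each clause's accumulator chain, whereas the paper simply puts an entire $\Oh(n)$-vertex clause gadget into a single bag), and you additionally justify the treedepth bound that the paper asserts in the surrounding text but does not argue within the proof itself.
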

\begin{proof}
    We adapt the known NP-hardness proof of {\sc 3-Coloring} and we refer to the 3 colors as T (true), F (false), B (blocked).
    Let $g_B, g_F$ be two special vertices, connected by an edge, and let B, F refer to the colors used by $g_B, g_F$ respectively.
    
    We will utilize the 2-OR-gadget that, for given vertices $u,v$, uses a fresh vertex $g_{uv}$ (the output), so that (i) $u,v$ must be colored with T or F,
    (ii) the color of $g_{uv}$ must be F if both $u,v$ have color F,
    (iii) the vertex $g_{uv}$ can be colored with T if  one of $u,v$ has color T.
    First, connect both $u,v$ to $g_B$ to ensure (i).
    Next, create auxiliary vertices $u',v'$ and edges $uu'$, $vv'$, $u'v'$.
    The vertex $g_{uv}$ is connected to $u',v',g_B$.
    Suppose that both $u,v$ have color F.
    Then the colors used by $u',v'$ must be $\{T,B\}$ and the only color left for $g_{uv}$ is F.
    Next, if one of $u,v$ has color T then  $u',v'$ can be colored with $\{F,B\}$ and $g_{uv}$ can use T.

    We can use the output of the 2-OR-gadget as an input of the next gadget and so for each $d \in \nn$ we can construct a $d$-OR-gadget that takes $d$ vertices, uses $\Oh(d)$ auxiliary vertices, and its output implements the OR-function of the input colors.

    We are now in position to give the reduction from {\sc CNF-SAT[$n$]}.
    For each variable $x_i$ we create vertices $x_i^Y$, $x_i^N$, corresponding to literals $x_i, \neg x_i$, and connect them to the special vertex $g_B$.
    For each clause $\phi$ with arity $d$ we create a $d$-OR-gadget over the vertices corresponding to its literals, and connect its output to the special vertex $g_F$.
    First, we analyze the pathwidth of such a graph $G$ by constructing a path decomposition.
    We put all the literal and special vertices ($2n+2$ in total) in all the bags.
    Then each bag corresponds to some clause and contains the respective OR-gadget on $\Oh(n)$ vertices.
    Consequently, we obtain a path decomposition of $G$ of width~$\Oh(n)$.

    Suppose now that the given instance from {\sc CNF-SAT[$n$]} admits a satisfying assignment.
    If the variable $x_i$ is set to True, we color $x_i^Y$ with T and $x_i^N$ with F.
    Otherwise we use the opposite coloring.
    By the property (iii) of the OR-gadget, there is a coloring that assigns T to each output of the gadget and so we obtain a proper 3-coloring.

    In the other direction, consider a proper 3-coloring of $G$.
    Let B,F be the colors used by $g_B,g_F$.
    Each pair $(x_i^Y,x_i^N)$ is colored as T/F or F/T; we treat $x_i$ as set to True when $x_i^Y$ uses color T.
    The output of each OR-gadget must utilize the color unused by $g_B,g_F$, that is, T.
    By the property (ii) of the OR-gadget, one of the inputs also must be colored with T.
    Hence the described truth-assignment satisfies every clause. 
\end{proof}

\end{document}